\documentclass[reprint,superscriptaddress,nofootinbib,aps,prl,floatfix]{revtex4-2}

\usepackage[utf8]{inputenc}
\usepackage{CJKutf8}
\usepackage{amsmath}
\usepackage{amssymb}
\usepackage{amsthm}
\usepackage{mathrsfs}
\usepackage{bbm}
\usepackage{bm}
\usepackage{tikz}
\usepackage{amsmath,amssymb}
\usepackage[percent]{overpic}
\usepackage[
    colorlinks=true,
    linkcolor=blue,
    citecolor=red,
    urlcolor=blue,
    breaklinks=true,
    linktoc=page,
    bookmarksdepth=2
]{hyperref}

\usetikzlibrary{arrows.meta,calc,patterns}

\definecolor{surfaceblue}{HTML}{E4EFF6}
\definecolor{meshblue}{HTML}{86A9C2}
\definecolor{boundaryorange}{HTML}{E8893A}
\definecolor{alphablue}{HTML}{3E6D91}
\definecolor{betared}{HTML}{B45C5C}
\definecolor{phaseviolet}{HTML}{A991D4}
\definecolor{axisgray}{HTML}{4F5962}

\usepackage{quantikz}
\usepackage{xcolor}

\usepackage[capitalize]{cleveref}
\usepackage[ruled,vlined]{algorithm2e}

\DeclareMathOperator{\Tr}{Tr}

\renewcommand{\d}{\ensuremath{\mathrm{d}}}

\newcommand{\m}{\textemdash}
\renewcommand {\braket} [2] {\ensuremath{\left \langle #1 | #2 \right \rangle}}
\renewcommand{\proj}[1]{| #1 \rangle\!\langle #1 |}

\newcommand{\Var}{\mathrm{Var}}
\newcommand{\Bias}{\mathrm{Bias}}
\newcommand{\MSE}{\mathrm{MSE}}

\def \iu {\mathrm{i}}
\def \bra#1{\langle#1|}
\def \ket#1{|#1\rangle}
\def \ketbra#1#2{|#1\rangle\! \langle #2|}
\def \braket#1#2{\langle #1|#2\rangle}

\def \id {\mathbbm{I}}

\def \poly {\mathrm{poly}}
\def \polylog {\mathrm{polylog}}

\makeatletter
\newcommand{\newreptheorem}[2]{\newtheorem*{rep@#1}{\rep@title}\newenvironment{rep#1}[1]{\def\rep@title{#2 \ref*{##1}}\begin{rep@#1}}{\end{rep@#1}}}
\makeatother

\makeatletter
\newcommand*{\rom}[1]{\expandafter\@slowromancap\romannumeral #1@}
\makeatother

\newtheorem{theorem}{Theorem}
\newreptheorem{theorem}{Theorem}
\newtheorem{lemma}[theorem]{Lemma}
\newreptheorem{lemma}{Lemma}
\newtheorem{proposition}[theorem]{Proposition}
\newreptheorem{proposition}{Proposition}
\newtheorem{definition}{Definition}
\newtheorem{remark}{Remark}

\makeatletter
\renewcommand*\l@subsubsection[2]{}
\makeatother

\newif\ifarxivSupplementToc
\makeatletter
\AtBeginDocument{%
  \let\arxiv@addtocontents\addtocontents
  \long\def\addtocontents#1#2{%
    \def\arxiv@extension{#1}%
    \def\arxiv@toc{toc}%
    \ifx\arxiv@extension\arxiv@toc
      \ifarxivSupplementToc\arxiv@addtocontents{#1}{#2}\fi
    \else
      \arxiv@addtocontents{#1}{#2}%
    \fi
  }%
}
\makeatother

\begin{document}

\title{Unbounded separation between definite and indefinite causal order in finite-dimensional quantum metrology}

\author{Yanglin Hu(胡杨林)}
\email{yanglin.hu@u.nus.edu}
\affiliation{QICI Quantum Information and Computation Initiative, School of Computing and Data Science, The University of Hong Kong, Pokfulam Road, Hong Kong SAR, China}

\author{Zi-Shen Li(李子绅)}
\email{zishen@connect.hku.hk}
\affiliation{QICI Quantum Information and Computation Initiative, School of Computing and Data Science, The University of Hong Kong, Pokfulam Road, Hong Kong SAR, China}

\author{Giulio Chiribella}
\email{giulio@cs.hku.hk}
\affiliation{QICI Quantum Information and Computation Initiative, School of Computing and Data Science, The University of Hong Kong, Pokfulam Road, Hong Kong SAR, China}

\author{Yuxiang Yang(杨宇翔)}
\email{yuxiang@cs.hku.hk}
\affiliation{QICI Quantum Information and Computation Initiative, School of Computing and Data Science, The University of Hong Kong, Pokfulam Road, Hong Kong SAR, China}

\date{October 1, 2026}
\begin{abstract}

    Indefinite causal order is known to offer enhancements in quantum metrology, notably including an unbounded advantage in  the measurement of  a geometric phase of a harmonic oscillator. This advantage, however, is specific to  the infinite dimensional setting,  and its finite dimensional analogue remains elusive, with recent findings suggesting that  advantages in finite dimensions may be fundamentally limited to bounded constant factors.   Here we show that, in fact,  arbitrarily large advantages arise for finite dimensional systems in the finite sample regime. Specifically, we  establish an unbounded separation between definite and indefinite causal order in the estimation of a geometric phase associated to two sets of $N$ displacements generated by discrete position and momentum operators on a $d$-dimensional quantum system:  for any given constant $R$, there exist values   of  $N$ and $d=\Omega(N^2)$ such that a strategy with indefinite order uses an initial probe with  $R$ times less energy than the probe required by every strategy with definite order achieving the same mean squared error, whenever the number of measurement shots $\nu$ is bounded as  $\nu=\mathcal{O}(\exp(\pi d/16)/\poly(d))$.  In other words, indefinite  order offers an energy saving that grows arbitrarily large with the parameters of the problem.  
    To prove this result, we establish an approximate Weyl relation for discrete Gaussian wavepackets, which is of independent technical interest. 
\end{abstract}

\begin{CJK*}{UTF8}{gbsn}
\maketitle
\end{CJK*}

\textit{Introduction.}\m
While the standard model of quantum circuits  assumes that all gates take place a definite causal order (DCO), quantum theory is in principle compatible  with scenarios where the order becomes indefinite~\cite{Hardy_2007,Hardy_2009,chiribella_2009,Oreshkov_2012,Chiribella_2013,Baumeler_2014}. This observation has stimulated intense research on both theory~\cite{araujo_2015,Oreshkov_2016,Wechs_2021,Cao_2023,VanderLugt_2023,Liu_2025,delaHamette_2025} and experiments~\cite{Procopio_2015,Rubino_2017,Goswami_2018,Wei_2019,Guo_2020,Rubino_2022,Yin_2023,Deng_2025,Guo_2026,Richter_2026}. Initially explored as a foundational idea, indefinite causal order (ICO) has emerged as a powerful resource for quantum information processing, impacting a variety of tasks such as quantum channel discrimination~\cite{Chiribella_2012,Bavaresco_2021},  computation~\cite{Colnaghi_2012,Araujo_2014,Taddei_2021,Liu_2024}, communication complexity~\cite{Feix_2015, Guerin_2016, Wei_2019, Guo_2020, Liu_2023}, quantum communication~\cite{Ebler_2018,Goswami_2020,Sazim_2021,Chiribella_2021q,Chiribella_2021i}, and thermodynamics~\cite{Felce_2020, Rubino_2021, Nie_2022, Cao_2022, Zhu_2023, Simonov_2025,Xue_2026}.

Recently, ICO has been identified as a promising resource for quantum metrology. In~\cite{Zhao_2020}, Zhao, Yang, and Chiribella (ZYC) introduced the problem of estimating the product of the average displacements along two orthogonal directions in the phase space of a quantum harmonic oscillator. There, ZYC proved an unbounded separation between strategies with ICO and DCO: as the number of displacements $N$ grows, the number of measurement shots required to estimate the product with ICO is $N$ times smaller than the number required by any DCO strategy achieving the same mean squared error with the same amount of energy. More broadly, the ZYC separation  can be viewed as an advantage in estimating the phase-space geometric phase generated by the Weyl relation, as shown  in~\cite{Yin_2023}. A broader algebraic framework was developed in~\cite{Kong_2026}, attributing such enhancements to the noncommutativity of the encoding generators. Related causal-order-induced enhancements have also been explored in quantum metrology through indefinite-time-direction encoding~\cite{Xia_2024,Xia_2025} and in distributed sensing through causal-order switching~\cite{Xia_2026}.  Besides establishing fundamental separations, ICO-based schemes have been also used  to achieve practical precision enhancements in photonic setups \cite{Chen_2026n}.

All the above examples of unbounded advantages, however, are specific to the infinite dimensional setting. For finite dimensional systems, instead, the first findings gave a completely different picture. Numerical methods reported a finite separation in quantum Fisher information (QFI) between ICO and DCO strategies in specific finite-dimensional systems~\cite{Liu_2023}. Subsequent analytical methods have demonstrated no generic asymptotic advantage of ICO over DCO in the presence of noise~\cite{Kurdzialek_2023}. Moreover, bounds on  the spectral diameter of generators suggested that the separation between ICO and DCO should be bounded by dimensional pre-factors~\cite{Kong_2026}.  The stark contrast between the infinite-dimensional case, where an unbounded separation has been rigorously proven \cite{Zhao_2020}, and the finite-dimensional case, where bounded separations or no separation at all has been found,  poses a fundamental puzzle to our understanding of ICO's role in quantum metrology. Here, a crucial open question is:  \emph{can ICO achieve an unbounded advantage over DCO in finite-dimensional quantum metrology?}

In this letter, we answer the above question affirmatively, by identifying a scenario of finite-dimensional quantum metrology where ICO achieves an unbounded energy-efficiency advantage over DCO. Specifically, we consider the task of estimating the product of two sets of $N$ phase-space displacements on generated  by discrete position and momentum operators on a $d$-dimensional quantum system (see Figure~\ref{fig:phase_space}). Although prior works suggested a bounded advantage of ICO over DCO asymptotically for any fixed $d$, here we show that an unbounded advantage survives in the pre-asymptotic regime: for every arbitrarily large factor $R$, there exists a metrological task with finite $N$ and $d = \Omega(N^2)$ such that an ICO strategy requires $R$ time less energy in the initial probe state than every DCO strategy achieving the  same precision (measured by the mean squared error) in the pre-asymptotic regime where the number of measurement shots is $\nu = \mathcal{O}(\exp[\pi d/16]/\poly(d))$. As $N$ and $d$ increase, this results proves that the advantage of ICO can be arbitrarily large.  Notably, this separation depends only on  the initial probe energy, and remains valid even if more energy is injected by the DCO strategy through subsequent control operations. In other words, the advantage at the level of initial probe energy persists even if the DCO strategy is allowed to ``cheat'' by pumping additional energy at later steps. To the best of our knowledge, this is the first unbounded separation in the initial probe energy required by indefinite and definite causal order strategies across finite-dimensional instances.

As a technical contribution used to prove our results, we establish a finite-dimensional approximation to the canonical commutation relations, in its unitary form due to Hermann Weyl. To this purpose, we exploit   discrete Gaussian wavepacket (DGW) framework,   originally developed for finite-sized quasi-ideal quantum clocks~\cite{Woods_2019,Woods_2021,Woods_2022}.  The DGW framework provides states on which the discrete position and momentum operators approximately satisfy the canonical commutation relation. Here we extend the approximation from the self-adjoint generators to the  unitary displacements generated by them, thereby providing an approximation of the Weyl commutation relation for displacement operators. This result is likely to be useful beyond the scope of the present paper, in the same way as the DGW approximations have found applications in quantum error correction~\cite{Woods_2020}, quantum nondemolition measurements~\cite{Boulebnane_2021} and autonomous quantum devices~\cite{Woods_2023}.

\begin{figure}[t]
\includegraphics[width=0.75\linewidth]{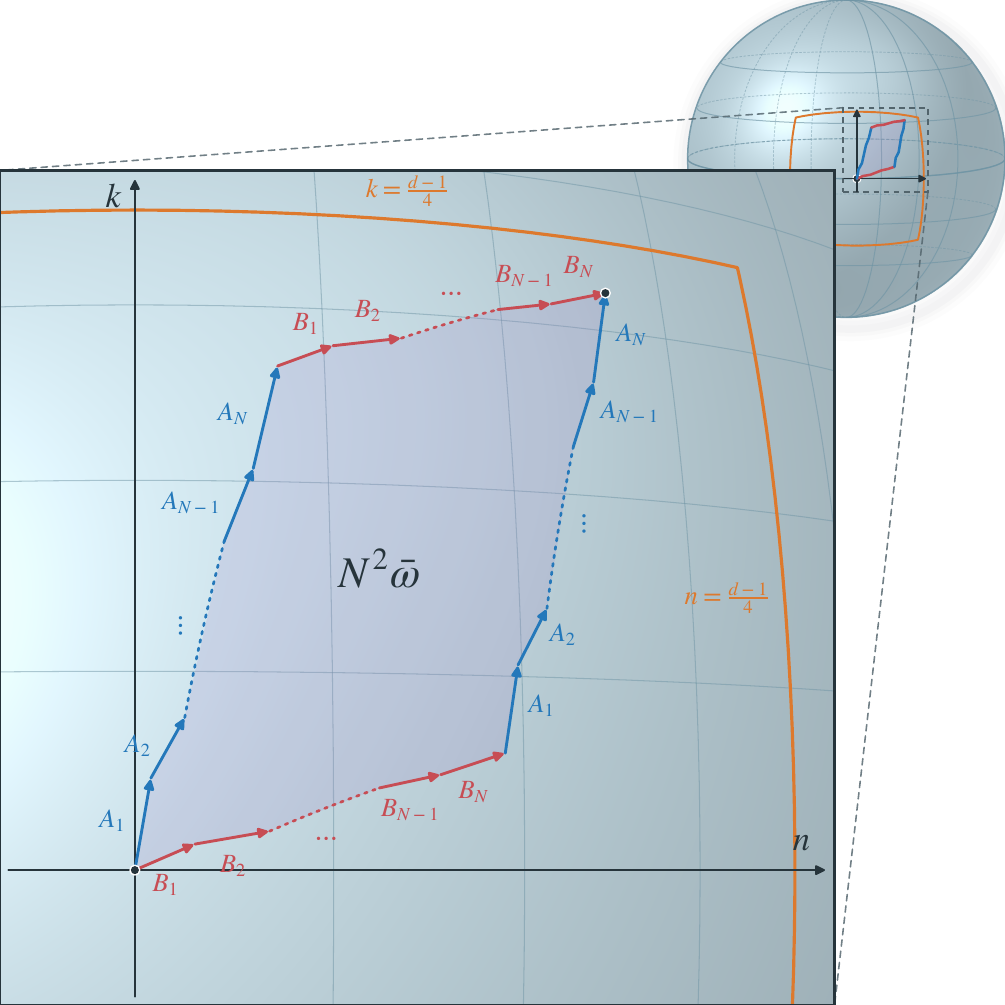}
\caption{The approximate accumulation of the geometric phase in a finite-dimensional phase space. Two trajectories are formed by applying two sets of $N$ unknown phase-space vectors $\alpha_1, \dots, \alpha_N$ corresponding to $A_1, \dots, A_N$ (blue arrows) and $\beta_1, \dots, \beta_N$  corresponding to $B_1, \dots, B_N$ (red arrows) via $\Gamma = \sqrt{d/2\pi} \gamma$ for $\gamma=\alpha_i,\beta_i$ in two opposite causal orders. These two trajectories form a closed phase-space loop enclosing an area $\omega$. By the approximate Weyl relation in Theorem~\ref{thm:general_weyl_relation}, the two states obtained by the two trajectories approximately acquire a relative geometric phase $N^2\bar{\omega}$. The approximation remains valid provided that the centers encountered in these two trajectories are strictly confined to the bulk region, bounded by $|n|, |k| \leq \frac{d-1}{4}$ (orange lines).}

\label{fig:phase_space}
\end{figure}

\textit{Finite-dimensional approximation of the Weyl canonical commutation relation.}\m
Our metrological problem involves a phase-space representation of finite dimensional quantum systems.  For convenience, we take the dimension $d$ to be an odd number ($d=2m+1$ for some positive integer), and we label the canonical basis vectors as $\{\ket{x_n}\}_{n=-m}^m$.  In this setting, the discretized  position  and momentum operators are   $X:= \sqrt{2\pi/d}  \sum_{n} n\, \ketbra{x_n}{x_n}$ and $  P := \sqrt{2\pi/d}  \sum_{k}  k\, \ketbra{p_k}{p_k}$, respectively, where $\{\ket{x_n}\}_{n=-m}^m$ is an orthonormal basis and $\{\ket{p_k} \}_{k=-m}^m$ is its Fourier transform, defined by $\ket{p_k}  :=\sqrt{1/d}\sum_n e^{\iu 2\pi nk/d}\ket{x_n}$.   For a phase-space vector $\gamma=(\gamma_x,\gamma_p)  \in  \mathbb{R}^2$, we  define the unitary displacement operator  $G(\gamma)=e^{\iu g(\gamma)}$, where  $g(\gamma) :=  \gamma_p X-\gamma_x  P$ is the generator of the displacement.   

If $X$ and $P$ were position and momentum operators in infinite dimension, they would satisfy the canonical commutation relation $[X,P ]=\iu \id$, where $\id$ is the identity operator. In a finite-dimensional system, however, an exact realization of the canonical commutation relations is impossible in finite dimensions, due to the Stone-von Neumann theorem~\cite{Stone_1930, Neumann_1931} which states that all Hilbert spaces supporting an irreducible representation of the canonical commutation relations must be isomorphic, and therefore infinite-dimensional (see also \cite{Hall_2013} for a modern textbook presentation).

An approximate version of the canonical commutation relation is satisfied by the DGW states, defined as follows. For a phase space point $(n_0,k_0)  \in  \mathbb{R}^2$, the DGW state centered at $(n_0,k_0)$ is       
\begin{equation}
    \ket{\Psi(n_0,k_0)} \propto \sum_{n  =  \lfloor n_0 \rfloor -m }^{ \lfloor n_0 \rfloor +m}  \,  e^{-\frac{\pi}{\sigma^2}(n-n_0)^2} e^{\iu\frac{2\pi }{d}k_0(n-n_0)}\ket{x_n}  \, ,
\end{equation}
up to a normalization factor.  Now, let $D_X(n)=m-|n|$ and $D_P(k)=m-|k|$ be the distances of a phase-space center from the position and momentum boundaries. We call a trajectory \emph{bulk} if every phase-space center $(n,k)$ encountered along it satisfies $|n|,|k|\leq (d-1)/4$ or equivalently $D_X(n),D_P(k)\geq (d-1)/4$. Along such a trajectory, the finite-dimensional dynamics reproduce infinite-dimensional phase-space displacements with exponentially small error. More precisely, for a phase-space vector $\gamma$, define $\Gamma = \sqrt{d/2\pi} \gamma$, then $G(\gamma)$ shifts a DGW state from $(n,k)$ to $(n+\Gamma_x, k+ \Gamma_p)$ and induces a phase $\phi(\gamma;n,k)$ up to boundary errors. This approximate displacement property yields the approximate Weyl relation below. The proof adapting techniques from~\cite{Woods_2019} is given in~Theorem~\ref{SI-thm:general_weyl_relation} of the Supplemental Material.

\begin{theorem}[{[\textit{Informal}]} Approximate Weyl relation for arbitrary phase-space displacements]
\label{thm:general_weyl_relation}
For a DGW state $\ket{\Psi(n_0,k_0)}$, suppose that the two phase-space trajectories generated by applying two phase-space vectors $\alpha$ and $\beta$ in the two possible orders both remain in the bulk. Then
\begin{equation}
\begin{aligned}
    & G(\beta)G(\alpha) \ket{\Psi(n_0,k_0)}  \\
    & = e^{\iu\omega(\alpha,\beta)}G(\alpha)G(\beta) \ket{\Psi(n_0,k_0)} + \ket{\delta},
\end{aligned}
\end{equation}
where $\omega(\alpha,\beta) =  \alpha_x\beta_p - \alpha_p\beta_x$ is the geometric phase equal to the phase-space area enclosed by the two trajectories, and $\ket{\delta}$ is the boundary error that satisfies $\|\ket{\delta}\|_2 = \mathcal{O}( \poly(d)e^{-\frac{\pi d}{16}} )$ if both the minimum position and momentum boundary distances $D_{X,*}$ and $D_{P,*}$ encountered along the two trajectories are at least $(d-1)/4$. 
\end{theorem}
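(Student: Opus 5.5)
The plan is to reduce the Weyl relation to a single-displacement lemma and then compose it along both trajectories. The lemma states that for a DGW state $\ket{\Psi(n,k)}$ and a phase-space vector $\gamma$ with $\Gamma=\sqrt{d/2\pi}\,\gamma$,
\begin{equation}
G(\gamma)\ket{\Psi(n,k)} = e^{\iu\phi(\gamma;n,k)}\ket{\Psi(n+\Gamma_x,k+\Gamma_p)} + \ket{\epsilon},
\end{equation}
with $\|\ket{\epsilon}\|_2=\mathcal{O}(\poly(d)e^{-\pi d/16})$ whenever the initial and final centers stay at distance at least $(d-1)/4$ from the position and momentum boundaries. We allow for a small intermediate excursion here. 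The candidate phase is the one from the continuum, $\phi(\gamma;n,k)=\sqrt{2\pi/d}\,(\gamma_p n-\gamma_x k)+\tfrac12\gamma_x\gamma_p$ up to a convention, i.e.\ the linear term $\langle g(\gamma)\rangle$ plus the BCH correction.

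To prove the lemma, we write $g(\gamma)=\gamma_pX-\gamma_xP$. Since $X$ and $P$ do not commute exactly, we cannot split $G(\gamma)$ directly. Following Woods et al., we instead work in the basis where the generator acts nicely. For pure $X$-displacements the action is exact: $e^{\iu\gamma_pX}$ multiplies $\ket{x_n}$ by $e^{\iu\sqrt{2\pi/d}\gamma_p n}$, which shifts $k_0$ by $\Gamma_p$. The only error is the mismatch between the truncated window $\lfloor n_0\rfloor\pm m$ and the Gaussian tail. That tail is of order $e^{-\pi m^2/\sigma^2}$, and choosing $\sigma=\sqrt d$ makes it $e^{-\pi d/4}$-small. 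Pure $P$-displacements are handled symmetrically, by Fourier transforming: we use Poisson summation to show that the DGW in the $p$-basis is again a DGW with width $d/\sigma$. For a general direction we follow the ``quasi-ideal clock'' argument. We expand $\ket{\Psi}$ in the eigenbasis of $g(\gamma)$ only approximately. Concretely, we show that $\frac{d}{dt}e^{\iu tg}\ket{\Psi(n,k)}$ is $\iu g$ applied to the moving DGW, and that $g\ket{\Psi(n(t),k(t))}$ equals $(\text{c-number}+\iu\,\partial_t)\ket{\Psi(n(t),k(t))}$ up to error terms. These error terms are controlled by the Gaussian tails and by the aliasing sums in Poisson summation. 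We then integrate the error over $t\in[0,1]$, using Duhamel/Grönwall to get the $\poly(d)$ prefactor.

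With the lemma in hand, the theorem follows by composition. We apply $\alpha$ and then $\beta$ to $\ket{\Psi(n_0,k_0)}$, and separately apply $\beta$ and then $\alpha$. Both orders land on the same center $(n_0+A_x+B_x,\,k_0+A_p+B_p)$. Unitarity means the errors simply add, since $\|G(\beta)\ket{\epsilon}\|=\|\ket{\epsilon}\|$. The accumulated phases are $\phi(\alpha;n_0,k_0)+\phi(\beta;n_0+A_x,k_0+A_p)$ and $\phi(\beta;n_0,k_0)+\phi(\alpha;n_0+B_x,k_0+B_p)$. Their difference is $\sqrt{2\pi/d}\,(\beta_pA_x-\beta_xA_p-\alpha_pB_x+\alpha_xB_p)$. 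Since $\sqrt{2\pi/d}\,A=\alpha$, this equals $2(\alpha_x\beta_p-\alpha_p\beta_x)$ minus the symmetric part, giving exactly $\omega(\alpha,\beta)$ once the convention is fixed consistently. We should double-check the factor of 2 via BCH in the continuum: $e^{A}e^{B}=e^{B}e^{A}e^{[A,B]}$. The bulk hypothesis on both trajectories is precisely what lets us apply the lemma at each of the four steps.

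The main obstacle is the single-displacement lemma for a generic direction $\gamma$, and in particular getting the exponent $\pi d/16$. The DGW must keep width $\sigma\approx\sqrt d$ in both position and momentum. When the center is at distance $D\ge(d-1)/4\approx d/4$ from the boundary, the Gaussian tail is $e^{-\pi D^2/\sigma^2}\approx e^{-\pi d/16}$. The difficulty is controlling the non-exact commutator $[X,P]$, which deviates from $\iu\id$ only near the boundary (the corner of the Fourier kernel), uniformly along the path. We will first try to bound $\|([X,P]-\iu\id)\ket{\Psi(n(t),k(t))}\|$ through the Gaussian weight at the boundary. We then feed this into the Duhamel integral, using that the intermediate centers along a straight displacement stay in the bulk by convexity of the bulk square.
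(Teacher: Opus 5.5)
Your architecture is the paper's: a single-displacement lemma (Lemma~\ref{SI-lem:general_translation}) proved by comparing exact and ideal coefficient derivatives via Poisson summation and a Duhamel integral, then two applications per ordering with errors added by unitarity. The convexity argument and the $e^{-\pi D^2/\sigma^2}\approx e^{-\pi d/16}$ bookkeeping are also fine.

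\textbf{The phase step fails.} Your candidate $\phi(\gamma;n,k)=\sqrt{2\pi/d}\,(\gamma_p n-\gamma_x k)+\tfrac12\gamma_x\gamma_p$ gives, as you computed, a path difference of exactly $2\omega(\alpha,\beta)$. There is no symmetric part left to subtract, and no convention can repair it. Relabeling $\ket{\Psi(n,k)}\mapsto e^{\iu f(n,k)}\ket{\Psi(n,k)}$ shifts $\phi(\gamma;z)$ by $f(z)-f(z+\Gamma)$. This telescopes to the same $f(z_0)-f(z_0+A+B)$ along both orderings, so the relative phase is convention independent and your $\phi$ is simply wrong.

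The phase has to be derived rather than guessed. The DGW amplitude carries $e^{\iu\frac{2\pi}{d}k_0(n-n_0)}$, referenced to the moving center. Hence $\partial_t$ of the ideal coefficients $A\,e^{\iu\phi(t)}\psi(n_t,k_t;l)$ already contains $-\iu\frac{2\pi}{d}\Gamma_x k_t$, which matches the $-\gamma_x P$ contribution. Equating with $\iu g(\gamma)$ acting on these coefficients forces $\dot\phi(t)=\frac{2\pi}{d}\Gamma_p n_t$, i.e.\ $\phi(\gamma;n,k)=\sqrt{2\pi/d}\,\gamma_p n+\tfrac12\gamma_p\gamma_x$. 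In general $\dot\phi=\langle g\rangle_t+\iu\langle\Psi_t|\partial_t\Psi_t\rangle$. Your heuristic kept only $\langle g\rangle_t$ and dropped the connection term, which here equals $\frac{2\pi}{d}\Gamma_x k_t$. With the correct phase the difference is $\sqrt{2\pi/d}\,(\beta_pA_x-\alpha_pB_x)=\alpha_x\beta_p-\alpha_p\beta_x$, as claimed.

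\textbf{The Duhamel step treats the ideal path as differentiable.} The map $t\mapsto\ket{\Psi(n_t,k_t)}$ is not even continuous. The window $\mathcal{S}_d(n_t)=\{\lfloor n_t\rfloor-m,\ldots,\lfloor n_t\rfloor+m\}$ shifts each time $n_t$ crosses an integer, and the normalization $A(n_t)$ drifts. The paper runs Duhamel separately on each block between crossings, with the normalization frozen. At each crossing it pays a window-jump error $\mathcal{O}(e^{-\pi d^2/(4\sigma^2)})$ plus a normalization mismatch controlled by Lemma~\ref{SI-lem:close_norm}. There are at most $\lceil|\Gamma_x|\rceil+1=\mathcal{O}(d)$ blocks, so the total stays $\poly(d)e^{-\pi d/16}$.

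\textbf{A commutator bound is not the right quantity.} Bounding $\|([X,P]-\iu\id)\ket{\Psi_t}\|$ is not what the Duhamel integrand needs. What is needed, and what the paper proves, is two separate facts:
\begin{itemize}
\item $X$ acts on the DGW amplitudes as multiplication by $\sqrt{2\pi/d}\,l$, up to a wraparound error of order $e^{-\pi D_X^2/\sigma^2}$. The eigenvalue is actually $\sqrt{2\pi/d}\,[l]_d$, so even a pure $X$-shift of an off-center wavepacket is not exact, and its error is $e^{-\pi d/16}$, not $e^{-\pi d/4}$.
\item $P$ acts as $-\iu\sqrt{d/2\pi}\,\partial_l$, up to a momentum-cutoff error of order $e^{-\pi\sigma^2 D_P^2/d^2}$ plus aliasing.
\end{itemize}
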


The consequence of the finite-dimensional approximation of the infinite-dimensional Weyl relation is clear, as is shown in Figure~\ref{fig:phase_space}: the two causal orders of the same phase-space displacements bring a DGW state to two DGW states with the same final phase-space center but with a relative geometric phase. When the two causal orders are in quantum superposition, this relative phase can be read out by Ramsey interferometry, which will be essential to the metrological task below. The only price is the boundary term $\ket{\delta}$, which remains exponentially small until the accumulated displacement approaches the edge of the finite-dimensional phase space.

\begin{figure}[t]
\centering
\begin{tikzpicture}[
    scale=0.75,
    box black/.style={fill=black, text=white, minimum width=0.9cm, minimum height=0.8cm, align=center, font=\small, thick},
    box cyan/.style={fill=cyan!30, draw=black, thick, minimum width=0.7cm, minimum height=2.3cm, align=center, font=\small},
    mid arrow blue/.style={postaction={decorate,decoration={markings,mark=at position 0.55 with {\arrow[blue!50, line width=1.5pt]{Latex[length=2.5mm, width=2.5mm]}}}}, draw=blue!50, line width=1.5pt},
    mid arrow orange/.style={postaction={decorate,decoration={markings,mark=at position 0.55 with {\arrow[orange!50, line width=1.5pt]{Latex[length=2.5mm, width=2.5mm]}}}}, draw=orange!50, line width=1.5pt}
    ]

    \def\yTop{2}
    \def\yBot{0}
    \def\yCtrl{-1.2}
    \def\yMid{1} 
    \def\yOff{0.25}

    \path[fill=red!50, draw=black, thick, rounded corners=1pt] (-1, \yMid-0.4) -- (-1.5, \yMid-0.4) arc (270:90:0.4 and 0.4) -- (-1, \yMid+0.4) -- cycle;

    \path[fill=red!50, draw=black, thick, rounded corners=1pt] (-1, \yCtrl-0.4) -- (-1.5, \yCtrl-0.4) arc (270:90:0.4 and 0.4) -- (-1, \yCtrl+0.4) -- cycle;
    
    \node[draw=black, fill=white, circle, inner sep=1.5pt, thick] (c1) at (-0.5, \yMid) {};
    \node[draw=black, fill=black, circle, inner sep=1.5pt] (c2) at (-0.5, \yCtrl) {};
    \draw[thick] (c1) -- (c2);
    \draw[thick] (-1, \yMid) -- (c1);
    \draw[thick] (-1, \yCtrl) -- (c2);

    \node[box black] (dx1) at (1.5, \yTop) {$G(\lambda_\alpha \alpha_1)$};
    \node[box black] (dxN) at (4.5, \yTop) {$G(\lambda_\alpha \alpha_N)$};
    \node[box black] (dp1) at (1.5, \yBot) {$G(\lambda_\beta \beta_1)$};
    \node[box black] (dpN) at (4.5, \yBot) {$G(\lambda_\beta \beta_N)$};

    \node at (3, \yTop) {\textbf{\dots}};
    \node at (3, \yBot) {\textbf{\dots}};

    \node[draw=black, fill=white, circle, inner sep=1.5pt, thick] (c3) at (6.5, \yMid) {};
    \node[draw=black, fill=black, circle, inner sep=1.5pt] (c4) at (6.5, \yCtrl) {};
    \draw[thick] (c3) -- (c4);
    \draw[thick] (c2) -- (c4); 
    \draw[thick] (c3) -- (7, \yMid);
    \draw[thick] (c4) -- (7, \yCtrl);

    \draw [mid arrow blue] (c1) .. controls (0.0, \yMid) and (0.0, \yTop+\yOff) .. ($(dx1.west) + (0, \yOff)$);

    \draw [mid arrow blue] ($(dx1.east) + (0, \yOff)$) -- ($(dxN.west) + (0, \yOff)$);

    \draw [mid arrow blue, rounded corners=6pt] ($(dxN.east) + (0, \yOff)$) -- (5.85, \yTop+\yOff) -- (5.85, 1.3) -- (0.15, 1.3) -- (0.15, \yBot+\yOff) -- ($(dp1.west) + (0, \yOff)$);

    \draw [mid arrow blue] ($(dp1.east) + (0, \yOff)$) --  ($(dpN.west) + (0, \yOff)$);

    \draw [mid arrow blue] ($(dpN.east) + (0, \yOff)$) .. controls (6.1, \yBot+\yOff) and (6.1, \yMid) .. (c3);

    \draw [mid arrow orange] (c1) .. controls (0.0, \yMid) and (0.0, \yBot-\yOff) .. ($(dp1.west) + (0, -\yOff)$);

    \draw [mid arrow orange] ($(dp1.east) + (0, -\yOff)$) -- ($(dpN.west) + (0, -\yOff)$);

    \draw [mid arrow orange, rounded corners=6pt] ($(dpN.east) + (0, -\yOff)$) -- (5.95, \yBot-\yOff) -- (5.95, 0.7) -- (0.25, 0.7) -- (0.25, \yTop-\yOff) -- ($(dx1.west) + (0, -\yOff)$);

    \draw [mid arrow orange] ($(dx1.east) + (0, -\yOff)$) -- ($(dxN.west) + (0, -\yOff)$);

    \draw [mid arrow orange] ($(dxN.east) + (0, -\yOff)$) .. controls (6.1, \yTop-\yOff) and (6.1, \yMid) .. (c3);

    \path[fill=gray!30, draw=black, thick, pattern=north east lines, rounded corners=1pt] (7, \yCtrl-0.4) -- (7, \yCtrl+0.4) arc (90:-90:0.5 and 0.5) -- cycle;

    \node at (-1.5, 2.75) {(a)};

    \def\yTopB{-4}
    \def\yBotB{-5.5}

    \path[fill=red!50, draw=black, thick, rounded corners=1pt] (-1, \yBotB-0.4) -- (-1.5, \yBotB-0.4) arc (270:90:0.4 and 1.15) -- (-1, \yTopB+0.4) -- cycle;

    \draw[thick] (-1, \yTopB) -- (2.4, \yTopB);
    \draw[thick] (3.2, \yTopB) -- (6.95, \yTopB);
    \draw[thick] (-1, \yBotB) -- (2.4, \yBotB);
    \draw[thick] (3.2, \yBotB) -- (6.95, \yBotB);

    \node[box black] at (0.15, \yTopB) {$G(\lambda_{\alpha} \alpha_1)$};
    \node[box black] at (4.4, \yTopB) {$G(\lambda_{\beta} \beta_{N})$};

    \node[box cyan] at (1.75, -4.75) {$V_1$};
    \node[box cyan] at (6.15, -4.75) {$V_{2N}$};

    \node at (2.8, \yTopB) {\textbf{\dots}};
    \node at (2.8, \yBotB) {\textbf{\dots}};

    \path[fill=gray!30, draw=black, thick, pattern=north east lines, rounded corners=1pt] (6.95, \yBotB-0.4) -- (6.95, \yTopB+0.4) arc (90:-90:0.5 and 1.15) -- cycle;

    \node at (-1.5, -2.75) {(b)};
\end{tikzpicture}

\caption{Metrological strategies for phase-space displacements. (a) The strategy with indefinite causal order. A quantum switch routes the probe through two sets of displacements, $G(\lambda_\alpha \alpha_1), \ldots, G(\lambda_\alpha \alpha_N)$ and $G(\lambda_\beta \beta_1), \ldots, G(\lambda_\beta \beta_N)$, in a quantum superposition of two causal orders (blue and orange routines). The accumulated geometric phase is subsequently extracted via a Ramsey measurement on the control qubit. (b) The strategy with definite causal order. The $2N$ displacements are applied in a fixed causal order, interleaved with arbitrary adaptive unitaries $V_1, \ldots, V_{2N}$ prior to the final measurement. The adaptive unitaries are not assumed to preserve energy.
}

\label{fig:strategy}

\end{figure}

\textit{Metrological Task.}\m
We now formulate the finite-dimensional phase-space metrological task. An agent has tunable oracle access to two sets of $N$ unknown phase-space displacement oracles $G(\lambda_\gamma\gamma_j)$  where $\gamma=\alpha,\beta$. The unknown phase-space vectors $\gamma_j=(\gamma_{j,p},\gamma_{j,x})$, with $\gamma=\alpha,\beta$, have two nonnegative components, and an infinity norm $\|\gamma\|_\infty$ which lies in fixed intervals $[\gamma_{\min},\gamma_{\max}]$ bounded away from zero independent of $N$ and $d$. The attenuation factors $\lambda_\alpha,\lambda_\beta$ are known control parameters chosen by the agent, with values in $(0,1]$. Each call to an attenuated oracle counts as one use of the corresponding physical sample, independently of the attenuation strength. All attenuated stages automatically satisfy the same bulk condition once the final unattenuated stage does. The attenuation factors are only introduced to perform phase unwrapping used in Ramsey experiments, see~Section~\ref{SI-sec:metrology_bounds} of the Supplemental Material.  The target parameter is the geometric phase given by the symplectic product $\bar{\omega}(\bar{\alpha},\bar{\beta})= \bar{\alpha}_x\bar{\beta}_p -\bar{\alpha}_p\bar{\beta}_x $ of the two average displacements $\bar{\gamma}=\frac{1}{N}\sum_{j=1}^N\gamma_j$ where $\gamma=\alpha,\beta$. We assume a coarse prior interval $[\omega_-,\omega_+]$ of constant width. The probe energy $E =\bra{\psi}H\ket{\psi}$ for a probe state $\ket{\psi}$ is measured with respect to the finite-dimensional oscillator Hamiltonian $ H= X^2 + P^2$.

Figure~\ref{fig:strategy} illustrates the two types of  strategies  considered in this work. On the one hand, we consider  an ICO strategy where a quantum switch coherently routes the same probe through the two sets of oracles in the two opposite causal orders. On the other hand, we consider all  possible DCO strategies, in which the same oracle calls are arranged in a fixed causal order, possibly interleaved with arbitrary known unitaries, ancillas, adaptive controls, and a final joint measurement. In this case, no energy-preservation constraint is imposed on the adaptive controls. In particular, they may inject energy after the first oracle call. All preceding processing are included in probe state preparation. The two classes have access to the same oracle samples and the same non-oracle quantum processing. The only resource present in the ICO strategy and absent in the DCO strategy is the coherent superposition of causal orders. 

We now establish an unbounded advantage of the ICO strategy over all DCO strategies.  In general, metrological advantages  can take place in different ways: either as a reduced number of queries to an unknown quantum process, as in the case of \cite{Liu_2023,Kurdzialek_2023,Kong_2026,Wei_2026}, or as a more efficient use of physical resources, such as energy or time. 
Probe energy, in particular, is the prototypical  physical resource in quantum metrology, as illustrated in the famous applications of NOON states~\cite{Lee_2002,Slussarenko_2017} and squeezed states of light~\cite{Caves_1981,Pezze_2008}. Recently, a bounded energy advantage of ICO over DCO was established in a channel estimation problem when accounting for the energy costs of probe preparation, intermediate controls, and final measurement~\cite{Chen_2026o}. In the following, instead, we will establish an unbounded advantage that holds independently of how much energy the DCO strategy pumps into in the intermediate controls and final measurement.

For the ICO strategy, we choose the DGW state $\ket{\Psi(0,0)}$ as the probe state $\ket{\psi_{\rm ICO}}$ with the energy $E_{\rm ICO}=1+ \mathcal{O}(\poly(d)e^{-\frac{\pi d}{4}})$.  The quantum switch applies the two aggregate displacements $\prod_{j=1}^N G(\lambda_\alpha\alpha_j)$ and $\prod_{j=1}^N G(\lambda_\beta\beta_j)$ in a coherent superposition of the two possible causal orders. As long as the centers encountered during the evolution remain far from the boundary, the approximate Weyl relation implies that the two ordered branches bring the probe to the same final DGW wavepacket with a relative geometric phase up to an exponentially small boundary error. When both components of the phase-space displacement vectors are bounded by $x_{\max}$ and $p_{\max}$ respectively, a sufficient condition ensuring that the center $(n,k)$ during the trajectory remains within the bulk $|n|,|k|\leq (d-1)/4$ is
\begin{equation}
    d \geq \frac{8}{\pi} \max\big\{|x_{\max}|^2,|p_{\max}|^2\big\}N^2 .
\end{equation}
In this regime, the approximate Weyl relation gives 
\begin{equation}
\begin{aligned}
    & \prod_{j=1}^N G(\lambda_\beta\beta_j)  \prod_{k=1}^N G(\lambda_\alpha\alpha_k)  \ket{\Psi(0,0)}  -   \ket{\delta} \\
    & =  e^{\iu\lambda_\alpha\lambda_\beta N^2 \bar{\omega}}\prod_{k=1}^N G(\lambda_\alpha\alpha_k)   \prod_{j=1}^N G(\lambda_\beta\beta_j) \ket{\Psi(0,0)} ,
\end{aligned}
\end{equation}
where the boundary error is $\|\ket{\delta}\|_2 = \mathcal{O}(\poly(d)e^{-\pi d/16})$ and the target geometric phase is $\bar{\omega} = \bar{\alpha}_x\bar{\beta}_p-\bar{\alpha}_p\bar{\beta}_x  $. The control qubit therefore carries a Ramsey signal whose gain is enhanced by $N^2$, and this signal can be learned by Ramsey interferometry, see Section~\ref{SI-sec:metrology_with_qswitch} of the Supplemental Material. For total number of shots up to $\nu=\mathcal{O}(e^{\pi d/16}/\poly(d))$, the statistical estimation error of the Ramsey experiment dominates the DGW approximation error, leading to a MSE of $\mathcal{O}(\frac{1}{\nu N^4})$.

It worth noting that for each fixed pair $(N,d)$, the DGW approximation contributes a fixed boundary error, determined by the minimum distance to the boundary in the trajectory. Taking the number of shots $\nu$ to infinity reduces the statistical error while leaves the DGW approximation error unchanged. Thus the $\mathcal{O}(\frac{1}{\nu N^4})$ scaling is observed in the finite-sample window where the statistical error dominates the exponentially small DGW boundary error. In our bulk regime, this window extends up to total number of shots of order $\mathcal{O}(e^{\pi d/16}/\poly(d))$.

By contrast, a DCO strategy cannot directly convert noncommutativity into a single observable relative geometric phase. To see the limitation, consider two sets of conjugate phase-space displacements $\alpha_j=(p_j,0)$ and $\beta_j=(0,x_j)$ for which the target geometric phase is $\bar{\omega}=-\bar{x}\bar{p}$. The DCO strategy accesses these $2N$ displacements in the same fixed causal order in every shot.  We lower bound the DCO error by a side information argument. Suppose that all displacements are known except the actual first oracle called by the DCO protocol (we assume this to be $x_1$; other $x$ and $p$ parameters work similarly). This simplifies the estimation problem, and any lower bound for the simplified problem also applies to the original DCO task. The parameter $x_1$ is encoded through $e^{-\iu x_1 P}$, whose generator is $P$.  For a probe state $\ket{\psi_{\rm DCO}}$ with a probe energy $E_{\rm DCO}$, the quantum Fisher information satisfies $F_Q(x_1) = 4\Var_{\psi_{\rm DCO}}(P) \leq 4E_{\rm DCO}$ where we used $\Var(P)\leq \langle P^2\rangle\leq \langle H\rangle$. By the quantum Cram\'er-Rao bound, given total number of shots $\nu$, the MSE for $x_1$ is $\Omega(\frac{1}{\nu E_{\rm DCO}})$. As $\partial_{x_1} \bar{\omega} = -\bar{p}/N$, propagating this error to $\bar{\omega}$ yields a MSE of $\Omega(\frac{1}{\nu E_{\rm DCO} N^2})$. Here $E_{\rm DCO}$ denotes the initial energy of the probe state entering the first unknown displacement oracle. Any pre-processing before this oracle is regarded as part of the probe preparation. This lower bound thus depends only on the initial energy of the probe state, regardless of whether more energy is injected into the system by adaptive control. A detailed proof for the lower bound covering both unbiased and biased estimators is presented in Section~\ref{SI-sec:metrology_without_ico} of the Supplemental Material.

Combining the achievability result of an ICO strategy and the impossibility result of a DCO strategy gives the unbounded separation between strategies with ICO and DCO in the following sense: 
\begin{theorem}[{[\textit{Informal}]} Unbounded separation across finite-dimensional instances]
    \label{thm:finite_dimensional_separation}
    For every prescribed factor $R$, there is a finite-dimensional phase-space metrological task with finite $N$ as well as dimension $d=\Omega(N^2)$ such that for the same total number of shots within the finite-sample window $\nu=\mathcal{O}(e^{\pi d/16}/\poly(d))$, an ICO strategy reaches the mean squared error $\MSE=O(\frac{1}{\nu N^4})$  with a constant initial probe energy $E_{\rm ICO}$, whereas every DCO strategy with the same fixed causal order in every shot reaching the same mean squared error requires a larger initial probe energy $E_{\rm DCO}$ by a factor at least $R$: 
    \begin{align}
        \frac{E_{\rm DCO}}{E_{\rm ICO}} \geq R. 
    \end{align}
\end{theorem}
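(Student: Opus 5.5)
The plan is to combine the ICO achievability bound with the DCO converse on a common family of instances, and then choose $N$ large enough for the stated $R$. We restrict to the conjugate family $\alpha_j=(p_j,0)$, $\beta_j=(0,x_j)$ with $p_j,x_j\in[\gamma_{\min},\gamma_{\max}]$, so that $\bar\omega=-\bar x\bar p$ and $|\bar p|\geq\gamma_{\min}>0$. This family is admissible for both strategy classes, and a lower bound proved on a subfamily is also a lower bound for the full task. We take $d$ odd with $d\geq \frac{8}{\pi}\gamma_{\max}^2N^2$, so that $d=\Omega(N^2)$. By the sufficient condition stated before the theorem, every trajectory in both branches of the switch, including all attenuated stages, then stays in the bulk $|n|,|k|\le (d-1)/4$.

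\emph{Step 1 (ICO upper bound).} The probe is $\ket{\Psi(0,0)}$, with $E_{\rm ICO}=1+\mathcal{O}(\poly(d)e^{-\pi d/4})\le 2$. Apply Theorem~\ref{thm:general_weyl_relation} iteratively to the aggregate displacements. Each swap of adjacent displacements contributes an error of norm $\mathcal{O}(\poly(d)e^{-\pi d/16})$, and $\mathcal{O}(N^2)=\mathcal{O}(d)$ swaps keep this polynomial prefactor. The two switch branches therefore agree up to the phase $e^{\iu\lambda_\alpha\lambda_\beta N^2\bar\omega}$ and an error $\ket{\delta}$. The control qubit's outcome probabilities then equal the ideal Ramsey probabilities $\tfrac12(1\pm\cos(\lambda_\alpha\lambda_\beta N^2\bar\omega))$ up to $\epsilon=\mathcal{O}(\poly(d)e^{-\pi d/16})$. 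Ramsey estimation with phase unwrapping over the attenuation ladder $\lambda_\alpha\lambda_\beta\in\{N^{-2},\dots,1\}$ uses the constant-width prior $[\omega_-,\omega_+]$ and costs only logarithmically many extra stages. It yields $\MSE_{\rm ideal}=\mathcal{O}(1/(\nu N^4))$. The deviation $\epsilon$ per shot shifts the total-variation distance of the $\nu$-shot statistics by at most $\nu\epsilon$. Since the estimator is bounded by the prior width, the MSE changes by at most $\mathcal{O}(\nu\epsilon)$ plus a bias term $\mathcal{O}(\epsilon^2/N^4)$. For $\nu\le c\,e^{\pi d/16}/\poly(d)$ with a suitable $\poly(d)$, we have $\nu\epsilon\lesssim 1/(\nu N^4)$, so $\MSE_{\rm ICO}\le C_1/(\nu N^4)$.

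\emph{Step 2 (DCO lower bound).} Follow the side-information argument. Reveal all parameters except the one entering the first oracle, say $x_1$, whose generator is $P$. Then $F_Q\le 4\nu\Var(P)\le 4\nu E_{\rm DCO}$, with the later adaptive unitaries absorbed into a parameter-independent post-processing; the same holds with $X$ if the first oracle is a $p_j$. Using $|\partial_{x_1}\bar\omega|=|\bar p|/N\ge \gamma_{\min}/N$, we obtain $\MSE(\bar\omega)\ge C_2/(\nu E_{\rm DCO}N^2)$. This needs a Cram\'er--Rao bound for biased estimators, and I expect this to be the main obstacle. I would use a van Trees (Bayesian) bound with a smooth prior on $x_1\in[\gamma_{\min},\gamma_{\max}]$. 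It gives $\sup_{x_1}\MSE\ge (\gamma_{\min}/N)^2/(4\nu E_{\rm DCO}+I_{\rm prior})$, which has the same form once $\nu E_{\rm DCO}\gtrsim I_{\rm prior}$. This condition is automatic whenever the MSE is to be as small as $1/(\nu N^4)$, because the posterior-free bound $(\gamma_{\min}/N)^2/I_{\rm prior}$ already exceeds that target.

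\emph{Step 3 (combination).} Suppose a DCO strategy matches $\MSE\le C_1/(\nu N^4)$. Then $C_2/(\nu E_{\rm DCO}N^2)\le C_1/(\nu N^4)$, so $E_{\rm DCO}\ge (C_2/C_1)N^2$. Hence $E_{\rm DCO}/E_{\rm ICO}\ge (C_2/2C_1)N^2$, and it suffices to choose $N\ge\sqrt{2C_1R/C_2}$ and $d=\lceil\frac{8}{\pi}\gamma_{\max}^2N^2\rceil$ (rounded up to an odd integer). The constants $C_1,C_2$ depend only on $\gamma_{\min},\gamma_{\max}$ and the prior width, not on $N,d,\nu$. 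This uniformity in $N$ and $d$ is exactly what has to be checked carefully in Steps 1 and 2.
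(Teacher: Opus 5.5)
Your architecture is the paper's: the Weyl relation plus Ramsey interferometry with attenuation-based phase unwrapping for ICO, side information plus $F_Q\le 4E$ for DCO, then a comparison of the $N^{-4}$ and $N^{-2}$ coefficients. However, Step 1 as written does not reach the window in the statement. You charge the full prior width against the total-variation distance of the $\nu$-shot statistics, which adds a term of order $\Delta\theta^2\nu\epsilon$ to the MSE. Requiring $\nu\epsilon\lesssim 1/(\nu N^4)$ means $\nu^2\lesssim 1/(\epsilon N^4)$. With $\epsilon=\mathcal{O}(\poly(d)e^{-\pi d/16})$ and $N^2=\mathcal{O}(d)$, this gives only $\nu\lesssim e^{\pi d/32}/\poly(d)$, not $e^{\pi d/16}/\poly(d)$. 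At $\nu\sim e^{\pi d/16}/\poly(d)$ your extra term is of constant order while the target $1/(\nu N^4)$ is exponentially small, so the inequality you assert there is false. This is not mere slack in the bound. The exact total-variation distance between the ideal and perturbed binomial counts is of order $\sqrt{\nu}\,\epsilon$, so any argument that pays $\Delta\theta^2$ whenever the two models differ loses a constant fraction of the exponent.

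The missing idea is the paper's treatment of the boundary error as a deterministic shift of the Bernoulli means, $\mathbb{E}[Y_\phi]=\cos(M\theta+\phi)+\Delta_{M,\phi}$ with $|\Delta_{M,\phi}|\le\epsilon_{\rm bd}$, with Hoeffding applied around the shifted means. The unwrapping stages then only need $\epsilon_{\rm bd}<1/(2r)$. Their failure probability $\delta_{\rm u}\sim 1/(\nu N^4)$ is controlled directly rather than by comparison with the ideal model, and only that event is charged at $\Delta\theta^2$. On the success event, the final phase $\arg(\widehat{C}_L+\iu\widehat{S}_L)$ carries an $\mathcal{O}(\epsilon_{\rm bd})$ shift on top of the statistical noise. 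The boundary error therefore enters only through a squared bias $\mathcal{O}(\epsilon_{\rm bd}^2/N^4)$ (the paper even keeps the cruder $\epsilon_{\rm bd}/N^4$), which is dominated by $1/(\nu N^4)$ throughout $\nu=\mathcal{O}(e^{\pi d/16}/\poly(d))$. With this change your Step 3 goes through verbatim.

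The rest of your proposal is sound. In the conjugate family the aggregates $\prod_j e^{\iu\lambda p_j X}$ combine exactly, so a single application of the Weyl relation suffices and the $\mathcal{O}(N^2)$-swap bookkeeping is unnecessary. Your van Trees bound $(\gamma_{\min}/N)^2/(4\nu E+I_{\rm prior})$ is a valid and somewhat cleaner alternative to the paper's case split on whether $\partial_{x_1}\Bias\ge -1/2$ on a prior subinterval (biased QCRB versus large bias at the interval ends). It gives $E_{\rm DCO}\ge\gamma_{\min}^2N^2/(4C_1)-I_{\rm prior}/(4\nu)$ directly, with $I_{\rm prior}$ playing the role of the paper's interval-width branch.
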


For details, see~Theorem~\ref{SI-thm:finite_dimensional_separation} of the Supplemental Material. Theorem~\ref{thm:finite_dimensional_separation} establishes an unbounded separation in the probe energy between ICO and DCO strategies across a family of finite-dimensional metrological tasks. 


The finite-sample separation in Theorem~\ref{thm:finite_dimensional_separation} is compatible with fixed-dimensional asymptotic obstructions such as~\cite[Appendix~\rom{5}]{Kong_2026}. These obstructions do not imply that ICO can never be useful in finite-dimensional metrology. Rather, the relevant statement is that, at fixed finite dimensions, the variance of the local generator is bounded by finite-dimensional spectral diameter, preventing strategies with quantum switches from sustaining the unbounded asymptotic separation in infinite dimensions.

To make the connection with the present metrological task explicit, we prove the following energy-resolved version of the fixed-dimensional bound. Let the energy be measured with respect to $ H= X^2+ P^2$. We focus again on two sets of $N$ conjugate phase space displacements $\alpha_j=(p_j,0)$ and $\beta_j=(0,x_j)$ where $x_j$ and $p_j$ bounded away from zero for which the target geometric phase is $\bar{\omega}=-\bar{x}\bar{p}$. We allow an arbitrary strategy with quantum switches which coherently controls different causal orders of these $2N$ unknown displacement oracles together with arbitrary known interleaving operations, ancillas and final measurements. The argument follows this routine: differentiating the output state with respect to the average displacement inserts one conjugated generator at each oracle slot; the variance of each inserted generator is bounded by the energy of the state entering that slot; summing over the $N$ slots gives an $N^2$ factor. This leads to the following proposition, whose details are in~Proposition~\ref{SI-prop:lower_bound_with_ico} of the Supplemental Material:

\begin{proposition}[{[\textit{Informal}]} Energy-resolved fixed-dimensional bound]
\label{prop:lower_bound_with_ico}
Let $\pi$ be a permutation which specifies a causal order of displacement oracle queried in each step, used to denote a causal order branch. For each causal order $\pi$ and each step $i$, let $E_{\pi,i}$ be the intermediate energy of the intermediate state entering the $i$th oracle, and define $E_{\rm int}:=\max_{\pi,i}E_{\pi,i}$. Then every locally unbiased estimator of $\bar{\omega}$ satisfies $\MSE = \Omega(\frac{1}{\nu N^2 E_{\rm int}})$. 
\end{proposition}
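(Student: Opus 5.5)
We bound the quantum Fisher information of the full output state with respect to a one-parameter path that changes $\bar{\omega}$, and then apply the quantum Cram\'er--Rao bound for locally unbiased estimators over $\nu$ shots.

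\textbf{Choice of path.} Fix all parameters except a common shift: replace $x_j\mapsto x_j+\theta$ for every $j$, so that $\bar{x}\mapsto\bar{x}+\theta$. Then $\partial_\theta\bar{\omega}=-\bar{p}$, which is bounded away from zero because the $p_j$ are. Any locally unbiased estimator of $\bar{\omega}$ yields, after dividing by $-\bar{p}$, a locally unbiased estimator of $\theta$. Hence
\[
\MSE(\bar{\omega})\geq \bar{p}^2/(\nu F_Q(\theta)).
\]
The same argument with $p_j$ instead of $x_j$ works symmetrically. The task therefore reduces to showing $F_Q(\theta)=\mathcal{O}(N^2E_{\rm int})$.

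\textbf{Purification and derivative.} Write the general quantum-switch strategy as a purified output
\[
\ket{\Phi_\theta}=\sum_\pi \ket{\Phi_{\pi,\theta}},
\]
where the branches are orthogonal or coherently combined through the control system. Each branch has the form
\[
\ket{\Phi_{\pi,\theta}}=W_{\pi,2N}\,U_{\pi(2N)}\cdots W_{\pi,1}\,U_{\pi(1)}\ket{\psi_\pi},
\]
with known interleaving isometries $W$ and oracle unitaries $U$. Only the $N$ oracles $e^{-\iu(x_j+\theta)P}$ depend on $\theta$. Differentiating inserts $-\iu P$ at each such slot:
\[
\partial_\theta\ket{\Phi_{\pi,\theta}}=-\iu\sum_{i:\,\pi(i)\in\{x\}} (\cdots) P\ket{\phi_{\pi,i}},
\]
where $\ket{\phi_{\pi,i}}$ is the (subnormalized) branch state entering slot $i$. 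We use the bound $F_Q\leq 4\|\partial_\theta\Phi\|^2$ (valid for pure states, since $F_Q=4(\|\partial\Phi\|^2-|\langle\Phi|\partial\Phi\rangle|^2)$). The triangle inequality over the $N$ inserted terms gives
\[
\|\partial_\theta\Phi_\pi\|\leq\sum_i\|P\phi_{\pi,i}\|.
\]
Each term satisfies $\|P\phi_{\pi,i}\|^2=\langle P^2\rangle_{\phi_{\pi,i}}\leq p_\pi E_{\pi,i}\leq p_\pi E_{\rm int}$, where $p_\pi$ is the branch weight, since $P^2\leq H$. Hence $\|\partial_\theta\Phi_\pi\|\leq N\sqrt{p_\pi E_{\rm int}}$.

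\textbf{Combining branches (main obstacle).} The delicate step is recombining the branches. If the control register keeps the branches orthogonal, then
\[
\|\partial\Phi\|^2=\sum_\pi\|\partial\Phi_\pi\|^2\leq N^2E_{\rm int}\sum_\pi p_\pi=N^2E_{\rm int}.
\]
For general coherent interference, I would first attach the causal-order control register (as in the quantum switch, where the order is recorded coherently before the final interleaving) so that branches are orthogonal up to that point. The Fisher information can only decrease under the final measurement and processing, so the bound persists. This gives $F_Q(\theta)\leq4N^2E_{\rm int}$ and $\MSE=\Omega(1/(\nu N^2E_{\rm int}))$.

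The main care points are the orthogonality argument and the definition of $E_{\pi,i}$ for normalized branch states, ensuring the branch weights sum to one. One could sharpen the constant by subtracting the mean of $P$, but this is not needed for the stated scaling.
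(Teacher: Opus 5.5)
Your proposal is correct and is essentially the paper's argument: the common shift $x_j\mapsto x_j+\theta$ is exactly the paper's $\partial_{\bar x}=\sum_j\partial_{x_j}$, your $\|\partial_\theta\Phi\|^2$ over control-orthogonal branches is the paper's $\langle\Psi|\hat{\mathcal H}_{\bar x}^2|\Psi\rangle$ for the block-diagonal $\hat W=\sum_\pi|\pi\rangle\langle\pi|_C\otimes\hat U_\pi$, and your triangle inequality combined with $P^2\le H$ is the paper's Cauchy--Schwarz step, giving $F_Q\le 4N^2E_{\rm int}$. The only difference is packaging: the paper builds the two-parameter QFIM for $(\bar x,\bar p)$ and uses $\lambda_{\max}(\mathcal F)\le\Tr\mathcal F$, whereas your single one-parameter path avoids the matrix step and gives the slightly better constant $\max\{\bar x^2,\bar p^2\}/(4\nu N^2E_{\rm int})$; both arguments rely on the branches recombining only after the last oracle.
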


At fixed dimension, Proposition~\ref{prop:lower_bound_with_ico} recovers the spectral diameter obstruction in~\cite{Kong_2026}. Since $ H= X^2+ P^2$ has operator norm $\|H\|_\infty=O(d)$ in our normalization, every strategy satisfies $E_{\rm int}=O(d)$, yielding $\MSE=\Omega(\frac{1}{\nu N^2 d})$. This shows why Theorem~\ref{thm:finite_dimensional_separation} does not contradict Proposition~\ref{prop:lower_bound_with_ico}. Theorem~\ref{thm:finite_dimensional_separation} concerns the \emph{initial probe energy} required by ICO and DCO strategies in the low energy and finite sample regime, whereas Proposition~\ref{prop:lower_bound_with_ico} constrains strategies according to the \emph{largest intermediate energy} encountered during the entire process. 

In our ICO protocol, the initial probe state has constant energy, but the switched trajectories displace the wavepacket center by an amount proportional to $N$. Since the oscillator energy is quadratic in the phase-space displacement, the intermediate energy scales as $E_{\rm int}=O(N^2)$ throughout the bulk regime. Substituting this into Proposition~\ref{prop:lower_bound_with_ico} gives the lower bound $\Omega(\frac{1}{\nu N^4})$, matching the scaling achieved by our ICO protocol in the finite sample regime. Therefore, the two results correspond to different resource accountings: Theorem~\ref{thm:finite_dimensional_separation} demonstrates a low energy and finite sample separation, while Proposition~\ref{prop:lower_bound_with_ico} governs the asymptotic behavior through the maximal intermediate energy, or equivalently the finite spectral diameter.

\textit{Discussion.}\m
We have established an unbounded separation between metrological strategies with ICO and DCO for estimating finite-dimensional phase-space displacements. The separation occurs in the finite-sample regime: the ICO protocol uses a probe state close to the finite-dimensional vacuum, while every DCO strategy achieving the same precision with the same total number of shots requires larger probe energy.  High-energy states are typically costly to prepare, whereas the near vacuum states can be more easily prepared by dissipative dynamics~\cite{Paz_2004, Cormick_2013, Wang_2025, Rubies-Bigorda_2025, Zhan_2026} or polynomial-depth circuits~\cite{Rattew_2021,Huggins_2025, Chen_2025,Xie_2025}.

The finite-dimensional phase space considered here also has natural physical realizations. The DGW framework can be embedded into discrete or quasi continuous variable architectures~\cite{Kitaev_2008,Opatrny_2017}.  In particular, the Hamiltonian $H=X^2+P^2$ is well approximated in the low-energy sector of collective-spin oscillators~\cite{Hakioglu_2000}, which perfectly coincides with our required bulk condition. Candidate platforms include atomic ensembles~\cite{Opatrny_2017,Opatrny_2023} and nanophotonic waveguide arrays~\cite{Weimann_2016}.

In addition, the tunable attenuation assumed in the oracle model corresponds to calibrating the interaction time or coupling strength before the displacement is applied. These can be realized in optomechanical~\cite{Law_1995,Aspelmeyer_2014}, electro-optic~\cite{Tsang_2010,Tsang_2011,Ma_2011,Ma_2012,Chan_2013}, and superconducting systems~\cite{Makhlin_1999,Blais_2003,Averin_2003,Peropadre_2013,Baust_2015} through standard tunable-coupling mechanisms.

In the End Matter, we numerically verify the predicted scalings of the mean squared error and the initial probe energy ratio for ICO and DCO strategies. We also propose possible experimental realizations of the metrological task.

\textit{Acknowledgments.}\m
This work is supported by the National Natural Science Foundation of China via the Excellent Young Scientists Fund (Hong Kong and Macau) Project 12322516, the National Natural Science Foundation of China (NSFC)/Research Grants Council (RGC) Joint Research Scheme via Project N\_HKU7107/24, the Guangdong Provincial Quantum Science Strategic Initiative via Project GDZX2503001, and the Hong Kong
Research Grant Council (RGC) through the General Research Fund (GRF) Grant 17302724.  G.C. acknowledges support from the Chinese Ministry of Science and Technology (MOST) through grant 2023ZD0300600, from the Hong Kong Research Grants Council (RGC) through grants 17310725 and SRFS2021-7S02, and from the State Key Laboratory of Quantum Information Technologies and Materials, Chinese University of Hong Kong. Research at the Perimeter Institute is supported by the Government of Canada through the Department of Innovation, Science and Economic Development Canada and by the Province of Ontario through the Ministry of Research, Innovation and Science.

Y.H. and Z.-S. Li contributed equally and share first authorship.

\bibliographystyle{ultimate}
\bibliography{references}

\section*{End Matter}

\textit{Physical realization.}\m
The DGW framework describes a Gaussian wavepacket over $d$ discrete position states, with the discrete Fourier basis playing the role of momentum. A closely related $d$-dimensional finite oscillator is provided by the Kravchuk construction~\cite{Hakioglu_2000}. In the low-excitation, large-$d$ regime, its wavefunctions approach the Hermite--Gaussian eigenfunctions of the harmonic oscillator, while its appropriately rescaled position and momentum operators approach conjugate quadratures. Accordingly, DGW states and the quadratic Hamiltonian $H=X^2+P^2$ locally approximate their harmonic-oscillator counterparts. The bulk condition specifies the regime of validity by requiring the wavepacket to remain localized away from the edges of the phase-space chart and hence insensitive to periodic wraparound. A possible physical realization uses a large atomic ensemble polarized near a spin-coherent state, where two collective-spin components transverse to the mean polarization act, after rescaling, as oscillator position and momentum~\cite{Opatrny_2017,Opatrny_2023}. Closely related discrete fractional Fourier transformations have also been demonstrated in engineered arrays of coupled optical waveguides~\cite{Weimann_2016}. Alternatively, on a gate-based quantum processor, the Gaussian amplitude profile of a DGW state can be prepared efficiently in a qubit register~\cite{Kitaev_2008,Rattew_2021,Xie_2025}. The detailed correspondence between the DGW framework and the Kravchuk construction is given in Section~\ref{SI-sec:DGW_realization} of the Supplemental Material.

\begin{figure}[t]
    \centering
    \begin{overpic}[width=0.75\linewidth]{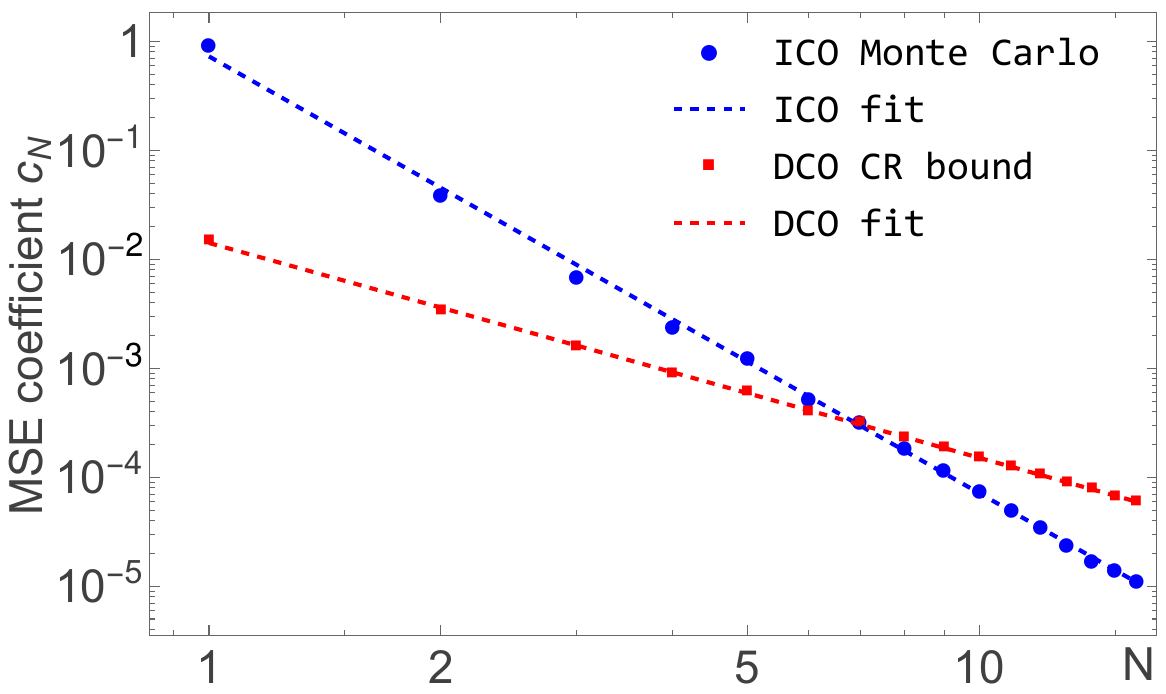}
        \put(0, 55){\textbf{(a)}}
    \end{overpic}

    \vspace{1em}

    \begin{overpic}[width=0.75\linewidth]{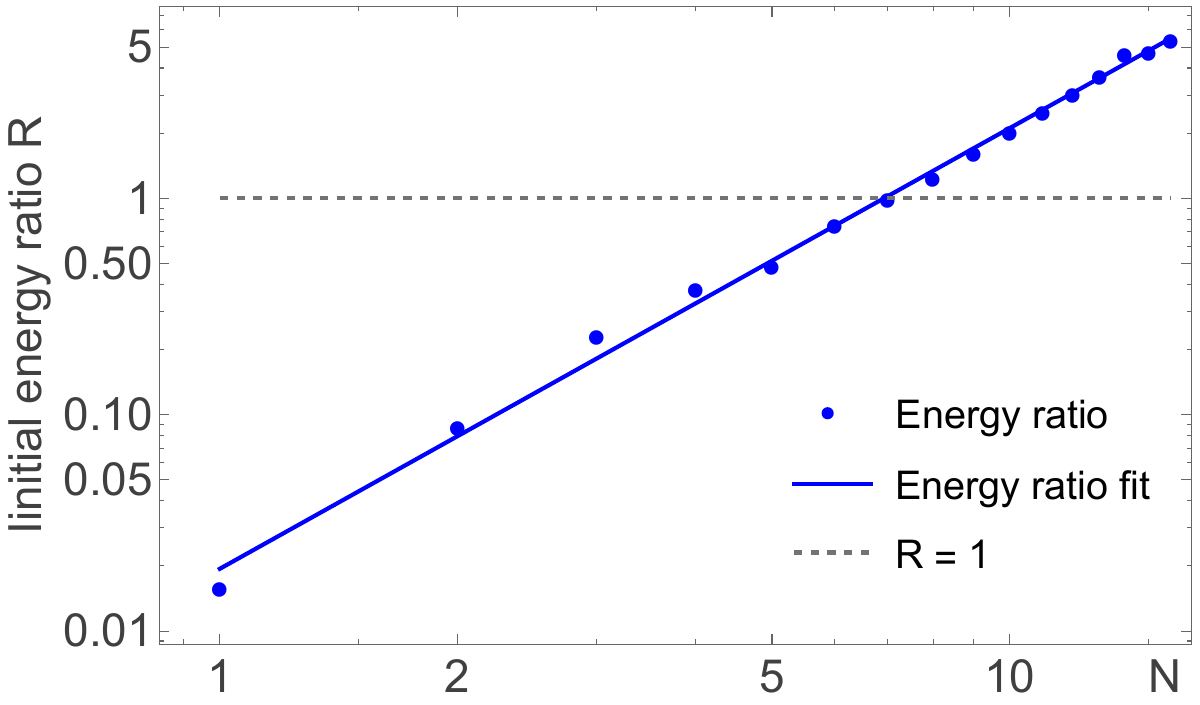}
        \put(0, 55){\textbf{(b)}}
    \end{overpic}

    \caption{\textbf{(a)} Scaling of the finite-sample MSE coefficient $c_N$, defined by $\MSE\simeq c_N/\nu$, with respect to $N$ for $d\simeq N^2$. For each $N$, the blue circle represents the ICO coefficient, extracted by fitting the simulated $\MSE$ against $\nu^{-1}$ and averaging over randomly generated sets of displacements in the range $[0.4,0.6]$. The red square represents the DCO coefficient, obtained by computing the Cram\'er--Rao lower bound at the same initial probe energy and averaging over the same sets of displacements  as the ICO strategy. The dashed lines are log--log power-law fits. The fitted slopes are $-4.0087$ for ICO and $-1.9724$ for DCO, and the fitted Pearson correlation coefficients are $-0.99928$ for ICO and $-0.99985$ for DCO, consistent with the $N^{-4}$ scaling for the ICO MSE and the $N^{-2}$ scaling for the DCO MSE, respectively. \textbf{(b)} Scaling of the ratio $R$, between the initial probe energies required for the DCO and ICO strategies to achieve the same mean squared error, with respect to $N$ for $d\simeq N^2$. Using the same parameters as in (a), we determine $R$ by first extracting the ICO coefficient, then matching the DCO coefficient by tuning the initial probe energy in the Cram\'er--Rao lower bound, and finally taking the ratio between the initial probe energies for the DCO and ICO strategies. The blue circles denote the ratio $R$, and the blue line is the log--log power-law fit. The fitted slope is $2.0363$ and the fitted Pearson correlation coefficient is $0.99797$, well reflecting the desired $N^2$ scaling. The dashed line indicates the reference ratio $R=1$. The fitted curves in both (a) and (b) exhibit a crossover near $N\simeq 7$, after which the ICO strategy outperforms the DCO strategy.}
    \label{fig:N_scaling}
\end{figure}

The attenuated oracle calls used in our estimator correspond operationally to applying the unknown displacement interaction for a shorter duration or with a weaker calibrated coupling. Thus, if the full interaction implements $G(\gamma)$, applying a known fraction $0<\lambda\leq1$ of it implements $G(\lambda\gamma)$ without requiring knowledge of $\gamma$. This assumes controllable access to the underlying interaction rather than only to an indivisible black-box unitary. Attenuation is used solely for phase unwrapping. Because the high-gain Ramsey fringe determines the amplified phase only modulo $2\pi$, the low-gain stages enlarge the fringe period and identify the correct phase branch before the final unattenuated stage provides the full local sensitivity. The required interaction control is available in several analog platforms. In cavity optomechanics, radiation pressure couples an optical cavity mode to a mechanical mode~\cite{Law_1995,Aspelmeyer_2014}, while external modulation of the cavity parameters or intermode coupling provides control over the effective interaction~\cite{Liao_2015,Liao_2016,Wang_2019}. In electro-optic systems, a driven nonlinear medium produces a beam-splitter-like interaction between electromagnetic modes~\cite{Tsang_2010,Tsang_2011}, whose mixing strength can be controlled through the optical drive and voltage applied to electro-optic modulators~\cite{Ma_2011,Ma_2012,Chan_2013}. Controllable interactions are likewise available in superconducting-circuit architectures~\cite{Makhlin_1999,Blais_2003,Averin_2003} where Josephson-junction or SQUID couplers allow the coupling between microwave resonators to be varied by an external magnetic flux~\cite{Peropadre_2013}, as demonstrated experimentally using a flux-qubit-mediated tunable coupler~\cite{Baust_2015}. Further platform-specific details are given in ~Section~\ref{SI-sec:attenuation} of the Supplemental Material.

\textit{Numerical verification.}\m
We numerically test the predicted finite-sample scaling using the exact finite-dimensional DGW dynamics for $N=1,\ldots,16$ and $d\simeq N^2$. For each $N$, we extract the ICO coefficient $c_N$, defined by $\MSE = c_N/\nu$, via a Monte Carlo method by fitting the $\MSE$ with respect to $\nu$. For comparison, we use a lower bound estimate of the DCO coefficient obtained by the Cram\'er--Rao bound at the same initial probe energy.  As shown in Fig.~\ref{fig:N_scaling}(a), we plot the resulting ICO coefficient versus $N$, scaling as $N^{-4.0087}$, as well as the resulting DCO coefficient versus $N$, scaling as $N^{-1.9724}$. These numerical results are in close agreement with the predicted $N^{-4}$ and $N^{-2}$ behaviors. Fig.~\ref{fig:N_scaling}(b) illustrates this coefficient separation in terms of the ratio $R_N$ between the initial probe energies required for the DCO and ICO strategies, which is extracted by tuning the initial probe energy so that the DCO coefficient matches the ICO coefficient. The fitted relation between $R_N$ and $N$ scales as $0.01941 N^{2.0363}$, crossing the equal-energy threshold near $N\simeq 7$ and subsequently grows approximately quadratically. These results numerically support the predicted increasing initial-energy separation along the family of finite-dimensional instances with $d\simeq N^2$.

\clearpage
\onecolumngrid

\arxivSupplementToctrue

\setcounter{section}{0}
\setcounter{subsection}{0}
\setcounter{subsubsection}{0}
\setcounter{equation}{0}
\setcounter{figure}{0}
\setcounter{table}{0}
\setcounter{theorem}{0}
\setcounter{definition}{0}
\setcounter{remark}{0}
\setcounter{algocf}{0}
\setcounter{secnumdepth}{2}

\renewcommand{\thesection}{S\arabic{section}}
\renewcommand{\theequation}{S\arabic{equation}}
\renewcommand{\thefigure}{S\arabic{figure}}
\renewcommand{\thetable}{S\arabic{table}}
\renewcommand{\thetheorem}{S\arabic{theorem}}
\renewcommand{\thealgocf}{S\arabic{algocf}}
\renewcommand{\thesubsection}{\thesection.\arabic{subsection}}

\renewcommand{\theHsection}{SI.\arabic{section}}
\renewcommand{\theHequation}{SI.\arabic{equation}}
\renewcommand{\theHfigure}{SI.\arabic{figure}}
\renewcommand{\theHtable}{SI.\arabic{table}}
\renewcommand{\theHtheorem}{SI.\arabic{theorem}}
\renewcommand{\theHdefinition}{SI.\arabic{definition}}
\renewcommand{\theHremark}{SI.\arabic{remark}}
\renewcommand{\theHsubsection}{\theHsection.\arabic{subsection}}
\renewcommand{\theHsubsubsection}{\theHsubsection.\arabic{subsubsection}}
\makeatletter
\renewcommand{\p@subsection}{}
\makeatother

\phantomsection
\pdfbookmark[0]{Supplemental Material}{supplemental-material}
\begin{center}
    {\large\bfseries Supplemental Material for}\\[0.5em]
    {\large\bfseries Unbounded separation between definite and indefinite causal order in finite-dimensional quantum metrology}
\end{center}

\tableofcontents

\section{Discrete Gaussian Wavepacket Framework}
\label{SI-sec:DGW_framework}

\subsection{Definition}
The DGW framework was initially proposed for quasi-ideal quantum clocks~\cite{Woods_2019}. Now we find its application in quantum metrology with super-Heisenberg scalings. 
\begin{definition}[Discrete Gaussian Wavepacket Framework]
\label{SI-def:DGW_framework}
    We consider a discrete phase space of odd dimension $d = 2m+1$. The normalized momentum and position operators are defined as:
    \begin{align}
        \hat{P} = \sqrt{\frac{2\pi}{d}} \sum_{k=-m}^m k \ketbra{p_k}{p_k}, \quad \hat{X} = \sqrt{\frac{2\pi}{d}} \sum_{n=-m}^m n \ketbra{x_n}{x_n}, 
    \end{align}
    with the discrete Fourier transform relating the mutually unbiased bases: 
    \begin{align}
        \ket{p_k}=\frac{1}{\sqrt{d}}\sum_{n=-m}^{m} e^{\iu \frac{2\pi}{d} n k} \ket{x_n}. 
    \end{align}
    The Hamiltonian is given by 
    \begin{align}
        \hat{H} = \hat{X}^2 + \hat{P}^2. 
    \end{align}
    The DGW state with width $\sigma \approx \sqrt{d}$ centered at $(n_0, k_0)$ is:
    \begin{align}
        \ket{\Psi(n_0,k_0)} = A \sum_{n\in \mathcal{S}_d(n_0)} e^{-\frac{\pi}{\sigma^2}(n-n_0)^2} e^{\iu\frac{2\pi }{d}k_0(n-n_0)}\ket{x_n},
    \end{align}
    where $\mathcal{S}_d(n_0) =\lbrace \lfloor n_0 \rfloor -m, \ldots, \lfloor n_0 \rfloor + m \rbrace$ is the physical support window and $A$ normalizes the state. 
\end{definition}

To quantify the breakdown of the approximation, we define the absolute distances of the state's position and momentum centers from the periodic boundaries $\pm m$:
\begin{definition}
    The absolute distances of the state's position and momentum centers from the boundaries $\pm m$ is 
    \begin{align}
        D_X(n_0) = m - |n_0|, \quad D_P(k_0) = m  - |k_0|.
    \end{align}
\end{definition}


\subsection{Normalization}

As a remark, the normalization factors $A$ for DGW states are approximately the same: 
\begin{lemma}[Normalization factors for adjacent wavepackets~{\cite[Theorem E.1]{Woods_2019}}]
    \label{SI-lem:close_norm}
    Let $\sigma^2 \approx d$. Let $\lfloor n_0\rfloor =\lfloor n_0'\rfloor$. Let normalization factors for $\ket{\Psi(n_0,k_0)}$ and $\ket{\Psi(n_0',k_0)}$ be, respectively,
    \begin{align}
        A(n_0) = \Big(\sum_{n\in\mathcal{S}_d(n_0)}|\psi(n_0,k_0;n)|^2 \Big)^{-\frac{1}{2}}, \quad A(n_0') = \Big(\sum_{n\in\mathcal{S}_d(n_0')}|\psi(n_0',k_0;n)|^2 \Big)^{-\frac{1}{2}}, 
    \end{align}
    We have 
    \begin{align}
        |A(n_0)-A(n_0')| \leq \mathcal{O}\big(\sigma^{-\frac{1}{2}} e^{-\frac{\pi \sigma^2}{2}} \big)+ \mathcal{O}\big(\sigma^{-\frac{3}{2}}e^{-\frac{\pi d^2}{2\sigma^2}}\big).
    \end{align}
\end{lemma}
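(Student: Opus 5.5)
We compare the two normalization sums directly. Since $\lfloor n_0\rfloor=\lfloor n_0'\rfloor$, both states share the same support window $\mathcal{S}_d(n_0)=\mathcal{S}_d(n_0')$, and the modulus $|\psi(n_0,k_0;n)|^2=e^{-\frac{2\pi}{\sigma^2}(n-n_0)^2}$ does not depend on $k_0$. Thus
\begin{align}
    A(n_0)^{-2}=\sum_{j=-m}^{m} e^{-\frac{2\pi}{\sigma^2}(j-\epsilon)^2}, \qquad A(n_0')^{-2}=\sum_{j=-m}^{m} e^{-\frac{2\pi}{\sigma^2}(j-\epsilon')^2},
\end{align}
where we substitute $j=n-\lfloor n_0\rfloor$ and write $\epsilon=n_0-\lfloor n_0\rfloor$, $\epsilon'=n_0'-\lfloor n_0'\rfloor$, both in $[0,1)$. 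The task therefore reduces to showing that $S(\epsilon):=\sum_{|j|\le m}e^{-\frac{2\pi}{\sigma^2}(j-\epsilon)^2}$ is essentially independent of $\epsilon$, up to exponentially small corrections.

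First, extend the sum to all $j\in\mathbb{Z}$. The discarded tail consists of $|j|>m\approx d/2$, and it is bounded by a geometric or Gaussian-integral estimate of order $\mathcal{O}(e^{-\frac{2\pi}{\sigma^2}m^2})=\mathcal{O}(e^{-\frac{\pi d^2}{2\sigma^2}})$, up to prefactors. Second, apply Poisson summation to the full lattice sum:
\begin{align}
    \sum_{j\in\mathbb{Z}}e^{-\frac{2\pi}{\sigma^2}(j-\epsilon)^2}=\frac{\sigma}{\sqrt2}\sum_{q\in\mathbb{Z}}e^{-\frac{\pi\sigma^2q^2}{2}}e^{-2\pi\iu q\epsilon}.
\end{align}
The $q=0$ term, $\sigma/\sqrt2$, does not depend on $\epsilon$. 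All $\epsilon$-dependence sits in the $q\neq0$ terms, which are bounded by $\mathcal{O}(\sigma e^{-\pi\sigma^2/2})$. Hence
\begin{align}
    S(\epsilon)=\frac{\sigma}{\sqrt2}\Big(1+\mathcal{O}(e^{-\pi\sigma^2/2})\Big)+\mathcal{O}(e^{-\pi d^2/(2\sigma^2)}).
\end{align}

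Third, pass from $S$ to $A=S^{-1/2}$. By the mean value theorem,
\begin{align}
    |S(\epsilon)^{-1/2}-S(\epsilon')^{-1/2}|\le\tfrac12\,\xi^{-3/2}\,|S(\epsilon)-S(\epsilon')|
\end{align}
for some $\xi$ between the two sums. Since $\xi\approx\sigma/\sqrt2$, the prefactor $\xi^{-3/2}$ is of order $\sigma^{-3/2}$.

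The main obstacle is tracking the polynomial prefactors in $\sigma$ so that they match the stated form. With $\sigma^2\approx d$, the Poisson-dual correction $\mathcal{O}(\sigma e^{-\pi\sigma^2/2})$ multiplied by $\sigma^{-3/2}$ gives $\sigma^{-1/2}e^{-\pi\sigma^2/2}$. The truncation tail requires more care. I would bound it by the Gaussian integral $\int_{m-1}^\infty e^{-2\pi (x-\epsilon)^2/\sigma^2}\,dx$, which is at most $\frac{\sigma^2}{4\pi(m-1)}e^{-2\pi(m-1)^2/\sigma^2}$. This is $\mathcal{O}(1)\cdot e^{-\pi d^2/(2\sigma^2)}$ after absorbing the constant shift $(m-1)$ versus $d/2$, which gives a factor $e^{\mathcal{O}(d/\sigma^2)}=\mathcal{O}(1)$. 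Multiplying by $\sigma^{-3/2}$ then yields the second term in the statement. This matches the estimate of \cite[Theorem E.1]{Woods_2019}, to which we can defer for the constants.
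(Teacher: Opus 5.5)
Your proposal is correct and follows essentially the same route as the paper. Both arguments use the identical support windows, cut off the Gaussian tail at distance about $m$, apply Poisson summation to isolate the $\epsilon$-independent term $\sigma/\sqrt{2}$ with $\mathcal{O}(\sigma e^{-\pi\sigma^2/2})$ aliasing corrections, and pick up a $\sigma^{-3/2}$ prefactor when passing from the sums to $A=S^{-1/2}$. The only differences are cosmetic: you use the mean value theorem where the paper uses the identity $g^{-1/2}-g'^{-1/2}=(g'-g)/\bigl(\sqrt{gg'}(\sqrt{g}+\sqrt{g'})\bigr)$, and a Gaussian-integral tail bound where the paper sums a geometric series. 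Your integral should start at $m$ rather than $m-1$ for the stated bound $\frac{\sigma^2}{4\pi(m-1)}e^{-2\pi(m-1)^2/\sigma^2}$ to hold despite the shift $\epsilon$, but either way only $e^{\mathcal{O}(1)}$ constants change.
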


\begin{proof}
    Since $\lfloor n_0 \rfloor = \lfloor n_0' \rfloor$, the discrete support sets defining the states are perfectly identical $\mathcal{S}_d(n_0) = \mathcal{S}_d(n_0') \equiv \mathcal{S}$. We can therefore write $g(x)$ for any $x$ sharing this floor as:
    \begin{align}
        g(x) = \sum_{n\in\mathcal{S}}|\psi(x,k_0;n)|^2 =\sum_{n \in \mathcal{S}} e^{-\frac{2\pi}{\sigma^2}(n-x)^2}.
    \end{align}
    Thus $A(n_0) = g(n_0)^{-1/2}$ and $A(n_0') = g(n_0')^{-1/2}$. We split this finite sum into an infinite sum over $\mathbb{Z}$ and a truncation tail:
    \begin{align}
        g(x) = \sum_{n=-\infty}^\infty e^{-\frac{2\pi}{\sigma^2}(n-x)^2} - \sum_{n \notin \mathcal{S}} e^{-\frac{2\pi}{\sigma^2}(n-x)^2}.
    \end{align}
    
    \noindent\textbf{Step 1: Bounding the truncation tail.}
    
    Let $m = (d-1)/2$. The set $\mathcal{S}$ is centered at $\lfloor n_0 \rfloor$ and spans $[\lfloor n_0 \rfloor - m, \dots, \lfloor n_0 \rfloor + m]$. 
    For any $n \notin \mathcal{S}$, the distance to the integer center is $|n - \lfloor n_0 \rfloor| \geq m + 1$.
    Because $x$ shares the same floor $\lfloor x \rfloor = \lfloor n_0 \rfloor$, its fractional part satisfies $x - \lfloor n_0 \rfloor \in [0, 1)$. 
    By the triangle inequality, the physical distance from the omitted lattice points to the continuous center $x$ is strictly bounded: 
    \begin{align}
        |n - x| \ge |n - \lfloor n_0 \rfloor| - |x - \lfloor n_0 \rfloor| \geq  m + 1 - 1 = m.
    \end{align}
    Thus, all omitted terms strictly satisfy $|n - x| \ge m$. 
    We bound the truncation error by a standard Gaussian tail sum:
    \begin{align}
        \epsilon_{\rm tail}(x) = \sum_{n \notin \mathcal{S}} e^{-\frac{2\pi}{\sigma^2}(n-x)^2} \leq 2 \sum_{n=m}^\infty e^{-\frac{2\pi}{\sigma^2} n^2} \leq 2 e^{-\frac{2\pi}{\sigma^2} m^2}  \sum_{k=0}^\infty e^{-\frac{4\pi}{\sigma^2} mk} \leq \frac{2 e^{-\frac{2\pi}{\sigma^2} m^2}}{1 - e^{-\frac{4\pi m}{\sigma^2}}} \leq O\left( e^{-\frac{\pi d^2}{2\sigma^2}} \right).
    \end{align}
    
    \noindent\textbf{Step 2: Evaluating the infinite sum via Poisson summation.}
    
    We apply the Poisson Summation Formula to the infinite sum:
    \begin{align}
        \sum_{n=-\infty}^\infty e^{-\frac{2\pi}{\sigma^2}(n-x)^2} = \frac{\sigma}{\sqrt{2}} \sum_{s=-\infty}^\infty e^{-\frac{\pi \sigma^2}{2} s^2} e^{\iu 2\pi s x}.
    \end{align}
    We separate the $s=0$ term from the higher-order aliasing terms ($s \neq 0$):
    \begin{align}
        \epsilon_{\mathrm{alias}}(x) = \frac{\sigma}{\sqrt{2}} \sum_{s\neq 0}e^{-\frac{\pi \sigma^2}{2} s^2} e^{\iu 2\pi s x}. 
    \end{align}
    The $s=0$ term yields exactly $\sigma/\sqrt{2}$, which is a constant strictly independent of $x$. The aliasing error is bounded uniformly:
    \begin{align}
        |\epsilon_{\mathrm{alias}}(x)| \leq \sqrt{2} \sigma \sum_{s=1}^\infty e^{-\frac{\pi \sigma^2}{2} s^2} \leq \sqrt{2} \sigma e^{-\frac{\pi \sigma^2}{2}} \sum_{s=0}^\infty e^{-\pi \sigma^2 s}  \leq \sqrt{2} \sigma \frac{e^{-\frac{\pi \sigma^2}{2}}}{1 - e^{-\pi \sigma^2}} \leq \mathcal{O}\big( \sigma e^{-\frac{\pi \sigma^2}{2}} \big).
    \end{align}
    Therefore, 
    \begin{align}
        \sum_{n=-\infty}^\infty e^{-\frac{2\pi}{\sigma^2}(n-x)^2} = \frac{\sigma}{\sqrt{2}} + \epsilon_{\mathrm{alias}}(x). 
    \end{align}

    \noindent\textbf{Step 3: Bounding the normalization difference.} \\
    Combining the bounds, for both $n_0$ and $n_0'$, the sum evaluates to:
    \begin{align}
        g(n_0) = \frac{\sigma}{\sqrt{2}} + \epsilon(n_0), \quad g(n_0') = \frac{\sigma}{\sqrt{2}} + \epsilon(n_0'),
    \end{align}
    where the total error is bounded by
    \begin{align}
        |\epsilon(x)| \leq |\epsilon_{\mathrm{tail}}(x)| + |\epsilon_{\mathrm{alias}}(x)| = \mathcal{O}\big(e^{-\frac{\pi d^2}{2\sigma^2}}\big) + \mathcal{O}\big(\sigma e^{-\frac{\pi \sigma^2}{2}}\big). 
    \end{align}    
    The difference in the sums is bounded via the triangle inequality. The constant term $\sigma/\sqrt{2}$ perfectly cancels:
    \begin{align}
        |g(n_0) - g(n_0')| = |\epsilon(n_0) - \epsilon(m_0)| \leq   |\epsilon(n_0)|+|\epsilon(m_0)|\leq \mathcal{O}\big(e^{-\frac{\pi d^2}{2\sigma^2}}\big) + \mathcal{O}\big(\sigma e^{-\frac{\pi \sigma^2}{2}}\big). 
    \end{align}
    We evaluate the absolute difference between the normalization factors using the algebraic identity:
    \begin{align}
        |A(n_0) - A(n_0')| = \left| \frac{1}{\sqrt{g(n_0)}} - \frac{1}{\sqrt{g(n_0')}} \right| = \frac{|g(n_0)- g(n_0')|}{\sqrt{g(n_0) g(n_0')} (\sqrt{g(n_0)} + \sqrt{g(n_0')})}.
    \end{align}
    Because the continuous integral term $\sigma/\sqrt{2}$ dominates $e(x)$, we get
    \begin{align}
        |A(n_0) - A(n_0')| \leq 2^{\frac{3}{4}} \sigma^{-\frac{3}{2}} |\epsilon(x)|  \leq \mathcal{O}\big(\sigma^{-\frac{1}{2}} e^{-\frac{\pi \sigma^2}{2}} \big)+ \mathcal{O}\big(\sigma^{-\frac{3}{2}}e^{-\frac{\pi d^2}{2\sigma^2}}\big).
    \end{align}
    Given the optimal width scaling $\sigma^2 \approx d$, both exponential arguments scale as $-\pi d/2$. This completes the proof.
\end{proof}

\section{Approximate Weyl relation}
\label{SI-sec:weyl_relation}

The DGW framework offers us the approximate Weyl relation, which mimics the continuous Lie algebra in the sub-exponential bulk. In this section, we aanlyze the continuous translation acting on the DGW state and provide a rigorous proof of the approximate Weyl relation. 

\subsection{Action of Arbitrary Phase-Space Displacement}
To analyze the action of arbitrary phase-sopace displacement, we apply the Trotterization and Poisson summation techniques from~\cite[Theorem 8.1]{Woods_2019} to arbitrary phase-space displacement generator $g(\gamma)= \gamma_p \hat{X} - \gamma_x \hat{P}$. Crucially, we explicitly track the Gaussian tail errors to demonstrate exactly how the approximation breaks down when the state's center approaches the boundary of the bulk. This leads to Lemmas~\ref{SI-lem:general_translation}. 

\begin{lemma}[Action of Arbitrary Phase-Space Displacement]
\label{SI-lem:general_translation}
Let $\sigma\approx \sqrt d$.  Let $\gamma=(\gamma_p,\gamma_x)\in\mathbb{R}^2$ and define the phase-space displacement coordinates $\Gamma=(\Gamma_p,\Gamma_x) = \sqrt{\frac{d}{2\pi}}(\gamma_p,\gamma_x) $. Assume $|\Gamma_p|,|\Gamma_x|\leq d$. Let $g(\gamma)=\gamma_p \hat{X} -\gamma_x \hat{P}$ and $G(\gamma) = e^{\iu g(\gamma)}$ for $\gamma =(\gamma_p,\gamma_x)$ and $\Gamma=(\Gamma_p,\Gamma_x)$. The action of phase-space displacement yields
\begin{align}
    G(\gamma) \ket{\Psi(n_0,k_0)} = e^{\iu \phi(\gamma;n_0,k_0)} \ket{\Psi(n_0+\Gamma_x,k_0+\Gamma_p)}  + \ket{\epsilon},
    \label{SI-eqn:general_translation_action}
\end{align}
where the phase shift is 
\begin{align}
    \phi(\gamma;n_0,k_0) = \sqrt{\frac{2\pi}{d}} \gamma_p n_0 +\frac{1}{2} \gamma_p \gamma_x
\end{align}
and the error vector is bounded by
\begin{align}
    \|\ket{\epsilon}\|_2 \leq \sqrt{d} (\epsilon_{\mathrm{pos},x} + \epsilon_{\mathrm{pos},p} + \epsilon_{\mathrm{alias}} + \epsilon_{\mathrm{mom}})  + (\lceil |\Gamma_x|\rceil+1) (\epsilon_{\mathrm{window}}+\epsilon_{\mathrm{norm}}).
    \label{SI-eqn:general_translation_error}
\end{align}
where the state-dependence error is
\begin{align}
    & \epsilon_{\mathrm{pos},x} = |\Gamma_p| \mathcal{O}\big(\poly(d)e^{-\frac{\pi}{\sigma^2}D_{X,*}^2 } \big), && \epsilon_{\mathrm{mom}} \leq |\Gamma_x| \mathcal{O}\big(\poly(d)e^{-\frac{\pi\sigma^2}{d^2}D_{P,*}^2 } \big), 
\end{align}
where $D_{X,*}$ and $D_{P,*}$ are the infimum distance from the center to the boundary in the $X$ axis and in the $P$ axis, respectively, and the structural errors are
\begin{align}
    & \epsilon_{\mathrm{pos},p} \leq  |\Gamma_x|  \mathcal{O}\big( e^{-\frac{\pi d^2}{4\sigma^2}} \big),  &&  \epsilon_{\mathrm{alias}}  \leq  |\Gamma_x|  \mathcal{O}\big(  \poly(d)e^{-\frac{\pi d^2}{4\sigma^2}} \big), \\
    & \epsilon_{\mathrm{window}}   \leq   \mathcal{O}\big(  e^{-\frac{\pi d^2}{4\sigma^2}}  \big),  &&  \epsilon_{\mathrm{norm}}  \leq  \mathcal{O}\big( e^{-\frac{\pi\sigma^2}{2}} \big) +  \mathcal{O}\big(  e^{-\frac{\pi d^2}{2\sigma^2}} \big).
\end{align}
\end{lemma}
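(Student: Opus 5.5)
The plan is to split the displacement into its two commuting-up-to-phase pieces by a symmetric Trotter decomposition, and to show that on DGW states each piece acts as an exact shift plus an exponentially small error. Concretely, with $g(\gamma)=\gamma_p\hat X-\gamma_x\hat P$, I would first establish the two elementary actions.

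First, $e^{\iu\gamma_p\hat X}$ is diagonal in $\{\ket{x_n}\}$. It multiplies the amplitude at $n$ by $e^{\iu\sqrt{2\pi/d}\,\gamma_p n}=e^{\iu\frac{2\pi}{d}\Gamma_p n}$. Writing $n=(n-n_0)+n_0$, this shifts the momentum label $k_0\to k_0+\Gamma_p$ exactly and produces the global phase $e^{\iu\sqrt{2\pi/d}\gamma_p n_0}$. The support window $\mathcal S_d(n_0)$ is untouched, so this step is exact.

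Second, $e^{-\iu\gamma_x\hat P}$ is diagonal in $\{\ket{p_k}\}$. To handle it, I would follow \cite[Theorem 8.1]{Woods_2019}:
\begin{itemize}
\item Expand the DGW state in the momentum basis. Its amplitudes are a periodized Gaussian, which I compute by Poisson summation.
\item Replace that periodized Gaussian by a single Gaussian centered at $k_0$. The neglected images are controlled by $D_P(k_0)$, giving the term $\epsilon_{\rm mom}\propto e^{-\pi\sigma^2 D_{P}^2/d^2}$.
\item Apply the multiplier $e^{-\iu\frac{2\pi}{d}\Gamma_x k}$ and transform back. Poisson summation again returns a Gaussian in $n$ centered at $n_0+\Gamma_x$, with the aliasing error $\epsilon_{\rm alias}$.
\end{itemize}
Two further corrections appear. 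The shifted Gaussian must be re-truncated to the new window $\mathcal S_d(n_0+\Gamma_x)$, which gives $\epsilon_{\rm window}$ and $\epsilon_{\rm pos,p}$. The normalization constant $A$ must also be replaced, using Lemma~\ref{SI-lem:close_norm}, which gives $\epsilon_{\rm norm}$. Since $\lfloor\cdot\rfloor$ may change, I would perform the $x$-shift in at most $\lceil|\Gamma_x|\rceil+1$ substeps, each of size below one. Within each substep the floor is constant and Lemma~\ref{SI-lem:close_norm} applies, which explains the prefactor $(\lceil|\Gamma_x|\rceil+1)$.

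Next I combine the two via a Trotter product. I would write $G(\gamma)=\lim_M\big(e^{\iu\gamma_p\hat X/M}e^{-\iu\gamma_x\hat P/M}\big)^M$ and apply the two elementary actions alternately, $M$ times. Each small step moves the center by $(\Gamma_x/M,\Gamma_p/M)$ and contributes the phase $\sqrt{2\pi/d}\,(\gamma_p/M)\,n^{(j)}$, where $n^{(j)}=n_0+j\Gamma_x/M$ is the current position center. The momentum shifts contribute no phase, since $k_0$ enters only through $(n-n_0)$. Summing over $j$ gives a Riemann sum that tends to
\[
\sqrt{2\pi/d}\,\gamma_p\big(n_0+\tfrac12\Gamma_x\big)=\sqrt{2\pi/d}\,\gamma_p n_0+\tfrac12\gamma_p\gamma_x,
\]
which is $\phi$.

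The errors accumulate additively by unitarity and the triangle inequality. To keep them from growing with $M$, I must not bound each of the $M$ steps separately. Instead I would use the exact identity that $e^{\iu\gamma_p\hat X/M}$ commutes with the truncated-Gaussian form, so only the $\hat P$ steps err. I would also take the per-step momentum error to be linear in the step size, of order $|\Gamma_x|/M$ times an exponentially small quantity. That linearity is what makes the sum over the $M$ steps give $|\Gamma_x|\,\epsilon$, with the worst-case boundary distances $D_{X,*},D_{P,*}$ along the path. The position-tail term $\epsilon_{\rm pos,x}$, carrying the factor $|\Gamma_p|$, arises the same way from the momentum kicks, when the Gaussian tails in $n$ touch the edge of the window. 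The $\sqrt d$ prefactors come from converting per-component bounds to $\ell_2$ norms over $d$ basis states.

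The main obstacle is this step-size linearity of the per-step $\hat P$ error. Proving it requires a first-order-in-$\Gamma_x/M$ expansion of the Poisson-summed kernel, and I expect to get it from the derivative of the aliasing and tail sums with respect to the shift.
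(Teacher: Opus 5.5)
Your route differs from the paper's. You use a Lie--Trotter splitting into $\hat X$ kicks and $\hat P$ shifts, with the phase obtained from a Riemann sum. The paper never splits $g(\gamma)$. Inside each block between integer crossings of $n_0+s\Gamma_x$, it propagates the unnormalized ideal path $e^{\iu\phi(t\gamma;n_0,k_0)}\ket{\Psi_0(n_t,k_t)}$, with the normalization frozen at $A^0_{\mathrm{norm}}$. It then shows by Poisson summation that $\iu g(\gamma)$ applied to this path equals its time derivative up to a residual $\ket{R(t)}$. Finally it integrates $\partial_t\ket{\Delta}=\iu g(\gamma)\ket{\Delta}+\ket{R}$ by Duhamel's formula and unitarity, charging $\epsilon_{\mathrm{window}},\epsilon_{\mathrm{norm}}$ only at the crossings. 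Linearity of the error in the elapsed time is then automatic.

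Your Riemann sum does give the correct $\phi$, but your first elementary action is wrong as stated. For $\lfloor n_0\rfloor\neq 0$ the window $\mathcal S_d(n_0)$ contains labels with $|n|>m$, i.e.\ basis vectors $\ket{x_{[n]_d}}$. On these $\hat X$ has eigenvalue $\sqrt{2\pi/d}\,(n\mp d)$, so the kick multiplies those amplitudes by an extra factor $e^{\mp 2\pi\iu\Gamma_p}\neq 1$ unless $\Gamma_p\in\mathbb Z$. This mismatch, weighted by the amplitude $\lesssim e^{-\pi D_X^2/\sigma^2}$ at the wrapped sites, is exactly where $\epsilon_{\mathrm{pos},x}$ and its $D_{X,*}$-dependence come from (the paper's Step~1, replacing $[l]_d$ by $l$). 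It is not a window-edge effect: the window edge sits at distance $\geq m$ from the center and only produces the structural $e^{-\pi d^2/(4\sigma^2)}$ terms. So ``only the $\hat P$ steps err'' is false. The per-kick error is at most $2\pi|\Gamma_p|/M$ times the wrapped amplitude, though, so it is linear in the step and sums correctly.

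The step you defer is where the whole proof lives, and the elementary $\hat P$-action you set up cannot deliver it. Your chain is: Poisson sum, drop images, multiply, transform back, re-truncate, renormalize. Bounded term by term, each of these errors is independent of the step size, so the bound over $M$ steps diverges as $M\to\infty$. You have to compare exact and ideal states directly, so that every error is a difference vanishing at zero shift. In momentum space, the exact multiplier $e^{-2\pi\iu\Gamma_x k/(dM)}$ and the ideal continuous one on the $s$-th image, $e^{-2\pi\iu\Gamma_x (k+sd)/(dM)}$, differ by $|1-e^{-2\pi\iu\Gamma_x s/M}|\leq 2\pi|\Gamma_x s|/M$. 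This is the finite-step version of the factor $s(p)d$ in the paper's $\epsilon_{\mathrm{mom}}$. The window-tail pieces additionally need a mean-value bound in the shift.

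Moreover, $\epsilon_{\mathrm{norm}}$ and $\epsilon_{\mathrm{window}}$ are not proportional to the step, so you cannot invoke Lemma~\ref{SI-lem:close_norm} at every Trotter step. The normalization must be frozen inside a block and corrected only at the integer crossings. Cut exactly there: substeps of length below one can still cross an integer. Once you carry out this derivative estimate you have essentially reproduced the paper's generator-level residual bound for the $\hat P$ part. Applying it directly to the full generator $g(\gamma)$ then makes the Trotter layer unnecessary.
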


\begin{remark}
    In particular, if $\sigma\approx \sqrt{d}$, and if the full general trajectory remains in the bulk,
    \begin{align}
        |n_0|, |n_0+\Gamma_x|\leq c\frac{d}{2}, \qquad  |k_0|, |k_0+\Gamma_p|\leq c\frac{d}{2}, 
    \end{align}
    for some constant $c\in(0,1)$, then
    \begin{align}
        \|\ket{\epsilon}\|_2 \leq \mathcal{O}\big( \poly(d)e^{-\frac{\pi}{4}(1-c)^2 d} \big). 
        \label{SI-eqn:general_bulk_error}
    \end{align}
    However, if the accumulated parameter shifts ($\Gamma_x,\Gamma_p$) force the wavepacket to hit the boundary ($c \to 1$, $D_X, D_P \to 0$), the exponential suppression of the error term is destroyed, and the approximate phase-space translation fails. 
\end{remark}

\begin{proof}
We track the continuous translation path. 
Without loss of generality, we analyze the case $\Gamma_x,\Gamma_p\geq0$; the other cases follow similarly. For $\Gamma_x>0$, let $0=s_0<s_1<\cdots<s_L=1$ be the ordered set consisting of $0$, $1$, and all $s\in(0,1)$ satisfying $n_0+s\Gamma_x\in\mathbb{Z}$, and set $t_i=s_{i+1}-s_i$; if $\Gamma_x=0$, take $L=1$ and $t_0=1$. The errors in all blocks are additive:
\begin{align}
    G(\gamma)\ket{\Psi(n_0,k_0)} & = \prod_{i=0}^{L-1}G(t_i\gamma)\ket{\Psi(n_0,k_0)}=\prod_{i=1}^{L-1}G(t_i\gamma) \big(e^{\iu\phi(t_0\gamma;n_0,k_0)}\ket{\Psi(n_1,k_1)}+\ket{\epsilon^{(0)}}\big)=\cdots\\
    & =e^{\iu\phi(\gamma;n_0,k_0)}\ket{\Psi(n_0+\Gamma_x,k_0+\Gamma_p)}+\sum_{j=0}^{L-1}e^{\iu\sum_{i=0}^{j-1}\phi(t_i\gamma;n_i,k_i)}\prod_{i=j+1}^{L-1}G(t_i\gamma)\ket{\epsilon^{(j)}},
\end{align}
where $t_i=s_{i+1}-s_i$, $n_i=n_0+s_i\Gamma_x=n_0+\sum_{j=0}^{i-1}t_j\Gamma_x$, and $k_i=k_0+s_i\Gamma_p=k_0+\sum_{j=0}^{i-1}t_j\Gamma_p$, with $\sum_{i=0}^{L-1}t_i=1$ and $L\leq\lceil\Gamma_x\rceil+1$; an empty sum is zero and an empty product is the identity. We will show that for each step
\begin{align}
    \phi(t_i\gamma;n_i,k_i)=\frac{2\pi}{d}t_i\Gamma_p n_i+\frac{\pi}{d}t_i^2\Gamma_p\Gamma_x .
\end{align}
and therefore for all steps
\begin{align}
    \phi(\gamma;n_0,k_0)&=\sum_{i=0}^{L-1}\phi(t_i\gamma;n_i,k_i)=\sum_{i=0}^{L-1}\frac{2\pi}{d}t_i\Gamma_p n_i+\sum_{i=0}^{L-1}\frac{\pi}{d}t_i^2\Gamma_p\Gamma_x\\
    &=\sum_{i=0}^{L-1}\frac{2\pi}{d}t_i\Gamma_p\left(n_0+\sum_{j=0}^{i-1}t_j\Gamma_x\right)+\sum_{i=0}^{L-1}\frac{\pi}{d}t_i^2\Gamma_p\Gamma_x\\
    &=\frac{2\pi}{d}\Gamma_p n_0+\sum_{i=0}^{L-1}\sum_{j=0}^{i-1}\frac{2\pi}{d}t_it_j\Gamma_p\Gamma_x+\sum_{i=0}^{L-1}\frac{\pi}{d}t_i^2\Gamma_p\Gamma_x\\
    &=\frac{2\pi}{d}\Gamma_p n_0+\frac{\pi}{d}\Gamma_p\Gamma_x\left(\sum_{i=0}^{L-1}t_i\right)^2\\
    &=\frac{2\pi}{d}\Gamma_p n_0+\frac{\pi}{d}\Gamma_p\Gamma_x=\sqrt{\frac{2\pi}{d}}\gamma_p n_0+\frac{1}{2}\gamma_p\gamma_x .
\end{align}
It is worth noting that, since every $G(t_i\gamma)$ is unitary and every phase has unit modulus,
\begin{align}
    \left\|\sum_{j=0}^{L-1}e^{\iu\sum_{i=0}^{j-1}\phi(t_i\gamma;n_i,k_i)} \prod_{i=j+1}^{L-1} G(t_i\gamma)\ket{\epsilon^{(j)}}\right\|_2\leq\sum_{j=0}^{L-1}\|\ket{\epsilon^{(j)}}\|_2 .
\end{align}
We analyze the first block, corresponding to $t\in[0,t_0]$; all other blocks follow similarly after replacing the initial center $(n_0,k_0)$ by the corresponding block center $(n_i,k_i)$. 
\begin{align}
    \ket{\Psi(n_0,k_0)} = A_{\mathrm{norm}}^0 \sum_{n\in\mathcal{S}_d(n_0)} \psi(n_0,k_0;n)\ket{x_n}, 
\end{align}
where 
\begin{align}
    \psi(n_0, k_0; n) = e^{-\frac{\pi}{\sigma^2}(n-n_0)^2} e^{\iu \frac{2\pi}{d} k_0(n-n_0)},
\end{align}
and 
\begin{align}
    A_{\mathrm{norm}}^0 = \Big(\sum_{n\in\mathcal{S}_d(n_0)} |\psi(n_0, k_0; n)|^2\Big)^{-\frac{1}{2}}. 
\end{align}
We also set the unnormalized state
\begin{align}
    \ket{\Psi_0(n_t,k_t)} = A_{\mathrm{norm}}^0 \sum_{n\in\mathcal{S}_d(n_t)} \psi(n_t,k_t;n)\ket{x_n}.  
\end{align}
We define the continuous Fourier transform as 
\begin{align}
    \mathcal{F}[f](k) = \frac{1}{\sqrt{d}} \int_{-\infty}^{\infty} f(x) e^{-\iu\frac{2\pi}{d}kx} \d x,
\end{align}
and its corresponding inverse continuous Fourier transform is defined as:
\begin{align}
    \mathcal{F}^{-1}[g](x) = \frac{1}{\sqrt{d}} \int_{-\infty}^{\infty} g(k) e^{\iu\frac{2\pi}{d}kx} \d k.
\end{align}
We denote the continuous Fourier transform of $\psi$ as
\begin{align}
    \tilde{\psi}(n_0,k_0;p) = \mathcal{F}[\psi](n_0,k_0;p) = \frac{\sigma}{\sqrt{d}} e^{-\iu \frac{2\pi}{d} p n_0} e^{-\frac{\pi\sigma^2}{d^2}(p-k_0)^2}. 
\end{align}

\noindent \textbf{Block One: Continuous Errors}

We will denote 
\begin{align}
    n_t = n_0 + t\Gamma_x, \qquad k_t =  k_0 + t \Gamma_p. 
\end{align}
In the $\ket{x_l}$ basis, the exact dynamically evolved components are:
\begin{align}
    c_l(t)  = \bra{x_l} G(t \gamma) \ket{\Psi(n_0, k_0)} .
\end{align}
We compare this against the ideal analytic shift components
\begin{align}
    c'_l(t) = A_{\mathrm{norm}}^0 e^{\iu \phi(t \gamma; n_0,k_0)}\psi(n_t, k_t; l). 
\end{align}
At $t = 0$, they are equal. We will prove 
\begin{align}
    \partial_y c_l(y)\big|_{y=t} \approx \partial_y c_l'(y)\big|_{y=t }, 
\end{align}
which subsequently yields via integration
\begin{align}
    c_l(t) \approx c_l'(t). 
\end{align}
We define
\begin{align}
    \ket{\Delta(t)} = G(t \gamma) \ket{\Psi(n_0, k_0)} - e^{\iu \phi(t\gamma;n_0,k_0)} \ket{\Psi_0(n_t,k_t)} ,
\end{align}
and 
\begin{align}
    \Delta_l(t) = \braket{x_l}{\Delta(t)} = c_l(t) - c_l'(t). 
\end{align}
The derivative of the exact state is:
\begin{align}
    \partial_y c_l(y)\big|_{y=t} & = \bra{x_l} \partial_t  G(t \gamma) \ket{\Psi(n_0, k_0)} = \iu  \bra{x_l} g(\gamma)  G(t \gamma) \ket{\Psi(n_0, k_0)} \\
    & =\iu \bra{x_l}  g(\gamma)  \ket{\Delta(t)} + \iu e^{\iu \phi(t\gamma;n_0,k_0)} \bra{x_l} g(\gamma)  \ket{\Psi_0(n_t,k_t)} ,
\end{align}
The derivative of the ideal state is 
\begin{align}
    \partial_y c_l'(y) \big|_{y=t} & = A_{\mathrm{norm}}^0 e^{\iu\phi(t\gamma;n_0,k_0)}  \Big(\frac{2\pi}{\sigma^2} \Gamma_x(l-n_t)+\iu \frac{2\pi}{d} (\Gamma_p l- \Gamma_x k_t)\Big) \psi(n_t,k_t;l). 
\end{align}
The difference between these two derivative is 
\begin{align}
    \bra{x_l} \partial \ket{\Delta(t)} = \partial_y c_l(y)\big|_{y=t} -\partial_y c_l'(y)\big|_{y=t}. 
\end{align}
To show that they are close, we need to evaluate each term. Specifically, 
\begin{align}
    \bra{x_l} g(\gamma)  \ket{\Psi_0(n_t,k_t)}  & =  \bra{x_l} (\gamma_p \hat{X} - \gamma_x \hat{P})  \ket{\Psi(n_t,k_t)}  . 
\end{align}
We will deal with the following two terms separately: 
\begin{align}
    \bra{x_l} \gamma_p \hat{X} \ket{\Psi(n_t,k_t)}  = A_{\mathrm{norm}}^0 \frac{2\pi}{d} \Gamma_p [l]_d  \psi(n_t,k_t;l) ,
\end{align}
and 
\begin{align}
    \bra{x_l} \gamma_x \hat{P} \ket{\Psi_0(n_t,k_t)} & = A_{\mathrm{norm}}^0 \frac{2\pi}{d}  \sum_{k=-m}^{m} \sum_{n\in\mathcal{S}_d(n_t)} \Gamma_x k \psi(n_t,k_t;n) \braket{x_l}{p_k}  \braket{p_k}{x_{n}}  \\
    & =  A_{\mathrm{norm}}^0 \frac{2\pi}{d^2}   \sum_{k=-m}^{m}   \Gamma_x k e^{\iu \frac{2\pi l k }{d}} \sum_{n\in\mathcal{S}_d(n_t)} \psi(n_t,k_t;n)e^{-\iu \frac{2\pi n k }{d}} . 
\end{align}
Note that as we are working in the same block before crossing the window, we have $\mathcal{S}_d(n_0)=\mathcal{S}_d(n_t)$ at this stage. 

\noindent\textbf{Step 1: Extend position operator.}

For $\bra{x_l} \gamma_p \hat{X} \ket{\Psi(n_t,k_t)}$, we extend the remainder $[l]_d$ in finite regime to the number $l$ in infinite regime, creating an error $\epsilon_{\mathrm{pos},x}$
\begin{align}
    \bra{x_l} \gamma_p \hat{X} \ket{\Psi_0(n_t,k_t)} = A_{\mathrm{norm}}^0 \frac{2\pi}{d} \Gamma_p l  \psi(n_t,k_t;l)  + \epsilon_{\mathrm{pos},x},
\end{align}
where 
\begin{align}
    \epsilon_{\mathrm{pos},x} = A_{\mathrm{norm}}^0 \frac{2\pi}{d} \Gamma_p ([l]_d - l)  \psi(n_t,k_t;l). 
\end{align}
Notice that $[l]_d=l$ for $l\in\mathcal{S}_d(0)$, 
\begin{align}
    |\epsilon_{\mathrm{pos},x} | \leq \Gamma_p \mathcal{O}\big( \poly(d) e^{-\frac{\pi}{\sigma^2}D_X(n_t)^2 }\big). 
\end{align}

\noindent\textbf{Step 2: Extend position sum.}

For $\bra{x_l} \gamma_x \hat{P} \ket{\Psi(n_t,k_t)}$, we also extend the sum over $n$ from $\mathcal{S}_d(n_t)$ to $\mathbb{Z}$, creating an error $\epsilon_{\mathrm{pos},p}$
\begin{align}
    \bra{x_l} \gamma_x \hat{P} \ket{\Psi(n_t,k_t)} = A_{\mathrm{norm}}^0 \frac{2\pi}{d^2}   \sum_{k=-m}^m \Gamma_x k e^{\iu \frac{2\pi k l}{d}} \sum_{n \in \mathbb{Z}} \psi(n_t, k_t; n) e^{-\iu \frac{2\pi k n}{d}} + \epsilon_{\mathrm{pos},p},
\end{align}
where 
\begin{align}
    \epsilon_{\mathrm{pos},p} = A_{\mathrm{norm}}^0 \frac{2\pi}{d^2}   \sum_{k=-m}^m \Gamma_x k e^{\iu \frac{2\pi k l}{d}} \sum_{n \notin \mathcal{S}_d(n_t) } \psi(n_t, k_t; n) e^{-\iu \frac{2\pi k n}{d}}. 
\end{align}
The set $\mathcal{S}_d(n_t)$ is strictly centered on $n_t$, meaning the omitted terms are $|n - n_t| > d/2$. 
\begin{align}
    |\epsilon_{\mathrm{pos},p}| \leq A_{\mathrm{norm}}^0\frac{2\pi}{d^2}  \sum_{k=-m}^m \Gamma_x |k |\sum_{n \notin \mathcal{S}_d(n_t)} |\psi(n_t,k_t;n)| \leq \Gamma_x \mathcal{O}\big(e^{-\frac{\pi d^2}{4\sigma^2}}\big).
\end{align}

\noindent\textbf{Step 3: Poisson summation.}

Now we apply Poisson summation to transform the infinite position sum to the infinite momentum sum 
\begin{align}
    \sum_{n \in \mathbb{Z}} \psi(n_t,k_t;n) e^{-\iu \frac{2\pi}{d} k n} = \sqrt{d} \sum_{s \in \mathbb{Z}} \tilde{\psi}(n_t,k_t;sd + k). 
\end{align}
Let $p = sd + k$. When summing over $k\in[-m,m]$ and $s\in\mathbb{Z}$, we equivalently sum over $p\in\mathbb{Z}$ by setting $k = p - s(p) d = p - \mathrm{round}(p/d)  d$. Substituting this:
\begin{align}
    \bra{x_l} \gamma_x \hat{P} \ket{\Psi_0(n_t,k_t)} = A_{\mathrm{norm}}^0 \frac{2\pi}{d^{\frac{3}{2}}} \sum_{p \in \mathbb{Z}} \Gamma_x (p - s(p) d) \tilde{\psi}(n_t,k_t;p) e^{\iu \frac{2\pi}{d} p l} e^{- \iu 2\pi s(p)l}  + \epsilon_{\mathrm{pos},p}.
\end{align}
Noting that $e^{\iu 2\pi s(p)l}=1$ as $s(p)\in \mathbb{Z}$, we split this into the ideal continuous derivative (involving $p$) and the boundary cutoff error $\epsilon_{\mathrm{mom}}$ (involving $s(p)d$):
\begin{align}
    \bra{x_l} \gamma_x \hat{P} \ket{\Psi_0(n_t,k_t)} = A_{\mathrm{norm}}^0 \frac{2\pi}{d^{\frac{3}{2}}} \sum_{p \in \mathbb{Z}} \Gamma_x p \tilde{\psi}(n_t,k_t;p) e^{\iu \frac{2\pi}{d} p l} + \epsilon_{\mathrm{mom}} +  \epsilon_{\mathrm{pos},p}.
\end{align}
where, using $s(p) = 0$ for $p\in\mathcal{S}_d(0)$, we replace the sum for $s(p)$ from over all integer momentum values to over integer momentum values outside the Brillouin zone
\begin{align}
    \epsilon_{\mathrm{mom}} = - A_{\mathrm{norm}}^0  \frac{2\pi}{d^{\frac{3}{2}}} \sum_{p\notin \mathcal{S}_d(0)} \Gamma_x s(p) d  \tilde{\psi}(n_t,k_t;p) e^{\iu \frac{2\pi}{d} p l} .
\end{align}
We bound $|s(p)| \leq \frac{|p|}{d} + \frac{1}{2}$. The distance from the center $k_t$ to the summation boundary $\pm m$ is precisely $D_P(k_t)$. Bounding the sum with a Gaussian integral yields the bound for $\epsilon_{\mathrm{mom}}$
\begin{align}
    |\epsilon_{\mathrm{mom}}| &  \leq  A_{\mathrm{norm}}^0  \frac{2\pi \sigma}{d^2} \sum_{p\notin\mathcal{S}_d(0)} \Gamma_x |s(p)|d  e^{-\frac{\pi \sigma^2}{d^2}(p-|k_t|)^2}  \leq A_{\mathrm{norm}}^0   \frac{4\pi \sigma}{d^2} \sum_{x = 1}^\infty \Gamma_x  (x+d)  e^{-\frac{\pi \sigma^2}{d^2} (x+D_P(k_t))^2}  \\
    & \leq  A_{\mathrm{norm}}^0  \frac{4\pi \sigma}{d^2}  e^{-\frac{\pi \sigma^2}{d^2} D_P(k_t)^2} \sum_{x = 1}^\infty \Gamma_x  (x + d) e^{-\frac{2\pi \sigma^2}{d^2} D_P(k_t) x}  \leq \Gamma_x \mathcal{O}\big(\poly(d)  e^{-\frac{\pi \sigma^2}{d^2}D_P(k_t)^2}\big).
\end{align}

\noindent\textbf{Step 4: Inverse Poisson summation.}

The Fourier transform of the derivative of a function is exactly the Fourier transform of the function multiplied by the variable. Therefore, 
\begin{align}
    \frac{\iu 2\pi}{d} p \tilde{\psi}(n_t,k_t;p) =  \mathcal{F}[\partial_y \psi](n_t,k_t;\tilde{y})\big|_{\tilde{y}=p}. 
\end{align}
Applying Poisson summation inversely yields
\begin{align}
    \sum_{p \in \mathbb{Z}} \mathcal{F}[\partial_y \psi](n_t,k_t;\tilde{y})\big|_{\tilde{y}=p} e^{\iu \frac{2\pi}{d} p l} = \sqrt{d} \sum_{ r \in \mathbb{Z}} [\partial_y \psi](n_t,k_t;y)\big|_{y=rd+l} . 
\end{align}
Substituting into $\bra{x_l} \gamma_x \hat{P} \ket{\Psi(n_t,k_t)}$, we obtain
\begin{align}
    \bra{x_l} \gamma_x \hat{P}\ket{\Psi(n_t,k_t)}= - \iu A_{\mathrm{norm}}^0  \Gamma_x\sum_{r\in\mathbb{Z}}  [\partial_y \psi](n_t,k_t;y)\big|_{y=rd+l} + \epsilon_{\mathrm{mom}} +  \epsilon_{\mathrm{pos},p}. 
\end{align}
$r=0$ gives the exact target derivative 
\begin{align}
    - \iu A_{\mathrm{norm}}^0  \Gamma_x [\partial_y \psi](n_t,k_t;y)\big|_{y=l} = A_{\mathrm{norm}}^0 \Big(\iu \frac{2\pi}{\sigma^2}\Gamma_x (l-n_t) + \frac{2\pi}{d} \Gamma_x  k_t \Big) \psi(n_t,k_t;l) .
\end{align}
For $r \neq 0$, the distance $|l +  rd - n_t| \geq (d-1)/2$ because $l \in \mathcal{S}_d(n_t)$. This gives the position tail error
\begin{align}
    \epsilon_{\mathrm{alias}} =  A_{\mathrm{norm}}^0 \Gamma_x \sum_{ r\neq 0} [\partial_y \psi](n_t,k_t;y)\big|_{y=rd+l},
\end{align}
such that 
\begin{align}
    |\epsilon_{\mathrm{alias}} |\leq  \Gamma_x \mathcal{O}\big(\poly(d)e^{-\frac{\pi d^2}{ 4\sigma^2}}\big).
\end{align}
Therefore, we obtain
\begin{align}
    \bra{x_l} \gamma_x \hat{P} \ket{\Psi_0(n_t,k_t)} = A_{\mathrm{norm}}^0 \Big(\iu \frac{2\pi}{\sigma^2} \Gamma_x (l-n_t) + \frac{2\pi}{d} \Gamma_x  k_t \Big) \psi(n_t,k_t;l)  + \epsilon_{\mathrm{alias}}  + \epsilon_{\mathrm{mom}} +  \epsilon_{\mathrm{pos},p}.
\end{align}

\noindent\textbf{Step 5: Integration.}

Summarizing the previous steps, we obtain 
\begin{align}
    \partial_y c_l'(y) \big|_{y=t} & = A_{\mathrm{norm}}^0 e^{\iu\phi(t\gamma;n_0,k_0)}  \Big(\frac{2\pi}{\sigma^2} \Gamma_x(l-n_t)+\iu \frac{2\pi}{d} (\Gamma_p l- \Gamma_x k_t)\Big) \psi(n_t,k_t;l),
\end{align}
and 
\begin{align}
    \partial_y c_l(y)\big|_{y=t}  & = A_{\mathrm{norm}}^0 e^{\iu \phi(t\gamma;n_0,k_0)} \Big(\frac{2\pi}{\sigma^2} \Gamma_x(l-n_t)+\iu \frac{2\pi}{d} (\Gamma_p l - \Gamma_x k_t)\Big) \psi(n_t,k_t;l)  \\
    & \quad + \iu \bra{x_l}  g(\gamma) \ket{\Delta(t)}  + \iu e^{\iu \phi(t\gamma;n_0,k_0)} (\epsilon_{\mathrm{pos},x} - \epsilon_{\mathrm{pos},p} - \epsilon_{\mathrm{mom}} - \epsilon_{\mathrm{alias}}) .
\end{align}
We see that 
\begin{align}
    \bra{x_l} \partial_t  \ket{\Delta(t)} =\iu \bra{x_l}  g(\gamma)\ket{\Delta(t)}  + r_l(t)
\end{align}
where 
\begin{align}
    r_l(t) = \iu e^{\iu \phi(t\gamma;n_0,k_0)} (\epsilon_{\mathrm{pos},x} - \epsilon_{\mathrm{pos},p} - \epsilon_{\mathrm{mom}}  -\epsilon_{\mathrm{alias}})
\end{align}
and 
\begin{align}
     |r_l(t)| = e(t) = \mathcal{O}\Big(\poly(d)\big(\Gamma_p e^{-\frac{\pi}{\sigma^2}D_X(n_t)^2} +  \Gamma_x e^{-\frac{\pi \sigma^2}{d^2}D_P(k_t)^2} + \Gamma_x e^{-\frac{\pi d^2}{4\sigma^2}}  \big)  \Big).
\end{align}
Define the residual vector
\begin{align}
    \ket{R(t)}=\sum_{l=-m}^{m} r_l(t)\ket{x_l}.
\end{align}
where 
\begin{align}
    \|\ket{R(t)}\|_2 \leq \sqrt{d} \max_l |r_l(t)| \leq \sqrt{d} e(t).
\end{align}
Therefore
\begin{align}
    \partial_t \ket{\Delta(t)} = \iu g(\gamma)\ket{\Delta(t)} + \ket{R(t)}
\end{align}
Because $g(\gamma)$ is Hermitian, $G(t\gamma)$ is unitary. Solving this homogeneous linear differential equation yields 
\begin{align}
    \ket{\Delta(t)} = G(t\gamma)\ket{\Delta(0)} + \int_0^t G((t-s)\gamma) \ket{R(s)}\d s.
\end{align}
Taking norms and using unitarity,
\begin{align}
    \|\ket{\Delta(t)}\|_2 \leq \|\ket{\Delta(0)}\|_2 + \int_0^t \|\ket{R(s)}\|_2\d s  \leq  \|\ket{\Delta(0)}\|_2 + \sqrt{d} t \sup_{s\in(0,t)} e(s) .
\end{align}
It is obvious that the initial approximation is exact, i.e. $\|\ket{\Delta(0)}\|_2=0$. Let 
\begin{align}
    D_{X,*}=\inf_{s\in (0,t)} D_X(n_s), \qquad D_{P,*} = \inf_{s\in(0,t)} D_P(k_s). 
\end{align}
Therefore 
\begin{align}
    \|\ket{\Delta(t)}\|_2  \leq   t  \mathcal{O}\Big(\poly(d)\big(\Gamma_p e^{-\frac{\pi}{\sigma^2}D_{X,*}^2} +  \Gamma_x e^{-\frac{\pi \sigma^2}{d^2}D_{P,*}^2} + \Gamma_x e^{-\frac{\pi d^2}{4\sigma^2}}  \big)  \Big) .
\end{align}
Note here we assume that $t$ are sufficiently small such that the center remains in the same block. 

\noindent \textbf{Block Two: Jump Errors}

When the center of the state crosses an integer at $t=T$, the window changes and the sum shifts to a different set of points, which leads to the window jump error. As the center of the state changes, the normalization factor also changes, which leads to the normalization error.

\noindent\textbf{Step 6: Window jumps.}

Whenever the increasing center $n_T$ crosses the next integer, the physical support window jumps from $\mathcal{S}_d(n_0)$ to $\mathcal{S}_d(n_0+1)$, thereby replacing one edge component and re-normalizing the full state. That is, the unnormalized state represented by $c_l(T^-)$ before the jump is
\begin{align}
    \ket{\Psi^-}=A_{\mathrm{norm}}^0\sum_{l\in\mathcal{S}_d(n_0)}\psi(n_T,k_T;l)\ket{x_l},
\end{align}
while the unnormalized state represented by $c_l(T^+)$ after the jump is
\begin{align}
    \ket{\Psi^+}=A_{\mathrm{norm}}^0\sum_{l\in\mathcal{S}_d(n_0+1)}\psi(n_T,k_T;l)\ket{x_l},
\end{align}
where $A_{\mathrm{norm}}^0$ is the normalization factor not only for the DGW state $\ket{\Psi(n_0,k_0)}$ but also for the integer-translated DGW state $\ket{\Psi(n_0+1,k_0)}$:
\begin{align}
    A_{\mathrm{norm}}^0=\left(\sum_{n\in\mathcal{S}_d(n_0)}|\psi(n_0,k_0;n)|^2\right)^{-\frac12}=\left(\sum_{n\in\mathcal{S}_d(n_0+1)}|\psi(n_0+1,k_0;n)|^2\right)^{-\frac12}.
\end{align}
This induces a truncation jump error. The support loses its lower-edge component and gains a new upper-edge component, so the step error is bounded by the sum of these two edge terms:
\begin{align}
    \epsilon_{\mathrm{window}}&\leq\|\ket{\Psi^-}-\ket{\Psi^+}\|_2\leq\|\ket{\Psi^-}-\ket{\Psi^+}\|_1\\
    &\leq A_{\mathrm{norm}}^0|\psi(n_T,k_T;n_T-m-1)|+A_{\mathrm{norm}}^0|\psi(n_T,k_T;n_T+m)|\leq\mathcal{O}\!\left(e^{-\frac{\pi d^2}{4\sigma^2}}\right).
\end{align}

\noindent \textbf{Step 7: Normalization. }

The normalized target analytic state after the jump is
\begin{align}
    \ket{\Psi(n_T,k_T)}=A_{\mathrm{norm}}^T\sum_{l\in\mathcal{S}_d(n_0+1)}\psi(n_T,k_T;l)\ket{x_l},
\end{align}
where $A_{\mathrm{norm}}^T$ is the normalization factor for the DGW state $\ket{\Psi(n_T,k_T)}$:
\begin{align}
    A_{\mathrm{norm}}^T=\left(\sum_{n\in\mathcal{S}_d(n_0+1)}|\psi(n_T,k_T;n)|^2\right)^{-\frac12}.
\end{align}
The change of the normalization factor induces a normalization error. The centers $n_0+1$ and $n_T$ have the same support set $\mathcal{S}_d(n_0+1)$, so their normalization factors are exponentially close:
\begin{align}
    \epsilon_{\mathrm{norm}}\leq\|\ket{\Psi(n_T,k_T)}-\ket{\Psi^+}\|_2\leq|A_{\mathrm{norm}}^T-A_{\mathrm{norm}}^0|\leq\mathcal{O}\!\left(e^{-\frac{\pi\sigma^2}{2}}\right)+\mathcal{O}\!\left(e^{-\frac{\pi d^2}{2\sigma^2}}\right).
\end{align}

All these continuous errors integrates over these intervals and all these additive jump errors accumulate over the $L\leq\lceil|\Gamma_x|\rceil+1$ blocks. Therefore, the final error is at most 
\begin{align}
    \|\ket{\epsilon}\|_2 \leq  \sqrt{d} (\epsilon_{\mathrm{alias}} + \epsilon_{\mathrm{mom}} +\epsilon_{\mathrm{pos},x} + \epsilon_{\mathrm{pos},p})  + ( \lceil|\Gamma_x| \rceil +1)(\epsilon_{\mathrm{window}} + \epsilon_{\mathrm{norm}}),
\end{align}
which completes the proof. 
\end{proof}

\subsection{Approximate Weyl relation}
Based on Lemma~\ref{SI-lem:general_translation}, we are ready to present the approximate Weyl relation which is the core theorem for our quantum metrology with indefinite causal order. 
\begin{theorem}[Approximate Weyl Relation for Arbitrary Phase-Space Displacement]
\label{SI-thm:general_weyl_relation}
Let $\sigma\approx \sqrt{d}$.  Let $\alpha=(\alpha_p,\alpha_x)$ and $\beta=(\beta_p,\beta_x)$ and define the scaled phase-space displacements $A = (A_p,A_x) = \sqrt{\frac{d}{2\pi}}(\alpha_p,\alpha_x)$ and $B = (B_p, B_x) = \sqrt{\frac{d}{2\pi}}(\beta_p,\beta_x)$. Assume that $|A_p|, |A_x|,|B_p|,|B_x|\leq d$. Let $g(\gamma) = \gamma_p\hat{X} - \gamma_x\hat{P}$ and $G(\gamma)=e^{\iu g(\gamma)}$ for a general pair $\gamma =(\gamma_p,\gamma_x)$ and a displacement $\Gamma = (\Gamma_p, \Gamma_x)$. Then 
\begin{align}
    G(\beta) G(\alpha) \ket{\Psi(n_0,k_0)} = e^{\iu\omega(\alpha,\beta)} G(\alpha) G(\beta) \ket{\Psi(n_0,k_0)} + \ket{\delta_{\alpha,\beta}},
    \label{SI-eqn:general_weyl_relation}
\end{align}
where the phase $\omega(\alpha,\beta)$ is exactly the area enclosed by the two phase-space displacement paths
\begin{align}
    \omega(\alpha,\beta) =  \alpha_x\beta_p - \alpha_p\beta_x = \frac{2\pi}{d}(A_xB_p - A_pB_x).
    \label{SI-eqn:general_phase}
\end{align}
The total error vector is bounded by
\begin{align}
    \|\ket{\delta_{\alpha,\beta}}\|_2  \leq   \mathcal{O}\Big(\poly(d) \big(|\Gamma_p^*| e^{-\frac{\pi}{\sigma^2}D_{X,*}^2 }  + |\Gamma_x^*|(e^{-\frac{\pi\sigma^2}{d^2}D_{P,*}^2 }  +  e^{-\frac{\pi\sigma^2}{2}}  +e^{-\frac{\pi d^2}{4\sigma^2}} \big) \Big)
\end{align}
where $\Gamma_x^*= |A_x|+|B_x| $ and $\Gamma_p^*= |A_p|+|B_p| $ are the cumulative displacement in the trajectory, $D_{X,*}$ and $D_{P,*}$ are the minimum of the distance to the boundary in the $X$ parameter and in the $P$ parameter in the trajectory from $(n_0,k_0)$ to $(n_0+A_x+B_x, k_0+A_p+B_p)$. 
\end{theorem}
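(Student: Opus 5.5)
We apply Lemma~\ref{SI-lem:general_translation} twice along each of the two orderings and compare the resulting phases. For the order ``first $\alpha$, then $\beta$'', the lemma at center $(n_0,k_0)$ gives
\begin{align}
G(\alpha)\ket{\Psi(n_0,k_0)} = e^{\iu\phi(\alpha;n_0,k_0)}\ket{\Psi(n_0+A_x,k_0+A_p)} + \ket{\epsilon_1}.
\end{align}
We then apply $G(\beta)$. Since $G(\beta)$ is unitary, the error $G(\beta)\ket{\epsilon_1}$ keeps the norm $\|\ket{\epsilon_1}\|_2$. On the main term we invoke the lemma again at center $(n_0+A_x,k_0+A_p)$, which yields
\begin{align}
G(\beta)G(\alpha)\ket{\Psi(n_0,k_0)} = e^{\iu[\phi(\alpha;n_0,k_0)+\phi(\beta;n_0+A_x,k_0+A_p)]}\ket{\Psi(n_0+A_x+B_x,k_0+A_p+B_p)}+\ket{\epsilon_1'}+\ket{\epsilon_2}.
\end{align}
The opposite order is treated symmetrically. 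It gives the same final DGW state $\ket{\Psi(n_0+A_x+B_x,k_0+A_p+B_p)}$, now with total phase $\phi(\beta;n_0,k_0)+\phi(\alpha;n_0+B_x,k_0+B_p)$ and two error vectors $\ket{\epsilon_3},\ket{\epsilon_4}$.

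The key algebraic step is the phase difference, computed from the explicit formula $\phi(\gamma;n,k)=\frac{2\pi}{d}\Gamma_p n+\frac{\pi}{d}\Gamma_p\Gamma_x$. The quadratic self-terms $\frac{\pi}{d}A_pA_x$ and $\frac{\pi}{d}B_pB_x$ appear in both orders and cancel. The terms linear in $n_0$ also cancel. What remains is
\begin{align}
\frac{2\pi}{d}B_pA_x-\frac{2\pi}{d}A_pB_x=\frac{2\pi}{d}(A_xB_p-A_pB_x)=\omega(\alpha,\beta),
\end{align}
which is exactly \eqref{SI-eqn:general_phase}. Notably, the phase is independent of $k_0$, so no momentum-dependent phase needs tracking.

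Combining the two expansions, we define
\begin{align}
\ket{\delta_{\alpha,\beta}}:=(\ket{\epsilon_1'}+\ket{\epsilon_2})-e^{\iu\omega}(\ket{\epsilon_3'}+\ket{\epsilon_4}),
\end{align}
which gives \eqref{SI-eqn:general_weyl_relation}. By the triangle inequality and unitarity, $\|\ket{\delta_{\alpha,\beta}}\|_2\le\sum_{i=1}^4\|\ket{\epsilon_i}\|_2$.

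It remains to bound each $\|\ket{\epsilon_i}\|_2$ using \eqref{SI-eqn:general_translation_error}, and this is where the main care lies.
\begin{itemize}
\item The infimum distances appearing in each application of the lemma are taken over a sub-segment of one of the two trajectories. Each is therefore at least the global minima $D_{X,*},D_{P,*}$ over both trajectories.
\item The prefactors $|\Gamma_p|$ and $|\Gamma_x|$ of the individual legs are bounded by the cumulative $\Gamma_p^*=|A_p|+|B_p|$ and $\Gamma_x^*=|A_x|+|B_x|$.
\item The block count $\lceil|\Gamma_x|\rceil+1$ is bounded by $\Gamma_x^*+2$. Since $|A_x|,|B_x|\le d$, both this count and the $\sqrt d$ prefactor are absorbed into $\poly(d)$.
\item The structural errors $\epsilon_{\mathrm{pos},p},\epsilon_{\mathrm{alias}},\epsilon_{\mathrm{window}}$ contribute $e^{-\pi d^2/4\sigma^2}$.
\item $\epsilon_{\mathrm{norm}}$ contributes $e^{-\pi\sigma^2/2}$ and $e^{-\pi d^2/2\sigma^2}$; the latter is dominated by $e^{-\pi d^2/4\sigma^2}$.
\end{itemize}
Window and normalization errors carry no $|\Gamma|$ prefactor of their own, only the block count; this count is either bounded by a multiple of $\Gamma_x^*$ or absorbed into $\poly(d)$.

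One point needs checking before the lemma can be applied a second time. The second application starts from the exact normalized DGW state at the intermediate center, which may be non-integer. The lemma allows arbitrary real centers, so it applies there. Summing the four bounds then gives the stated estimate.
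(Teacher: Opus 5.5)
Your proposal is correct and follows the paper's proof essentially step for step. You apply Lemma~\ref{SI-lem:general_translation} twice per ordering, cancel the $n_0$-linear terms and the self-terms in the phase difference to leave $\omega(\alpha,\beta)=\frac{2\pi}{d}(A_xB_p-A_pB_x)$, and bound the remainder by the four leg errors via the triangle inequality, taking the minimum boundary distances over both trajectories. One small caveat, which the paper shares: the window and normalization jumps inherit the $|\Gamma_x^*|$ prefactor only when $\Gamma_x^*\geq 1$ (or when $\Gamma_x^*=0$, where no jumps occur). For $0<\Gamma_x^*<1$, a single integer crossing costs $\mathcal{O}(e^{-\pi d^2/4\sigma^2})$ however small the displacement, so absorbing the block count into $\poly(d)$ in that regime drops the stated prefactor.
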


\begin{remark}[Mechanism of Boundary Breakdown]
    The error terms dictate precisely when the approximation breaks down. In particular, suppose that $\sigma\approx\sqrt{d}$, and if the full general trajectory remains in the bulk, 
    \begin{align}
        & |n_0|,|n_0+A_x|, |n_0+B_x| , |n_0+A_x+B_x|\leq c \frac{d}{2}, \\
        & |k_0|, |k_0+A_p|, |k_0+ B_p|, |k_0+A_p+B_p|\leq c \frac{d}{2}. 
    \end{align}
    for some constant $c\in(0,1)$. Then
    \begin{align}
        \|\ket{\delta_{\alpha,\beta}}\|_2 \leq \mathcal{O}\big(\poly(d)e^{-\frac{\pi}{4}(1-c)^2d}\big),
        \label{SI-eqn:general_weyl_error}
    \end{align}
    However, if the accumulated parameter shifts at any intermediate time ($A_x,B_x,A_x+B_x,A_p,B_p,A_p+B_p$) force the wavepacket to hit the boundary ($c \to 1$, $D_X,D_P \to 0$), the exponential suppression of the error term is destroyed, and the approximate Weyl relation fails. 
\end{remark}

\begin{proof}
We use Lemma~\ref{SI-lem:general_translation} twice, once for each ordered path. When we first displace $\alpha$ then $\beta$, from Eq.\eqref{SI-eqn:general_translation_action}, we obtain
\begin{align}
    G(\beta)G(\alpha) \ket{\Psi(n_0,k_0)} & = G(\beta) (e^{\iu \phi(\alpha;n_0,k_0)} \ket{\Psi(n_0+A_x,k_0+A_p)} +\ket{\epsilon(\alpha,n_0,k_0)} ) \\
    & = e^{\iu( \phi(\alpha;n_0,k_0) + \phi(\beta;n_0+A_x,k_0+A_p) )}   \ket{\Psi(n_0+A_x+B_x,k_0+A_p+B_p)} \\
    & \quad +G(\beta) \ket{\epsilon(\alpha,n_0,k_0)}+e^{\iu \phi(\alpha;n_0,k_0)} \ket{\epsilon(\beta,n_0+A_x,k_0+A_p)}
\end{align}
and 
\begin{align}
    G(\alpha) G(\beta) \ket{\Psi(n_0,k_0)} & = G(\alpha) (e^{\iu \phi(\beta;n_0,k_0)} \ket{\Psi(n_0+B_x,k_0+B_p)} +\ket{\epsilon(\beta,n_0,k_0)} ) \\
    & = e^{\iu( \phi(\beta;n_0,k_0) + \phi(\alpha;n_0+B_x,k_0+B_p) )}   \ket{\Psi(n_0+A_x+B_x,k_0+A_p+B_p)} \\
    & \quad + G(\alpha) \ket{\epsilon(\beta,n_0,k_0)}+e^{\iu \phi(\beta;n_0,k_0)} \ket{\epsilon(\alpha,n_0+B_x,k_0+B_p)}
\end{align}
where for a general pair $\gamma$ starting from $(n,k)$, we define
\begin{align}
    \phi(\gamma;n,k) = \sqrt{\frac{2\pi}{d}} \gamma_p n +\frac{1}{2} \gamma_p \gamma_x 
\end{align}
and 
\begin{align}
    \|\ket{\epsilon(\gamma;n,k)}\|_2 \leq \mathcal{O}\Big(\poly(d) \big(|\Gamma_p| e^{-\frac{\pi}{\sigma^2}D_{X,*}^2 }  + |\Gamma_x|(e^{-\frac{\pi\sigma^2}{d^2}D_{P,*}^2 }  +  e^{-\frac{\pi\sigma^2}{2}}  +e^{-\frac{\pi d^2}{4\sigma^2}} \big) \Big)
\end{align}
where $D_{X,*}$ and $D_{P,*}$ are the infimum of the distance to the boundary in the $X$ parameter and in the $P$ parameter in the trajectory from $(n,k)$ to $(n+\Gamma_x,k+\Gamma_p)$, respectively. Therefore, 
\begin{align}
      G(\beta) G(\alpha)\ket{\Psi(n_0,k_0)} & = e^{\iu \omega(\alpha,\beta) }  G(\alpha) G(\beta) \ket{\Psi(n_0,k_0)} + \ket{\delta_{\alpha,\beta}}
\end{align}
where 
\begin{align}
    \omega(\alpha,\beta) & = \phi(\alpha;n_0,k_0) + \phi(\beta;n_0 +A_x,k_0+A_p) - \phi(\beta;n_0,k_0) - \phi(\alpha;n_0+B_x,k_0+B_p) =  \alpha_x \beta_p  - \alpha_p\beta_x, 
\end{align}
and 
\begin{align}
    \ket{\delta_{\alpha,\beta}} &  = G(\beta) \ket{\epsilon(\alpha,n_0,k_0)}+e^{\iu \phi(\alpha;n_0,k_0)} \ket{\epsilon(\beta,n_0+A_x,k_0+A_p)} \\
    & \quad - e^{\iu\omega(\alpha,\beta)} (G(\alpha)\ket{\epsilon(\beta,n_0,k_0)}+e^{\iu \phi(\beta;n_0,k_0)} \ket{\epsilon(\alpha,n_0+B_x,k_0+B_p)}) 
\end{align}
such that 
\begin{align}
    \|\ket{\delta_{\alpha,\beta}}\|_2 & \leq \|\ket{\epsilon(\beta,n_0,k_0)}\|_2 +  \|\ket{\epsilon(\alpha,n_0+B_x,k_0+B_p)}\|_2 \\
    & \quad + \|\ket{\epsilon(\alpha,n_0,k_0)}\|_2 + \|\ket{\epsilon(\beta,n_0+A_x,k_0+A_p)}\|_2 \\
    & \leq \mathcal{O}\Big(\poly(d) \big(|\Gamma_p^*| e^{-\frac{\pi}{\sigma^2}D_{X,*}^2 }  + |\Gamma_x^*|(e^{-\frac{\pi\sigma^2}{d^2}D_{P,*}^2 }  +  e^{-\frac{\pi\sigma^2}{2}}  +e^{-\frac{\pi d^2}{4\sigma^2}} \big) \Big)
\end{align}
where $\Gamma_x^*= |A_x|+|B_x| $ and $\Gamma_p^*= |A_p|+|B_p| $ are the cumulative displacement in the trajectory, $D_{X,*}$ and $D_{P,*}$ are the infimum of the distance to the boundary in the $X$ parameter and in the $P$ parameter in the trajectory from $(n_0,k_0)$ to $(n_0+A_x+B_x, k_0+A_p+B_p)$. This completes the proof.
\end{proof}

\section{Approximate Energy Estimates}
\subsection{Approximate expectations}
\label{SI-sec:expectation}
We also evaluate the expected position, momentum and energy of the DGW state, keeping track of the exact discrete aliasing, truncation, and boundary wrapping errors.

\begin{theorem}[Approximate Expectations]
    \label{SI-thm:expectations}
    Let $\sigma\approx \sqrt{d}$ and $n_0, k_0 \in [-m,m]$. The expectations of the canonical operators evaluate exactly to their classical continuous values up to bounded error terms:
    \begin{align}
        \langle \hat{X} \rangle_{\Psi} &= \sqrt{\frac{2\pi}{d}} n_0 + \epsilon_{X}, \quad \langle \hat{P} \rangle_{\Psi} = \sqrt{\frac{2\pi}{d}} k_0 + \epsilon_{P},
    \end{align}
    and the expected total energy for the Hamiltonian $\hat{H} = \hat{X}^2 + \hat{P}^2$ is:
    \begin{align}
        \langle \hat{H} \rangle_{\Psi} =  \frac{2\pi}{d}\big(n_0^2 + k_0^2 ) + \frac{\sigma^2}{2d}  +  \frac{d}{2\sigma^2}   + \epsilon_{H}.
    \end{align}
    The expectation errors are explicitly bounded by the Gaussian tails beyond the boundary distances $D_X(n_0) = m - |n_0|$ and $D_P(k_0) = m - |k_0|$:
    \begin{align}
        |\epsilon_{X}| & \leq  \mathcal{O}\big(\poly(d)e^{-\frac{\pi d^2 }{2\sigma^2}}\big) + \mathcal{O}\big(\poly(d)e^{-\frac{\pi \sigma^2}{2}} \big)  + \mathcal{O}\big(\poly(d) e^{-\frac{2\pi}{\sigma^2}D_X(n_0)^2}\big) ,\\
        |\epsilon_{P}| &  \leq \mathcal{O}\big(\poly(d)e^{-\frac{\pi d^2 }{4\sigma^2}}\big) + \mathcal{O}\big(\poly(d)e^{-\frac{\pi \sigma^2}{2}} \big) + \mathcal{O}\big(\poly(d) e^{-\frac{2\pi\sigma^2}{d^2}D_P(k_0)^2}\big).
    \end{align}
    The total energy error $\epsilon_{H}$ is dominated by the second moments of the truncation tails:
    \begin{align}
        \epsilon_{H}\leq \mathcal{O}\big(\poly(d)e^{-\frac{\pi d^2 }{4\sigma^2}}\big) + \mathcal{O}\big(\poly(d)e^{-\frac{\pi \sigma^2}{2}} \big) + \mathcal{O}\big(\poly(d) e^{-\frac{2\pi\sigma^2}{d^2}D_P(k_0)^2}\big) + \mathcal{O}\big(\poly(d) e^{-\frac{2\pi}{\sigma^2}D_X(n_0)^2}\big)
    \end{align}
\end{theorem}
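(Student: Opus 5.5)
The plan is to compute the three expectations directly from the position-basis amplitudes of $\ket{\Psi(n_0,k_0)}$. In each case we replace the finite sums by Gaussian lattice sums over $\mathbb{Z}$, evaluate those in closed form with the Poisson summation formula, and collect everything discarded into the error terms.

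\textbf{Position moments.} We write
\begin{align}
\langle \hat X\rangle = \sqrt{\tfrac{2\pi}{d}}\,A^2\sum_{n\in\mathcal S_d(n_0)} [n]_d\, e^{-\frac{2\pi}{\sigma^2}(n-n_0)^2},
\end{align}
and similarly for $\langle\hat X^2\rangle$ with $[n]_d^2$. Here $[n]_d$ is the representative of $n$ in $\{-m,\dots,m\}$.
\begin{itemize}
\item Replacing $[n]_d$ by $n$ costs only the terms with $|n|>m$. These lie at distance at least $D_X(n_0)$ from the center, so they contribute at most $\poly(d)e^{-\frac{2\pi}{\sigma^2}D_X(n_0)^2}$, exactly as in Step 1 of Lemma~\ref{SI-lem:general_translation}.
\item Extending the sum from $\mathcal S_d(n_0)$ to $\mathbb Z$ costs $\poly(d)e^{-\frac{\pi d^2}{2\sigma^2}}$, using the tail bound of Lemma~\ref{SI-lem:close_norm}.
\item Poisson summation applied to $n e^{-\frac{2\pi}{\sigma^2}(n-n_0)^2}$ and to $n^2e^{-\frac{2\pi}{\sigma^2}(n-n_0)^2}$ gives $n_0\,\sigma/\sqrt2$ and $(n_0^2+\sigma^2/4\pi)\,\sigma/\sqrt2$. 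The aliasing terms are $O(\poly(d)e^{-\pi\sigma^2/2})$.
\item Dividing by the normalization sum $\sigma/\sqrt2+\epsilon$ from Lemma~\ref{SI-lem:close_norm} yields
\begin{align}
\langle\hat X\rangle=\sqrt{2\pi/d}\,n_0+\epsilon_X,\qquad \langle \hat X^2\rangle = \tfrac{2\pi}{d}n_0^2+\tfrac{\sigma^2}{2d}+\dots
\end{align}
\end{itemize}

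\textbf{Momentum moments.} We pass to the momentum basis,
\begin{align}
\braket{p_k}{\Psi}=\tfrac{A}{\sqrt d}\sum_{n\in\mathcal S_d(n_0)}\psi(n_0,k_0;n)e^{-\iu 2\pi nk/d}.
\end{align}
\begin{itemize}
\item As in Steps 2–3 of Lemma~\ref{SI-lem:general_translation}, we extend the sum to $\mathbb Z$, at cost $e^{-\pi d^2/4\sigma^2}$. Poisson summation then gives $\sqrt d\sum_s\tilde\psi(n_0,k_0;sd+k)$.
\item The $s=0$ term is a Gaussian in $k$ centered at $k_0$ with width $d/\sigma$, that is, $|\tilde\psi|^2\propto e^{-\frac{2\pi\sigma^2}{d^2}(k-k_0)^2}$.
\item Cross terms with $s\neq0$ are the momentum-wrapping contributions. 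They are controlled by $e^{-\frac{2\pi\sigma^2}{d^2}D_P(k_0)^2}$, since for $k\in[-m,m]$ the point $sd+k$ lies at distance at least about $D_P(k_0)$ from $k_0$ in the relevant direction.
\item With $p_k=\sqrt{2\pi/d}\,k$ this leaves a truncated discrete Gaussian sum in $k$ over $[-m,m]$. Extending it to $\mathbb Z$ costs $\poly(d)e^{-\frac{2\pi\sigma^2}{d^2}D_P(k_0)^2}$.
\item A second Poisson summation, whose dual width is now $\sigma$, gives first moment $k_0$ and second moment $k_0^2+d^2/(4\pi\sigma^2)$, up to $e^{-\pi\sigma^2/2}$ aliasing.
\item This produces $\langle\hat P^2\rangle=\tfrac{2\pi}{d}k_0^2+\tfrac{d}{2\sigma^2}+\dots$.
\end{itemize}

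\textbf{Assembling.} Summing the two second moments gives the energy formula. The $\poly(d)$ prefactors come from the weights $n^2,k^2\le d^2$ and from the normalization $A^2=O(\sigma^{-1})$.

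\textbf{Main obstacle.} The step most likely to cause trouble is the bookkeeping of the momentum representation. The state is truncated in position, so its discrete Fourier transform is not exactly a sampled Gaussian. One must check that the truncation error $e^{-\pi d^2/4\sigma^2}$ propagates through $|\cdot|^2$ and the weights $k^2$ only polynomially, and that the cross terms between different aliases $s,s'$ are correctly bounded by the $D_P(k_0)$ tail rather than a worse one. I would handle this by writing the momentum amplitude as the ideal aliased Gaussian plus a remainder vector, bounding that remainder in $\ell^2$ by the position-truncation tail, and using $|\langle v,Kv\rangle-\langle w,Kw\rangle|\le\|K\|\,\|v-w\|(\|v\|+\|w\|)$ with $\|K\|\le 2\pi d$.
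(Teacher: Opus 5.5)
Your outline follows the paper's proof almost step for step. For $\hat X$ you unwrap $[n]_d$, extend to $\mathbb{Z}$, apply Poisson summation and normalize with Lemma~\ref{SI-lem:close_norm}. For $\hat P$ you pass to the DFT, extend the position sum, Poisson-sum into aliases, extend the momentum sum and Poisson-sum again, as in Lemmas~\ref{SI-lem:X_moments} and~\ref{SI-lem:P_moments}.

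The step that does not go through as you justify it is the one you flag yourself. Write $\tilde\psi(p)$ for $\tilde\psi(n_0,k_0;p)$ and consider the mixed terms $\tilde\psi(k)^*\tilde\psi(sd+k)$, $s\neq0$, in $|\sum_s\tilde\psi(sd+k)|^2$. The fact that $|sd+k-k_0|\geq D_P(k_0)+1$ controls only one of the two factors. It therefore gives $\poly(d)\,e^{-\frac{\pi\sigma^2}{d^2}D_P(k_0)^2}$, which is half the exponent in the statement.

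The missing observation is that both distances cannot be small at once. Set $a=k-k_0$ and $b=sd+k-k_0$. Then $|a|+|b|\geq|s|d$, so $a^2+b^2\geq s^2d^2/2$, and
\[
|\tilde\psi(k)\tilde\psi(sd+k)|\leq\tfrac{\sigma^2}{d}e^{-\pi\sigma^2 s^2/2}.
\]
Summed over $s$ and $k$ with the weights $k,k^2$, this gives $\poly(d)\,e^{-\pi\sigma^2/2}$, which is already an allowed error term. Products of two nonzero aliases, and the diagonal alias terms $|\tilde\psi(sd+k)|^2$, each contain two factors at distance at least $D_P(k_0)+1$. They therefore contribute at most $\poly(d)\,e^{-\frac{2\pi\sigma^2}{d^2}D_P(k_0)^2}$. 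With this fix your bookkeeping yields the stated $\epsilon_P$ and $\epsilon_H$.

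This is also where your route is better than the paper's. The paper keeps only the $s=0$ Gaussian as the ideal momentum amplitude. It puts the $s\neq0$ aliases, together with the truncation tail, into $\ket{\delta}$ and bounds $\langle\Psi|\hat P|\delta\rangle$ by $\|\hat P\|\,\|\ket{\delta}\|$. Since $\|\ket{\delta}\|$ is only $\poly(d)\,e^{-\frac{\pi\sigma^2}{d^2}D_P(k_0)^2}$, that linear estimate supports only the weaker exponent, although the paper writes $e^{-\frac{2\pi\sigma^2}{d^2}D_P(k_0)^2}$.

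You instead treat only the position-truncation remainder linearly, via $\|K\|\,\|v-w\|(\|v\|+\|w\|)$, and keep the aliases inside the quadratic form. That choice is what makes the sharper exponent reachable.

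A minor slip: the aliasing in your second Poisson summation is $e^{-\pi d^2/(2\sigma^2)}$, not $e^{-\pi\sigma^2/2}$. The two coincide for $\sigma^2=d$, and both are covered by the stated bound.
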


\begin{remark}[Mechanism of Boundary Breakdown]
    As before, let $\sigma=\sqrt{d}$ and $c \in (0, 1)$, the approximation holds if the initial centers and the continuous displacements are constrained such that the state stays in the bulk:
    \begin{align}
        |n_0| \leq   \frac{d}{4}, \quad \quad |k_0| \leq  \frac{d}{4}. 
    \end{align}
    In this case, all boundary distances satisfy
    \begin{align}
        D_{X}(n_0) \geq  \frac{d}{4}, \quad D_{P}(k_0) \geq  \frac{d}{4}. 
    \end{align}
    In this regime, the approximation is exponentially precise, scaling as
    \begin{align}
        \epsilon_{H}\leq \mathcal{O}\big(e^{-\frac{\pi d}{16} }\big). 
    \end{align}
    However, if the accumulated parameter shifts ($\Delta_x$ or $\Delta_p$) force the wavepacket to hit the boundary ($c \to 1$, $D \to 0$), the Gaussian tails wrap around the lattice, the exponential suppression is destroyed, and the approximation fails. 
\end{remark}

\begin{proof} 
    Recall that the normalized state 
    \begin{align}
        \ket{\Psi(n_0,k_0)} = A \sum_{n\in\mathcal{S}_d(n_0)}\psi(n_0,k_0;n) \ket{x_n}, 
    \end{align}
    where 
    \begin{align}
        \psi(n_0,k_0;n) = e^{-\frac{\pi}{\sigma^2}(n-n_0)^2} e^{\iu \frac{2\pi }{d}k_0(n-n_0)}, 
    \end{align}
    and 
    \begin{align}
        A = \Big(\sum_{n\in\mathcal{S}_d(n_0)} |\psi(n_0,k_0;n)|^2 \Big)^{-\frac{1}{2}}
    \end{align}
    We prove Theorem~\ref{SI-thm:expectations} by computing moments of $\hat{X}$ and $\hat{P}$ with Lemma~\ref{SI-lem:X_moments} and~\ref{SI-lem:P_moments}, respectively: 
    \begin{align}
        \langle \hat{X} \rangle_{\Psi} &= \sqrt{\frac{2\pi}{d}} n_0 + \epsilon_{X}, \quad \langle \hat{P} \rangle_{\Psi} = \sqrt{\frac{2\pi}{d}} k_0 + \epsilon_{P},
    \end{align}
    where 
    \begin{align}
        |\epsilon_{X}| & \leq  \mathcal{O}\big(\poly(d)e^{-\frac{\pi d^2 }{2\sigma^2}}\big) + \mathcal{O}\big(\poly(d)e^{-\frac{\pi \sigma^2}{2}} \big)  + \mathcal{O}\big(\poly(d) e^{-\frac{2\pi}{\sigma^2}D_X(n_0)^2}\big) ,\\
        |\epsilon_{P}| &  \leq \mathcal{O}\big(\poly(d)e^{-\frac{\pi d^2 }{4\sigma^2}}\big) + \mathcal{O}\big(\poly(d)e^{-\frac{\pi \sigma^2}{2}} \big) + \mathcal{O}\big(\poly(d) e^{-\frac{2\pi\sigma^2}{d^2}D_P(k_0)^2}\big).
    \end{align}
    Then we sum the position and momentum variances for the Hamiltonian $\hat{H} = \hat{X}^2 + \hat{P}^2$:
    \begin{align}
        \langle \hat{H} \rangle_{\Psi} =  \frac{2\pi}{d}\big(n_0^2 + k_0^2) + \frac{\sigma^2}{2d}  +  \frac{d}{2\sigma^2} +  \epsilon_{H}
    \end{align}
    where 
    \begin{align}
        \epsilon_{H} \leq \mathcal{O}\big(\poly(d)e^{-\frac{\pi d^2 }{4\sigma^2}}\big) + \mathcal{O}\big(\poly(d)e^{-\frac{\pi \sigma^2}{2}} \big) + \mathcal{O}\big(\poly(d) e^{-\frac{2\pi\sigma^2}{d^2}D_P(k_0)^2}\big) + \mathcal{O}\big(\poly(d) e^{-\frac{2\pi}{\sigma^2}D_X(n_0)^2}\big). 
    \end{align}
\end{proof}

\begin{lemma}[Moments of $\hat{X}$]
\label{SI-lem:X_moments}
    Let $\sigma\approx \sqrt{d}$ and $n_0, k_0 \in [-m,m]$. The discrete moments of $\hat{X}$ evaluate exactly to the continuous moments of $\hat{X}$ up to bounded error terms:
    \begin{align}
        \langle \hat{X}\rangle_\Psi & = \sqrt{\frac{2\pi}{d}} n_0 + \mathcal{O}\big(\poly(d)e^{-\frac{\pi d^2}{2\sigma^2}}\big)+\mathcal{O}\big(\poly(d) e^{-\frac{\pi \sigma^2}{2}}\big) + \mathcal{O}\big(\poly(d)e^{-\frac{2\pi}{\sigma^2}D_X(n_0)^2}\big), \\
        \langle \hat{X}^2\rangle_\Psi & = \frac{2\pi}{d}\big(n_0^2 + \frac{\sigma^2}{4\pi}\big) + \mathcal{O}\big(\poly(d)e^{-\frac{\pi d^2}{2\sigma^2}}\big)+\mathcal{O}\big(\poly(d) e^{-\frac{\pi \sigma^2}{2}}\big)+ \mathcal{O}\big(\poly(d)e^{-\frac{2\pi}{\sigma^2}D_X(n_0)^2}\big). 
    \end{align}
\end{lemma}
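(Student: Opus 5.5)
We compute the moments directly in the position basis. Since $\hat X$ is diagonal in $\{\ket{x_n}\}$, the first two moments are ratios of weighted Gaussian lattice sums:
\begin{align}
    \langle \hat X^j\rangle_\Psi = \Big(\frac{2\pi}{d}\Big)^{j/2}\frac{\sum_{n\in\mathcal{S}_d(n_0)} [n]_d^{\,j}\, e^{-\frac{2\pi}{\sigma^2}(n-n_0)^2}}{\sum_{n\in\mathcal{S}_d(n_0)} e^{-\frac{2\pi}{\sigma^2}(n-n_0)^2}},\qquad j=1,2,
\end{align}
where $[n]_d\in\{-m,\dots,m\}$ is the representative of $n$ modulo $d$. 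The phase $e^{\iu\frac{2\pi}{d}k_0(n-n_0)}$ drops out of $|\psi|^2$, so $k_0$ is irrelevant. We then remove, in order, the three sources of error that appear in the statement.

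\emph{Boundary wrapping.} The window $\mathcal{S}_d(n_0)$ is centred at $\lfloor n_0\rfloor$, whereas $[n]_d=n$ only for $|n|\le m$. Lattice points with $[n]_d\neq n$ satisfy $|n-n_0|\ge D_X(n_0)$. For these points $|[n]_d^{\,j}-n^j|=\mathcal{O}(d^j)$, and the Gaussian weight is at most $e^{-\frac{2\pi}{\sigma^2}D_X(n_0)^2}$. Summing over at most $d$ such terms, and dividing by the denominator (which is $\ge \sigma/\sqrt2-o(1)$ by Lemma~\ref{SI-lem:close_norm}), replacing $[n]_d$ by $n$ costs $\mathcal{O}(\poly(d)e^{-\frac{2\pi}{\sigma^2}D_X(n_0)^2})$.

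\emph{Truncation.} Next we extend both numerator and denominator from $\mathcal{S}_d(n_0)$ to $\mathbb{Z}$. As in Step~1 of Lemma~\ref{SI-lem:close_norm}, the omitted points satisfy $|n-n_0|\ge m$. The tail sums of $n^j e^{-\frac{2\pi}{\sigma^2}(n-n_0)^2}$ are therefore $\mathcal{O}(\poly(d)e^{-\frac{2\pi m^2}{\sigma^2}})=\mathcal{O}(\poly(d)e^{-\frac{\pi d^2}{2\sigma^2}})$, using $|n|\le |n-n_0|+m$ and a geometric-series bound.

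\emph{Poisson summation.} We then evaluate the full-lattice sums $S_j(n_0):=\sum_{n\in\mathbb{Z}} n^j e^{-\frac{2\pi}{\sigma^2}(n-n_0)^2}$. Write $n^j$ as a polynomial in $(n-n_0)$; equivalently, apply Poisson summation to $f(x)=x^j e^{-\frac{2\pi}{\sigma^2}(x-n_0)^2}$, whose Fourier transform is an explicit Gaussian times a degree-$j$ polynomial. The $s=0$ term reproduces the continuous integral. For the first moment it gives $\frac{\sigma}{\sqrt2}n_0$, and for the second it gives $\frac{\sigma}{\sqrt2}(n_0^2+\frac{\sigma^2}{4\pi})$, since the variance of $e^{-\frac{2\pi}{\sigma^2}x^2}$ is $\sigma^2/(4\pi)$. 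The $s\neq0$ aliasing terms are bounded by $\poly(d)\sum_{s\ge1}(1+s^2)e^{-\frac{\pi\sigma^2}{2}s^2}=\mathcal{O}(\poly(d)e^{-\frac{\pi\sigma^2}{2}})$, uniformly in $n_0\in[-m,m]$. The denominator equals $\frac{\sigma}{\sqrt2}+\epsilon$, with $\epsilon$ as in Lemma~\ref{SI-lem:close_norm}.

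\emph{Combining.} Dividing, we use $\frac{a+\epsilon_a}{b+\epsilon_b}=\frac ab+\mathcal{O}(|\epsilon_a|/b+|a||\epsilon_b|/b^2)$ with $b=\sigma/\sqrt2$ and $|a|\le \poly(d)$, since $|n_0|\le m$. This gives $n_0$ and $n_0^2+\sigma^2/(4\pi)$ up to the three exponential error types. Multiplying by $(2\pi/d)^{j/2}$ yields the claimed formulas.

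The main obstacle is bookkeeping rather than conceptual. The delicate point is the wrapping step: the window is centred at $\lfloor n_0\rfloor$, not at $0$, so it may straddle $\pm m$. We must check that every mis-represented point lies at distance at least $D_X(n_0)$ from the centre (up to a shift of $1$, which only changes constants). We must also check that $D_X(n_0)$ carries the factor $2\pi/\sigma^2$ because $|\psi|^2$ rather than $\psi$ appears; this is the source of the $e^{-\frac{2\pi}{\sigma^2}D_X^2}$ exponent, as opposed to the $\pi/\sigma^2$ exponent in Lemma~\ref{SI-lem:general_translation}.
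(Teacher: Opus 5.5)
Your proposal is correct and follows essentially the same route as the paper's proof. The paper likewise removes the wrapping error ($[n]_d$ versus $n$) and the truncation error in Step~1, applies Poisson summation and keeps only the zero-frequency term in Step~2, and handles the normalization via $A^2=g(n_0)^{-1}=\sqrt{2}/\sigma+\epsilon$ from Lemma~\ref{SI-lem:close_norm} in Step~3; writing the moments as a ratio of lattice sums, as you do, is the same computation.
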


\begin{proof}
    We first compute the first moment $\langle \hat{X} \rangle_\Psi$ and $\langle \hat{X}^2 \rangle_\Psi$ of $\hat{X}$, i.e. 
    \begin{align}
        \langle \hat{X} \rangle_\Psi & = A^2 \sum_{n\in\mathcal{S}_d(n_0)} \sqrt{\frac{2\pi}{d}}[n]_d|\psi(n_0,k_0;n)|^2, \\
        \langle \hat{X}^2 \rangle_\Psi & = A^2 \sum_{n\in\mathcal{S}_d(n_0)} \frac{2\pi}{d}[n]_d^2 |\psi(n_0,k_0;n)|^2. 
    \end{align}

    \noindent\textbf{Step 1: Extend position sum. }
    
    First, we extend the finite sum to infinite sum over positions
    \begin{align}
        \langle \hat{X} \rangle_\Psi & = A^2 \sum_{n\in\mathbb{Z}} \sqrt{\frac{2\pi}{d}} n e^{-\frac{2\pi}{\sigma^2}(n-n_0)^2} + \epsilon_{\mathrm{tail},1} + \epsilon_{\mathrm{pos},1}, \\
        \langle \hat{X}^2 \rangle_\Psi & = A^2 \sum_{n\in \mathbb{Z}} \frac{2\pi}{d} n^2 e^{-\frac{2\pi}{\sigma^2}(n-n_0)^2}  + \epsilon_{\mathrm{tail},2}  + \epsilon_{\mathrm{pos},2}, 
    \end{align}
    where 
    \begin{align}
        \epsilon_{\mathrm{pos},1} & = A^2 \sum_{n\in\mathcal{S}_d(n_0)} \sqrt{\frac{2\pi}{d}} ([n]_d-n) e^{-\frac{2\pi}{\sigma^2}(n-n_0)^2} \\
        \epsilon_{\mathrm{tail},1} & = - A^2 \sum_{n\notin \mathcal{S}_d(n_0)} \sqrt{\frac{2\pi}{d}} n e^{-\frac{2\pi}{\sigma^2}(n-n_0)^2}  , \\
        \epsilon_{\mathrm{pos},2} & = A^2 \sum_{n\in\mathcal{S}_d(n_0)} \frac{2\pi}{d} ([n]_d^2-n^2) e^{-\frac{2\pi}{\sigma^2}(n-n_0)^2} , \\
        \epsilon_{\mathrm{tail},2} & = -  A^2 \sum_{n\notin \mathcal{S}_d(n_0)} \frac{2\pi}{d} n^2 e^{-\frac{2\pi}{\sigma^2}(n-n_0)^2}. 
    \end{align}
    Because $[n]_d = n$ for all $n \in [-m, m]$, the error strictly vanishes unless the support window $\mathcal{S}_d(n_0)$ spills outside the physical Hilbert space bounds ($|n| > m$). The distance from the center $n_0$ to the spillover boundary $\pm(m+1)$ is at least $m + 1 - |n_0|$ which is greater than $ D_X(n_0) = m - |n_0|$. Bounding the difference using the edge term gives
    \begin{align}
        |\epsilon_{\mathrm{pos},1}|   & \leq 2A^2 \sqrt{\frac{2\pi}{d}} \sum_{\substack{n\in\mathcal{S}_d(n_0)\\|n|>m}} (|[n]_d|+|n|) e^{-\frac{2\pi}{\sigma^2}(n-|n_0|)^2} \leq 8A^2 \sqrt{\frac{2\pi}{d}} \sum_{n\in\mathcal{S}_d(n_0)} d e^{-\frac{2\pi}{\sigma^2}(D_X(n_0)+1)^2} \\
        & \leq 8A^2  \sqrt{2\pi d^3} e^{-\frac{2\pi}{\sigma^2}(D_X(n_0)+1)^2} = \mathcal{O}\big(\poly(d)e^{-\frac{2\pi}{\sigma^2}D_X(n_0)^2} \big), \\
        |\epsilon_{\mathrm{pos},2}|  & \leq 2A^2 \frac{2\pi}{d}\sum_{\substack{n\in\mathcal{S}_d(n_0)\\|n|>m}}  ([n]_d^2+n^2) e^{-\frac{2\pi}{\sigma^2}(n-|n_0|)^2} \leq 16A^2 \frac{2\pi}{d} \sum_{n\in\mathcal{S}_d(n_0)}  d^2 e^{-\frac{2\pi}{\sigma^2}(D_X(n_0)+1)^2} \\
        & \leq 32A^2 \pi d^2  e^{-\frac{2\pi}{\sigma^2}(D_X(n_0)+1)^2}  = \mathcal{O}\big(\poly(d)e^{-\frac{2\pi}{\sigma^2}D_X(n_0)^2} \big). 
    \end{align}
    As for the sum over the outside terms, the distance from the center $n_0$ to every outside term is at least $m + 1$, and we can bound the norm using the Gaussian tail similar to those used in Lemma~\ref{SI-lem:close_norm} via
    \begin{align}
        |\epsilon_{\mathrm{tail},1}| & \leq A^2 \sqrt{\frac{2\pi}{d}} \sum_{n\notin\mathcal{S}_d(n_0) } (|n-n_0|-m-1 +|n_0|+m+1)  e^{-\frac{2\pi}{\sigma^2}(n-n_0)^2} \\
        & \leq  2 A^2 \sqrt{\frac{2\pi}{d}}  e^{-\frac{2\pi}{\sigma^2}(m+1)^2}  \sum_{x=0}^{\infty} (x+d)  e^{-\frac{4\pi}{\sigma^2}(m+1)x} \leq \mathcal{O}\big(\poly(d) e^{-\frac{\pi d^2}{2\sigma^2}}\big), \\
        |\epsilon_{\mathrm{tail},2}| & \leq A^2 \frac{2\pi}{d} \sum_{n\notin\mathcal{S}_d(n_0) } (|n-n_0|-m-1 +|n_0|+m+1)^2  e^{-\frac{2\pi}{\sigma^2}(n-n_0)^2} \\
        & \leq  2 A^2 \frac{2\pi}{d}  e^{-\frac{2\pi}{\sigma^2}(m+1)^2}  \sum_{x=0}^{\infty} (x+d)  e^{-\frac{4\pi}{\sigma^2}(m+1)x} \leq \mathcal{O}\big(\poly(d) e^{-\frac{\pi d^2}{2\sigma^2}}\big). 
    \end{align}

    \noindent\textbf{Step 2: Inverse Poisson summation}

    Consider a Gaussian function and its Fourier transform 
    \begin{align}
        f(x) & =  e^{-\frac{2\pi}{\sigma^2}(x-n_0)^2}, \\
        \mathcal{F}[f](k) & = \frac{\sigma}{\sqrt{2d}} e^{-\frac{\pi \sigma^2 k^2}{2 d^2}} e^{-\iu\frac{2\pi}{d}n_0 k}. 
    \end{align}
    Then 
    \begin{align}
        \mathcal{F}[x f](k) & = \iu \frac{d}{2\pi} \frac{\d}{\d k} \mathcal{F}[f](k)  = \big( n_0 - \iu \frac{\sigma^2 k}{2d} \big) \mathcal{F}[f](k),  \\
        \mathcal{F}[x^2 f](k) & = - \frac{d^2}{4\pi^2} \frac{\d^2}{\d k^2} \mathcal{F}[f](k) = \Big( \big( n_0 - \iu \frac{\sigma^2 k}{2d} \big)^2 + \frac{\sigma^2}{4\pi} \Big) \mathcal{F}[f](k). 
    \end{align}
    So we can apply Poisson summation to obtain
    \begin{align}
        & \sum_{n\in\mathbb{Z}} f(n) = \sqrt{d} \sum_{k\in \mathbb{Z}}  \mathcal{F}[f](k d), \\
        & \sum_{n\in\mathbb{Z}} nf(n) = \sqrt{d} \sum_{k\in \mathbb{Z}}  \mathcal{F}[xf](k d), \\
        & \sum_{n\in\mathbb{Z}} n^2 f(n) = \sqrt{d} \sum_{k\in \mathbb{Z}}  \mathcal{F}[x^2f](k d). 
    \end{align}
    Now we use Poisson summation to transform the main term and single out the zero-frequency component ($k=0$)
    \begin{align}
        \sum_{n\in\mathbb{Z}}  n e^{-\frac{2\pi}{\sigma^2}(n-n_0)^2} & = \sqrt{d} \sum_{k\in \mathbb{Z}} \mathcal{F}[xf](k d)  = \frac{\sigma}{\sqrt{2}} n_0 + \sqrt{d} \sum_{\substack{k\in \mathbb{Z}\\ k\neq 0}} \mathcal{F}[xf](k d), \\
        \sum_{n\in\mathbb{Z}}  n^2 e^{-\frac{2\pi}{\sigma^2}(n-n_0)^2} & = \sqrt{d} \sum_{k\in \mathbb{Z}} \mathcal{F}[x^2 f](k d)  = \frac{\sigma}{\sqrt{2}} \big(n_0^2+ \frac{\sigma^2}{4\pi}\big) + \sqrt{d} \sum_{\substack{k\in \mathbb{Z}\\ k\neq 0}} \mathcal{F}[x^2f](k d), 
    \end{align}
    Now we bound the side terms
    \begin{align}
        \big|\sqrt{d} \sum_{\substack{k\in \mathbb{Z}\\ k\neq 0}} \mathcal{F}[xf](k d) \big| & \leq \sqrt{d} \sum_{\substack{k\in \mathbb{Z}\\ k\neq 0}} |\mathcal{F}[xf](k d)| \leq \frac{\sigma}{\sqrt{2}} \sum_{\substack{k\in \mathbb{Z}\\ k\neq 0}} \big(|n_0|  + \frac{\sigma^2|k|}{2} \big) e^{-\frac{\pi \sigma^2 }{2}k^2} \\
        & \leq \sqrt{2} \sigma e^{-\frac{\pi\sigma^2}{2}} \sum_{x=0}^{\infty} \big(n_0 + \frac{\sigma^2(x+1)}{2}\big) e^{-\pi \sigma^2 x} \leq \mathcal{O}\big(\poly(d)e^{-\frac{\pi \sigma^2 }{2}}\big), \\
        \big|\sqrt{d} \sum_{\substack{k\in \mathbb{Z}\\ k\neq 0}} \mathcal{F}[x^2f](k d) \big| & \leq \sqrt{d} \sum_{\substack{k\in \mathbb{Z}\\ k\neq 0}} |\mathcal{F}[x^2f](k d)| \leq \frac{\sigma}{\sqrt{2}} \sum_{\substack{k\in \mathbb{Z}\\ k\neq 0}} \big(n_0^2 + \frac{\sigma^4k^2}{4} + \frac{\sigma^2}{4\pi}\big) e^{-\frac{\pi \sigma^2 }{2}k^2} \\
        & \leq \sqrt{2}\sigma e^{-\frac{\pi\sigma^2}{2}} \sum_{x=0}^{\infty} \big(n_0^2 + \frac{\sigma^4(x+1)^2}{4} + \frac{\sigma^2}{4\pi}\big) e^{-\pi \sigma^2 x} \leq \mathcal{O}\big(\poly(d)e^{-\frac{\pi \sigma^2 }{2}}\big). 
    \end{align}
    Therefore, filtering out the zero frequency component, we obtain
    \begin{align}
        \langle \hat{X} \rangle_\Psi & = A^2 \frac{\sigma}{\sqrt{2}}\sqrt{\frac{2\pi}{d}}  n_0 + \epsilon_{\mathrm{alias},1}  + \epsilon_{\mathrm{pos},1}  + \epsilon_{\mathrm{tail},1}, \\
        \langle \hat{X}^2 \rangle_\Psi & =  A^2 \frac{\sigma}{\sqrt{2}} \frac{2\pi}{d}  \big(n_0^2 + \frac{\sigma^2}{4\pi}\big)+ \epsilon_{\mathrm{alias},2}  + \epsilon_{\mathrm{pos},2}  + \epsilon_{\mathrm{tail},2}, 
    \end{align}
    where 
    \begin{align}
        \epsilon_{\mathrm{alias},1} & = A^2 \sqrt{2\pi}\sum_{\substack{k\in \mathbb{Z}\\ k\neq 0}} \mathcal{F}[xf](k d) = \mathcal{O}\big( \poly(d)e^{-\frac{\pi\sigma^2}{2}}\big), \\
        \epsilon_{\mathrm{alias},2} & = A^2\frac{2\pi}{\sqrt{d}} \sum_{\substack{k\in \mathbb{Z}\\ k\neq 0}} \mathcal{F}[x^2f](k d) = \mathcal{O}\big( \poly(d)e^{-\frac{\pi\sigma^2}{2}}\big).
    \end{align}

    \noindent \textbf{Step 3: Bounding the normalization factors}
    
    Using what we have derived in Lemma~\ref{SI-lem:close_norm}, we obtain
    \begin{align}
        A^2 = g(n_0)^{-1} = \frac{\sqrt{2}}{\sigma} + \mathcal{O}\big(\poly(d)e^{-\frac{\pi d^2}{2\sigma^2}}\big) + \mathcal{O}\big(\poly(d) e^{-\frac{\pi \sigma^2}{2}}\big).
    \end{align}
    Therefore
    \begin{align}
         & A^2 \frac{\sigma}{\sqrt{2}}\sqrt{\frac{2\pi}{d}}  n_0 = \sqrt{\frac{2\pi}{d}}  n_0 + \epsilon_{\mathrm{norm},1}, \\
         & A^2 \frac{\sigma}{\sqrt{2}}\frac{2\pi}{d} \big(n_0^2 + \frac{\sigma^2}{4\pi}\big)= \frac{2\pi}{d}\big(n_0^2 + \frac{\sigma^2}{4\pi}\big) + \epsilon_{\mathrm{norm},2}, 
    \end{align}
    where 
    \begin{align}
        \epsilon_{\mathrm{norm},1} & = \big( A^2-\frac{\sqrt{2}}{\sigma}\big)\frac{\sigma}{\sqrt{2}} \sqrt{\frac{2\pi}{d}}  n_0 \leq \mathcal{O}\big(\poly(d)e^{-\frac{\pi d^2}{2\sigma^2}}\big)+\mathcal{O}\big(\poly(d) e^{-\frac{\pi \sigma^2}{2}}\big), \\
        \epsilon_{\mathrm{norm},2} & = \big( A^2-\frac{\sqrt{2}}{\sigma}\big)\frac{\sigma}{\sqrt{2}} \frac{2\pi}{d} \big(n_0^2 + \frac{\sigma^2}{4\pi}\big)\leq \mathcal{O}\big(\poly(d)e^{-\frac{\pi d^2}{2\sigma^2}}\big)+\mathcal{O}\big(\poly(d) e^{-\frac{\pi \sigma^2}{2}}\big), 
    \end{align}
    
    Combining all of the above, we obtain that 
    \begin{align}
        \langle \hat{X}\rangle_\Psi & = \sqrt{\frac{2\pi}{d}} n_0 + \epsilon_{\mathrm{norm},1} + \epsilon_{\mathrm{alias},1} + \epsilon_{\mathrm{pos},1} + \epsilon_{\mathrm{tail},1}, \\
        \langle \hat{X}^2\rangle_\Psi & = \frac{2\pi}{d}\big(n_0^2 + \frac{\sigma^2}{4\pi}\big) + \epsilon_{\mathrm{norm},2} + \epsilon_{\mathrm{alias},2} + \epsilon_{\mathrm{pos},2} + \epsilon_{\mathrm{tail},2}, 
    \end{align}
    where 
    \begin{align}
        & \epsilon_{\mathrm{norm},1} = \mathcal{O}\big(\poly(d)e^{-\frac{\pi d^2}{2\sigma^2}}\big)+\mathcal{O}\big(\poly(d) e^{-\frac{\pi \sigma^2}{2}}\big), 
        && \epsilon_{\mathrm{alias},1} =  \mathcal{O}\big(\poly(d) e^{-\frac{\pi \sigma^2}{2}}\big), \\
        & \epsilon_{\mathrm{pos},1} = \mathcal{O}\big(\poly(d)e^{-\frac{2\pi}{\sigma^2}D_X(n_0)^2}\big),
        &&  \epsilon_{\mathrm{tail},1} = \mathcal{O}\big(\poly(d)e^{-\frac{\pi d^2}{2\sigma^2}}\big), \\
        & \epsilon_{\mathrm{norm},2} = \mathcal{O}\big(\poly(d)e^{-\frac{\pi d^2}{2\sigma^2}}\big)+\mathcal{O}\big(\poly(d) e^{-\frac{\pi \sigma^2}{2}}\big), 
        && \epsilon_{\mathrm{alias},2} =  \mathcal{O}\big(\poly(d) e^{-\frac{\pi \sigma^2}{2}}\big), \\
        & \epsilon_{\mathrm{pos},2} = \mathcal{O}\big(\poly(d)e^{-\frac{2\pi}{\sigma^2}D_X(n_0)^2}\big), 
        && \epsilon_{\mathrm{tail},2} = \mathcal{O}\big(\poly(d)e^{-\frac{\pi d^2}{2\sigma^2}}\big).
    \end{align}
\end{proof}

\begin{lemma}[Moments of $\hat{P}$]
\label{SI-lem:P_moments}
    Let $\sigma\approx \sqrt{d}$ and $n_0, k_0 \in [-m,m]$. The discrete moments of $\hat{P}$ evaluate exactly to the continuous moments of $\hat{P}$ up to bounded error terms:
    \begin{align}
        \langle \hat{P}\rangle_\Psi & = \sqrt{\frac{2\pi}{d}} k_0 + \mathcal{O}\big(\poly(d)e^{-\frac{\pi d^2}{4\sigma^2}}\big)+\mathcal{O}\big(\poly(d) e^{-\frac{\pi \sigma^2}{2}}\big) + \mathcal{O}\big(\poly(d) e^{-\frac{2\pi\sigma^2}{d^2}D_P(k_0)^2}\big), \\
        \langle \hat{P}^2\rangle_\Psi & = \frac{2\pi}{d}\big(k_0^2 + \frac{d^2}{4\pi\sigma^2}\big) + \mathcal{O}\big(\poly(d)e^{-\frac{\pi d^2}{4\sigma^2}} \big) + \mathcal{O}\big(\poly(d) e^{-\frac{\pi \sigma^2}{2}}\big) + \mathcal{O}\big(\poly(d) e^{-\frac{2\pi\sigma^2}{d^2}D_P(k_0)^2}\big) , 
    \end{align}
\end{lemma}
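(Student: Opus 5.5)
The plan is to reduce the momentum moments to the already-established position-moment computation, transported by the Fourier transform. In the momentum basis $\braket{p_k}{x_n}=d^{-1/2}e^{-\iu 2\pi nk/d}$, the amplitudes of $\ket{\Psi(n_0,k_0)}$ are
\begin{align}
    \braket{p_k}{\Psi} = \frac{A}{\sqrt d}\sum_{n\in\mathcal{S}_d(n_0)} \psi(n_0,k_0;n)e^{-\iu\frac{2\pi}{d}nk}.
\end{align}
Then $\langle \hat P^j\rangle_\Psi=(2\pi/d)^{j/2}\sum_{k=-m}^m k^j|\braket{p_k}{\Psi}|^2$ for $j=1,2$. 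I would proceed in the same order as in the proof of Lemma~\ref{SI-lem:P_moments}'s companion Lemma~\ref{SI-lem:X_moments}, but first compute the momentum amplitudes.

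\textbf{Step 1: momentum amplitudes.} Extend the $n$-sum from $\mathcal{S}_d(n_0)$ to $\mathbb{Z}$. The omitted terms satisfy $|n-n_0|\ge m$, so this costs $\mathcal{O}(\poly(d)e^{-\pi d^2/(4\sigma^2)})$ per amplitude, exactly as for $\epsilon_{\mathrm{pos},p}$ in Lemma~\ref{SI-lem:general_translation}. Then apply Poisson summation to get
\begin{align}
    \braket{p_k}{\Psi}=A\sum_{s\in\mathbb{Z}}\tilde\psi(n_0,k_0;k+sd),
\end{align}
with $|\tilde\psi(n_0,k_0;p)|=\frac{\sigma}{\sqrt d}e^{-\frac{\pi\sigma^2}{d^2}(p-k_0)^2}$. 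For $k\in[-m,m]$ and $s\neq0$, we have $|k+sd-k_0|\ge d-|k|-|k_0|$. I would bound these aliased terms using only the decay in $p$, not requiring $k$ near the edge. Keeping the $s=0$ term, $|\braket{p_k}{\Psi}|^2 = A^2\frac{\sigma^2}{d}e^{-\frac{2\pi\sigma^2}{d^2}(k-k_0)^2}$ plus a cross-term error.

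\textbf{Step 2: moments as a Gaussian lattice sum.} The squared profile is the Gaussian $e^{-2\pi\sigma^2(k-k_0)^2/d^2}$ with effective width $\tilde\sigma^2=d^2/\sigma^2\approx d$. I would now repeat the three steps of Lemma~\ref{SI-lem:X_moments} with $\sigma\to\tilde\sigma$ and $n_0\to k_0$:
\begin{itemize}
    \item Extend $\sum_{k=-m}^m$ to $\sum_{k\in\mathbb{Z}}$. The window is centred at $0$, not at $k_0$, so the omitted tail is controlled by $D_P(k_0)$. This produces the $e^{-\frac{2\pi\sigma^2}{d^2}D_P(k_0)^2}$ term, since $k^j$ grows at most polynomially.
    \item Apply Poisson summation to $\sum_k k^j e^{-2\pi(k-k_0)^2/\tilde\sigma^2}$. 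The zero mode gives $\frac{\tilde\sigma}{\sqrt2}k_0$ and $\frac{\tilde\sigma}{\sqrt2}(k_0^2+\tilde\sigma^2/4\pi)$. The nonzero modes are $\mathcal{O}(\poly(d)e^{-\pi\tilde\sigma^2/2})=\mathcal{O}(\poly(d)e^{-\pi d^2/(2\sigma^2)})$.
    \item Use $A^2=\sqrt2/\sigma+\dots$ from the proof of Lemma~\ref{SI-lem:close_norm}. The prefactor $A^2\frac{\sigma^2}{d}\cdot\frac{\tilde\sigma}{\sqrt2}=1+\mathcal{O}(\dots)$ then yields $\frac{2\pi}{d}k_0$ times the normalisation $\sqrt{d/2\pi}$ conversion. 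For the second moment it yields $\frac{2\pi}{d}(k_0^2+\frac{d^2}{4\pi\sigma^2})$, as claimed.
\end{itemize}

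\textbf{Main obstacle.} The delicate part is the aliasing and cross terms in Step 1. The amplitude errors from $s\neq0$ and from the truncation in $n$ enter $|\braket{p_k}{\Psi}|^2$ linearly, multiplied by the main amplitude. Near $k\approx\pm m$ the aliased Gaussian $\tilde\psi(k\mp d)$ is not small when $k_0$ is near the boundary. I would handle this with the bound $|\tilde\psi(k+sd)\,\tilde\psi(k)|\le \frac{\sigma^2}{d}e^{-\frac{\pi\sigma^2}{d^2}[(k-k_0)^2+(k+sd-k_0)^2]}$. The exponent is at least $\frac{\pi\sigma^2}{2d^2}(sd)^2$, and more sharply at least one of the two distances exceeds $D_P(k_0)$. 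Summing over $k$ and $s$ gives $\mathcal{O}(\poly(d)e^{-\frac{\pi\sigma^2}{d^2}\cdot 2D_P(k_0)^2})$ together with terms of order $e^{-\pi\sigma^2/2}$ and $e^{-\pi d^2/(4\sigma^2)}$. The $k^j$ weights contribute only a $\poly(d)$ factor. Collecting all contributions gives the stated error terms.
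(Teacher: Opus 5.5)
Your proposal is correct. Its skeleton matches the paper's proof:
\begin{itemize}
\item extend the $n$-sum to $\mathbb{Z}$;
\item Poisson-sum to the momentum amplitudes $A\sum_s\tilde\psi(n_0,k_0;k+sd)$ and keep the $s=0$ Gaussian;
\item extend the $k$-sum and Poisson-sum the squared profile of width $\tilde\sigma^2=d^2/\sigma^2$;
\item fix the prefactor with $A^2=\sqrt2/\sigma+\dots$ from Lemma~\ref{SI-lem:close_norm}.
\end{itemize}

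You depart from the paper at the cross-term step, and your version is stronger there. The paper writes $\ket{\Psi}=A\sum_k\tilde\psi(n_0,k_0;k)\ket{p_k}+\ket{\delta}$ and bounds the term linear in $\ket{\delta}$ by $\sqrt d\,\|\ket{\delta}\|_2$. Up to tiny terms, the aliased part of $\ket{\delta}$ has squared norm equal to the tail mass $A^2\sum_{p\notin[-m,m]}|\tilde\psi(n_0,k_0;p)|^2$. So $\|\ket{\delta}\|_2$ is genuinely of order $e^{-\frac{\pi\sigma^2}{d^2}D_P(k_0)^2}$. That route therefore only supports the exponent $\frac{\pi\sigma^2}{d^2}D_P(k_0)^2$, not the $\frac{2\pi\sigma^2}{d^2}D_P(k_0)^2$ the paper asserts for $\epsilon_{\mathrm{trans}}$.

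Your pointwise bound on $\tilde\psi(k)\overline{\tilde\psi(k+sd)}$ uses the fact that the two Gaussians sit $|s|d$ apart. Hence the cross terms are $\mathcal{O}(\poly(d)e^{-\pi\sigma^2 s^2/2})$ for every $k_0$. Since $D_P(k_0)<d/2$, this is always at most $e^{-\frac{2\pi\sigma^2}{d^2}D_P(k_0)^2}$. This is what actually yields the exponent in the statement.

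One point to tidy: the factor $2D_P(k_0)^2$ does not follow from your remark that one of the two distances exceeds $D_P(k_0)$. That remark alone gives only $e^{-\frac{\pi\sigma^2}{d^2}D_P(k_0)^2}$. The factor comes from the purely aliased quadratic term $\sum_{k}|\sum_{s\neq0}\tilde\psi(k+sd)|^2$, which splits as follows:
\begin{itemize}
\item its diagonal part is exactly the tail mass $\sum_{p\notin[-m,m]}|\tilde\psi(p)|^2=\mathcal{O}(\poly(d)e^{-\frac{2\pi\sigma^2}{d^2}D_P(k_0)^2})$;
\item its $s\neq s'$ part is again $\mathcal{O}(e^{-\pi\sigma^2/2})$ by the same separation argument.
\end{itemize}
Write that term out explicitly and the argument is complete.
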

\begin{proof}
    Now we compute the first moment $\langle\hat{P}\rangle_\Psi$ and $\langle\hat{P}^2\rangle_\Psi$. We first transform the state to the eigenbasis of $\hat{P}$ via discrete Fourier transform
    \begin{align}
        \ket{\Psi(n_0, k_0)} = \frac{1}{\sqrt{d}} A \sum_{k=-m}^m \sum_{n \in \mathcal{S}_d(n_0)} \psi(n_0, k_0; n)  e^{-\iu \frac{2\pi}{d} k n}\ket{p_k}. 
    \end{align}
    We also define the continuous Fourier transform $\tilde{\psi}(n_0,k_0;p)$ of $\psi(n_0,k_0;x)$
    \begin{align}
        \tilde{\psi}(n_0,k_0;p) = \mathcal{F}[\psi](n_0,k_0,x) = \frac{\sigma}{\sqrt{d}} e^{-\iu \frac{2\pi}{d} p n_0} e^{-\frac{\pi\sigma^2}{d^2}(p-k_0)^2}. 
    \end{align}
    Then 
    \begin{align}
        \langle\hat{P}\rangle_\Psi & = \bra{\Psi(n_0, k_0)}\hat{P}\ket{\Psi(n_0, k_0)}, \\
        \langle\hat{P}^2\rangle_\Psi & = \bra{\Psi(n_0, k_0)}\hat{P}^2\ket{\Psi(n_0, k_0)}, 
    \end{align}

    \noindent\textbf{Step 1: Extend position summation}

    First, we extend the position summation from finite to infinite
    \begin{align}
        \sum_{n \in \mathcal{S}_d(n_0)} \psi(n_0, k_0; n)  e^{-\iu \frac{2\pi}{d} k n} = \sum_{n \in \mathbb{Z}} \psi(n_0, k_0; n)  e^{-\iu \frac{2\pi}{d} k n} - \sum_{n\notin\mathcal{S}_d(n_0)}\psi(n_0, k_0; n)  e^{-\iu \frac{2\pi}{d} k n} . 
    \end{align}
    The Gaussian tail can be bounded by 
    \begin{align}
        \big|\sum_{n\notin\mathcal{S}_d(n_0)}\psi(n_0, k_0; n)  e^{-\iu \frac{2\pi}{d} k n}\big| & \leq \sum_{n\notin\mathcal{S}_d(n_0)} |\psi(n_0, k_0; n)| \leq \sum_{n\notin\mathcal{S}_d(n_0)} e^{-\frac{\pi}{\sigma^2}(n-n_0)^2}\\
        & \leq 2 e^{-\frac{\pi}{\sigma^2}(m+1)^2}\sum_{x=0}^\infty e^{-\frac{2\pi}{\sigma^2}(m+1)x} \leq \mathcal{O}\big(e^{-\frac{\pi d^2}{4\sigma^2}} \big). 
    \end{align}
    
    \noindent\textbf{Step 2: Poisson summation}

    Now we apply Poisson summation to transform the infinite position sum to the infinite momentum sum 
    \begin{align}
        \sum_{n \in \mathbb{Z}} \psi(n_0,k_0;n) e^{-\iu \frac{2\pi}{d} k n} = \sqrt{d} \sum_{s \in \mathbb{Z}} \tilde{\psi}(n_0,k_0;sd + k). 
    \end{align}
    Now we filter out the zero-frequency component and the Gaussian tail
    \begin{align}
        \sum_{n \in \mathbb{Z}} \psi(n_0,k_0;n) e^{-\iu \frac{2\pi}{d} k n} =  \sqrt{d} \tilde{\psi}(n_0,k_0;sd + k) +  \sqrt{d} \sum_{s \neq 0} \tilde{\psi}(n_0,k_0;sd + k). 
    \end{align}
    Notice that $|k-k_0\pm  d|\geq d - |k| - |k_0| \geq D_P(k_0)+1$.  The Gaussian tail can be bounded by 
    \begin{align}
        \big|\sqrt{d} \sum_{s \neq 0} \tilde{\psi}(n_0,k_0;sd + k)\big|& \leq \sqrt{d} \sum_{s \neq 0} |\tilde{\psi}(n_0,k_0;sd + k)|\leq 2\sigma \sum_{s=1}^{\infty} e^{-\frac{\pi\sigma^2}{d^2}(sd-|k-k_0|)^2} \\
        & \leq 2\sigma  e^{-\frac{\pi\sigma^2}{d^2}D_P(k_0)^2} \sum_{x=0}^{\infty} e^{-\frac{2\pi\sigma^2}{d} (D_P(k_0)+1)x}  \leq \mathcal{O}\big(\poly(d) e^{-\frac{\pi\sigma^2}{d^2}D_P(k_0)^2}\big). 
    \end{align}
    Therefore, 
    \begin{align}
        \sum_{n \in \mathcal{S}_d(n_0)} \psi(n_0, k_0; n)  e^{-\iu \frac{2\pi}{d} k n} = \sqrt{d} \tilde{\psi}(n_0,k_0;k) + \mathcal{O}\big(e^{-\frac{\pi d^2}{4\sigma^2}} \big) + \mathcal{O}\big(\poly(d) e^{-\frac{\pi\sigma^2}{d^2}D_P(k_0)^2}\big), 
    \end{align}
    and
    \begin{align}
        \ket{\Psi(n_0,k_0)} = A \sum_{k=-m}^m \tilde{\psi}(n_0,k_0;k) \ket{p_k} + \ket{\delta}
    \end{align}
    where 
    \begin{align}
        \|\ket{\delta}\|_2 \leq \mathcal{O}\big(\poly(d)e^{-\frac{\pi d^2}{4\sigma^2}} \big) + \mathcal{O}\big(\poly(d) e^{-\frac{\pi\sigma^2}{d^2}D_P(k_0)^2}\big). 
    \end{align}
    At this stage, we have 
    \begin{align}
        \langle \hat{P}\rangle_\Psi & = A^2 \sqrt{\frac{2\pi}{d}} \sum_{k=-m}^{m} k |\tilde{\psi}(n_0,k_0;k)|^2 + \epsilon_{\mathrm{trans},1}, \\
        \langle \hat{P}^2\rangle_\Psi & = A^2 \frac{2\pi}{d} \sum_{k=-m}^{m} k^2 |\tilde{\psi}(n_0,k_0;k)|^2 + \epsilon_{\mathrm{trans},2},
    \end{align}
    where 
    \begin{align}
        \epsilon_{\mathrm{trans},1} & = 2\Re( (\bra{\Psi(n_0,k_0)} -\bra{\delta})\hat{P} \ket{\delta} ) + \bra{\delta} \hat{P} \ket{\delta},  \\
        \epsilon_{\mathrm{trans},2} & = 2 \Re((\bra{\Psi(n_0,k_0)} -\bra{\delta})\hat{P}^2 \ket{\delta}) + \bra{\delta} \hat{P}^2 \ket{\delta},  
    \end{align}
    and 
    \begin{align}
        |\epsilon_{\mathrm{trans},1}| & \leq \mathcal{O}\big(\poly(d)e^{-\frac{\pi d^2}{4\sigma^2}} \big) + \mathcal{O}\big(\poly(d) e^{-\frac{2\pi\sigma^2}{d^2}D_P(k_0)^2}\big), \\
        |\epsilon_{\mathrm{trans},2}| & \leq \mathcal{O}\big(\poly(d)e^{-\frac{\pi d^2}{4\sigma^2}} \big) + \mathcal{O}\big(\poly(d) e^{-\frac{2\pi\sigma^2}{d^2}D_P(k_0)^2}\big), 
    \end{align}
    where we have used 
    \begin{align}
        | \bra{\Psi(n_0,k_0)} \hat{P} \ket{\delta}|\leq \| \ket{\Psi(n_0,k_0)} \|_2 \|\hat{P} \ket{\delta}\|_2 \leq \|\hat{P}\|_\infty \| \ket{\delta}\|_2 \leq \sqrt{d}\|\ket{\delta}\|_2. 
    \end{align}
    
    \noindent\textbf{Step 3: Extend momentum sum}

    Now we can deal with the momentum sum with a very similar technique to the position sum. We first extend the momentum sum from finite to infinite
    \begin{align}
        A^2  \sqrt{\frac{2\pi}{d}} \sum_{k=-m}^{m} k |\tilde{\psi}(n_0,k_0;k)|^2 & = A^2 \sqrt{\frac{2\pi}{d}}   \sum_{k\in\mathbb{Z}}  k |\tilde{\psi}(n_0,k_0;k)|^2 + \epsilon_{\mathrm{tail},1}, \\
        A^2  \frac{2\pi}{d} \sum_{k=-m}^{m} k^2 |\tilde{\psi}(n_0,k_0;k)|^2 & = A^2 \frac{2\pi}{d}   \sum_{k\in\mathbb{Z}}  k^2 |\tilde{\psi}(n_0,k_0;k)|^2 +  \epsilon_{\mathrm{tail},2}, 
    \end{align}
    where
    \begin{align}
        \epsilon_{\mathrm{tail},1} & = -A^2 \sqrt{\frac{2\pi}{d}} \sum_{|k| \geq m}  k |\tilde{\psi}(n_0,k_0;k)|^2, \\
        \epsilon_{\mathrm{tail},2} & = -A^2 \frac{2\pi}{d} \sum_{|k| \geq m}  k^2 |\tilde{\psi}(n_0,k_0;k)|^2, 
    \end{align}
    and 
    \begin{align}
        |\epsilon_{\mathrm{tail},1}| & \leq A^2 \sqrt{\frac{2\pi}{d}} \sum_{|k| \geq m}  |k| |\tilde{\psi}(n_0,k_0;k)|^2 \leq   A^2 \sqrt{\frac{2\pi}{d}} \frac{\sigma^2}{d} \sum_{|k| > m}  |k| e^{-\frac{2\pi\sigma^2}{d^2} (k-k_0)^2} \\
        & \leq 2 A^2 \sqrt{\frac{2\pi}{d}} \frac{\sigma^2}{d}  e^{-\frac{2\pi\sigma^2}{d^2} D_P(k_0)^2} \sum_{x=0}^{\infty}  (m+x+1) e^{-\frac{4\pi\sigma^2}{d^2} (D_P(k_0)+1)x} \leq \mathcal{O}\big(\poly(d) e^{-\frac{2\pi\sigma^2}{d^2} D_P(k_0)^2} \big), \\
        |\epsilon_{\mathrm{tail},2}| & \leq A^2 \frac{2\pi}{d} \sum_{|k| \geq m}  k^2 |\tilde{\psi}(n_0,k_0;k)|^2 \leq   A^2 \frac{2\pi}{d} \frac{\sigma^2}{d} \sum_{|k| > m}  k^2 e^{-\frac{2\pi\sigma^2}{d^2} (k-k_0)^2} \\
        & \leq 2 A^2 \frac{2\pi}{d} \frac{\sigma^2}{d}  e^{-\frac{2\pi\sigma^2}{d^2} D_P(k_0)^2} \sum_{x=0}^{\infty}  (m+x+1)^2 e^{-\frac{4\pi\sigma^2}{d^2} (D_P(k_0)+1)x} \leq \mathcal{O}\big(\poly(d) e^{-\frac{2\pi\sigma^2}{d^2} D_P(k_0)^2} \big). 
    \end{align}

    \noindent \textbf{Step 4: Inverse Poisson summation}
    
    We apply the exact same Poisson summation as for moments of $\hat{X}$. Let 
    \begin{align}
        f(k) = e^{-\frac{2\pi\sigma^2}{d^2}(k-k_0)^2}. 
    \end{align}
    Then
    \begin{align}
        \sum_{k\in\mathbb{Z}}  k e^{-\frac{2\pi\sigma^2}{d^2}(k-k_0)^2} & = \sqrt{d} \sum_{n\in \mathbb{Z}} \mathcal{F}[kf](n d)  = \frac{d}{\sqrt{2}\sigma} k_0 + \sqrt{d} \sum_{\substack{n\in \mathbb{Z}\\ n\neq 0}} \mathcal{F}[kf](n d), \\
        \sum_{k\in\mathbb{Z}}  k^2 e^{-\frac{2\pi\sigma^2}{d^2}(k-k_0)^2} & = \sqrt{d} \sum_{n\in \mathbb{Z}} \mathcal{F}[k^2 f](n d)  = \frac{d}{\sqrt{2}\sigma} \big(k_0^2+ \frac{d^2}{4\pi\sigma^2}\big) + \sqrt{d} \sum_{\substack{n\in \mathbb{Z}\\ n\neq 0}} \mathcal{F}[k^2f](n d), 
    \end{align}
    where 
    \begin{align}
        \big|\sqrt{d} \sum_{\substack{n\in \mathbb{Z}\\ n\neq 0}} \mathcal{F}[kf](n d) \big| &  \leq \mathcal{O}\big(\poly(d)e^{-\frac{\pi d^2 }{2\sigma^2}}\big), \\
        \big|\sqrt{d} \sum_{\substack{n\in \mathbb{Z}\\ n\neq 0}} \mathcal{F}[k^2f](nd) \big| & \leq \mathcal{O}\big(\poly(d)e^{-\frac{\pi d^2 }{2\sigma^2}}\big). 
    \end{align}
    Therefore, 
    \begin{align}
        A^2 \sqrt{\frac{2\pi}{d}}  \sum_{k\in\mathbb{Z}}  k |\tilde{\psi}(n_0,k_0;k)|^2 & = A^2 \sqrt{\frac{2\pi}{d}}   \frac{\sigma}{\sqrt{2}} k_0 + \epsilon_{\mathrm{alias},1}, \\
        A^2 \frac{2\pi}{d}  \sum_{k\in\mathbb{Z}}  k^2 |\tilde{\psi}(n_0,k_0;k)|^2 & = A^2 \frac{2\pi}{d}   \frac{\sigma}{\sqrt{2}} \big(k_0^2+ \frac{d^2}{4\pi\sigma^2}\big) + \epsilon_{\mathrm{alias},2}, 
    \end{align}
    where 
    \begin{align}
        \epsilon_{\mathrm{alias},1} & = A^2 \sqrt{\frac{2\pi}{d}} \frac{\sigma^2}{d} \sqrt{d} \sum_{\substack{n\in \mathbb{Z}\\ n\neq 0}} \mathcal{F}[kf](n d), \\
        \epsilon_{\mathrm{alias},2} & = A^2 \frac{2\pi}{d} \frac{\sigma^2}{d}\sqrt{d} \sum_{\substack{n\in \mathbb{Z}\\ n\neq 0}} \mathcal{F}[k^2f](n d), 
    \end{align}
    where 
    \begin{align}
        |\epsilon_{\mathrm{alias},1}| \leq  \mathcal{O}\big(\poly(d)e^{-\frac{\pi d^2 }{2\sigma^2}}\big), \\
        |\epsilon_{\mathrm{alias},2}| \leq  \mathcal{O}\big(\poly(d)e^{-\frac{\pi d^2 }{2\sigma^2}}\big). 
    \end{align}
    
    \noindent \textbf{Step 5: Bounding the normalization factors}
    
    Using what we have derived in Lemma~\ref{SI-lem:close_norm}, we obtain
    \begin{align}
        A^2 = g(n_0)^{-1} = \frac{\sqrt{2}}{\sigma} + \mathcal{O}\big(\poly(d)e^{-\frac{\pi d^2}{2\sigma^2}}\big) + \mathcal{O}\big(\poly(d) e^{-\frac{\pi \sigma^2}{2}}\big).
    \end{align}
    Therefore
    \begin{align}
         & A^2 \sqrt{\frac{2\pi}{d}}   \frac{\sigma}{\sqrt{2}} k_0= \sqrt{\frac{2\pi}{d}} k_0 + \epsilon_{\mathrm{norm},1}, \\
         & A^2 \frac{2\pi}{d}   \frac{\sigma}{\sqrt{2}} \big(k_0^2+ \frac{d^2}{4\pi\sigma^2}\big) = \frac{2\pi}{d}\big(k_0^2 + \frac{d^2}{4\pi\sigma^2}\big) + \epsilon_{\mathrm{norm},2}, 
    \end{align}
    where 
    \begin{align}
        \epsilon_{\mathrm{norm},1} & = \big( A^2-\frac{\sqrt{2}}{\sigma}\big)   \sqrt{\frac{2\pi}{d}}  \frac{\sigma}{\sqrt{2}} k_0 \leq \mathcal{O}\big(\poly(d)e^{-\frac{\pi d^2}{2\sigma^2}}\big)+\mathcal{O}\big(\poly(d) e^{-\frac{\pi \sigma^2}{2}}\big), \\
        \epsilon_{\mathrm{norm},2} & = \big( A^2-\frac{\sqrt{2}}{\sigma}\big)\frac{2\pi}{d}  \frac{\sigma}{\sqrt{2}} \big(k_0^2+ \frac{d^2}{4\pi\sigma^2}\big) \leq \mathcal{O}\big(\poly(d)e^{-\frac{\pi d^2}{2\sigma^2}}\big)+\mathcal{O}\big(\poly(d) e^{-\frac{\pi \sigma^2}{2}}\big), 
    \end{align}
    
    Combining all of the above, we obtain that 
    \begin{align}
        \langle \hat{P}\rangle_\Psi & = \sqrt{\frac{2\pi}{d}} k_0 + \epsilon_{\mathrm{norm},1} + \epsilon_{\mathrm{alias},1} + \epsilon_{\mathrm{tail},1} + \epsilon_{\mathrm{trans},1}, \\
        \langle \hat{P}^2\rangle_\Psi & = \frac{2\pi}{d}\big(k_0^2 + \frac{d^2}{4\pi\sigma^2}\big) + \epsilon_{\mathrm{norm},2} + \epsilon_{\mathrm{alias},2} + \epsilon_{\mathrm{tail},2} + \epsilon_{\mathrm{trans},2}, 
    \end{align}
    where 
    \begin{align}
        & \epsilon_{\mathrm{norm},1} \leq \mathcal{O}\big(\poly(d)e^{-\frac{\pi d^2}{2\sigma^2}}\big)+\mathcal{O}\big(\poly(d) e^{-\frac{\pi \sigma^2}{2}}\big), && \epsilon_{\mathrm{alias},1} \leq  \mathcal{O}\big(\poly(d)e^{-\frac{\pi d^2 }{2\sigma^2}}\big),  \\
        & \epsilon_{\mathrm{trans},1} \leq \mathcal{O}\big(\poly(d)e^{-\frac{\pi d^2}{4\sigma^2}} \big) + \mathcal{O}\big(\poly(d) e^{-\frac{2\pi\sigma^2}{d^2}D_P(k_0)^2}\big), && \epsilon_{\mathrm{tail},1} \leq \mathcal{O}\big(\poly(d) e^{-\frac{2\pi\sigma^2}{d^2} D_P(k_0)^2} \big),\\
        & \epsilon_{\mathrm{norm},2} \leq \mathcal{O}\big(\poly(d)e^{-\frac{\pi d^2}{2\sigma^2}}\big)+\mathcal{O}\big(\poly(d) e^{-\frac{\pi \sigma^2}{2}}\big), && \epsilon_{\mathrm{alias},2}  \leq  \mathcal{O}\big(\poly(d)e^{-\frac{\pi d^2 }{2\sigma^2}}\big), \\
        & \epsilon_{\mathrm{trans},2} \leq \mathcal{O}\big(\poly(d)e^{-\frac{\pi d^2}{4\sigma^2}} \big) + \mathcal{O}\big(\poly(d) e^{-\frac{2\pi\sigma^2}{d^2}D_P(k_0)^2}\big) , && \epsilon_{\mathrm{tail},2} \leq \mathcal{O}\big(\poly(d) e^{-\frac{2\pi\sigma^2}{d^2} D_P(k_0)^2} \big). 
    \end{align}
\end{proof}

\subsection{Approximate ground state}
\label{SI-sec:ground_state}
We also confirm that the DGW state $\ket{\Psi(0,0)}$ is the approximate ground state, keeping track of the exact discrete aliasing, truncation, and boundary wrapping errors.

\begin{theorem}[Approximate Eigenstates]
    \label{SI-thm:eigenstates}
    Let $\sigma = \sqrt{d}$. The DGW state $\ket{\Psi(0,0)}$ is approximately an eigenstate of the Hamiltonian up to bounded error terms:
    \begin{align}
        \hat{H} \ket{\Psi(0,0)} = \ket{\Psi(0,0)} +\ket{\delta}
    \end{align}
    where 
    \begin{align}
        \|\ket{\delta}\|_2 \leq \mathcal{O}\big(\poly(d)e^{-\frac{\pi d^2 }{4\sigma^2}}\big) + \mathcal{O}\big(\poly(d)e^{-\frac{\pi \sigma^2}{2}} \big). 
    \end{align}
\end{theorem}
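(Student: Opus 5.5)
We compute $\hat{X}^2\ket{\Psi(0,0)}$ and $\hat{P}^2\ket{\Psi(0,0)}$ separately, componentwise in the position basis, and show their sum is close to $\ket{\Psi(0,0)}$. With $\sigma=\sqrt d$, $n_0=k_0=0$, the support window is $\mathcal{S}_d(0)=\{-m,\dots,m\}$. There is therefore no boundary wrapping, and $[n]_d=n$ exactly. Write $\psi(n)=e^{-\pi n^2/d}$. Then
\begin{align}
\bra{x_l}\hat{X}^2\ket{\Psi(0,0)} = A\,\frac{2\pi}{d}\,l^2\,\psi(l),
\end{align}
exactly. The whole task is to show that
\begin{align}
\bra{x_l}\hat{P}^2\ket{\Psi(0,0)} = A\Big(1-\frac{2\pi}{d}l^2\Big)\psi(l) + \text{(exp.\ small)}.
\end{align}
This is the discrete analogue of $-\partial_y^2 e^{-\pi y^2/d}\cdot\frac{d}{2\pi}$. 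Indeed, with $\mathcal{F}$ as in the paper, $-\frac{d}{4\pi^2}\psi''(y)=\big(\frac{d}{2\pi}\cdot\frac{2\pi}{d}-\frac{d}{2\pi}\frac{4\pi^2 y^2}{d^2}\big)\psi(y)$ times $\frac{2\pi}{d}$. Collecting factors gives exactly $(1-\frac{2\pi}{d}y^2)\psi(y)$, so the identity $X^2+P^2=1$ on the Gaussian holds in the continuum.

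To transfer this to the lattice, I repeat Steps 2--4 of the proof of Lemma~\ref{SI-lem:general_translation}, with $k$ replaced by $k^2$, at center $(0,0)$. First, extend the position sum $\sum_{n\in\mathcal{S}_d(0)}\psi(n)e^{-\iu 2\pi kn/d}$ to $\mathbb{Z}$. Since the omitted $n$ satisfy $|n|>m$, this costs $\mathcal{O}(e^{-\pi d^2/(4\sigma^2)})$ per entry. Second, apply Poisson summation to get $\sqrt d\sum_s\tilde\psi(sd+k)$, and use $\tilde\psi(p)=\frac{\sigma}{\sqrt d}e^{-\pi\sigma^2p^2/d^2}$. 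Third, rewrite $\sum_{k=-m}^m k^2(\cdot)$ as a sum over $p\in\mathbb{Z}$ with the weight $(p-s(p)d)^2$. The terms with $s(p)\ne0$ have $|p|\ge m+1$, so they contribute $\mathcal{O}(\poly(d)e^{-\pi\sigma^2 m^2/d^2})=\mathcal{O}(\poly(d)e^{-\pi d^2/(4\sigma^2)})$. This is the $\epsilon_{\rm mom}$ term with $D_P(0)=m$. Fourth, use $(\frac{2\pi\iu}{d}p)^2\tilde\psi=\mathcal{F}[\psi'']$ and inverse Poisson summation to get $\sqrt d\sum_r\psi''(rd+l)$. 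The $r=0$ term gives the target. For $r\ne0$, $|rd+l|\ge m+1$, so these terms are $\mathcal{O}(\poly(d)e^{-\pi d^2/(4\sigma^2)})$, matching $\epsilon_{\rm alias}$.

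It remains to assemble the vector error. Each component error is $\poly(d)$ times an exponential. Since the vector has $d$ components, the $\ell_2$ norm picks up at most an extra factor $\sqrt d$, which is absorbed into $\poly(d)$. The normalization $A$ appears as a common prefactor on both sides and needs no estimate beyond $A=\mathcal{O}(1)$, which follows from Lemma~\ref{SI-lem:close_norm}. The $e^{-\pi\sigma^2/2}$ term in the claimed bound is harmless slack. It arises naturally if one instead passes through $A^2\approx\sqrt2/\sigma$, or if one bounds $\hat{P}^2\ket{\delta}$ using $\|\hat{P}^2\|_\infty=\mathcal{O}(d)$ after first writing $\ket{\Psi(0,0)}$ approximately in the momentum basis, as in Lemma~\ref{SI-lem:P_moments}. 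I will use whichever of these routes is cleaner.

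The main obstacle is keeping the $k^2$ weight under control in the third step. After reindexing, the weight $(p-s(p)d)^2$ grows polynomially. One must check that $\sum_{|p|>m}(|p|+d)^2e^{-\pi p^2/d}$ is still $\poly(d)e^{-\pi d/4}$. I would handle this as the paper does for $\epsilon_{\rm mom}$: bound the sum by a geometric series after factoring out $e^{-\pi m^2/d}$, using $|s(p)|\le |p|/d+1/2$. This also requires verifying that $\psi''$ is a polynomial times a Gaussian, so that the aliasing tails are of the same form.
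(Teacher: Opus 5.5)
Your proposal is correct and follows essentially the same route as the paper. $\hat{X}^2$ acts exactly on the window $\mathcal{S}_d(0)$, and $\hat{P}^2$ is handled by extending the position sum, Poisson-summing into momentum space, controlling the out-of-zone momentum tail, and inverse Poisson-summing, so that the $r=0$ term gives $A(1-\frac{2\pi}{d}l^2)\psi(l)$ and every other piece is $\mathcal{O}(\poly(d)e^{-\pi d/4})$. The paper differs only in bookkeeping: it does the first transform at the vector level and bounds $\hat{P}^2$ applied to that error via $\|\hat{P}^2\|_\infty$, an option you already mention. One cosmetic slip: the continuum identity is $\hat{P}^2\psi=\frac{2\pi}{d}\cdot\bigl(-\frac{d^2}{4\pi^2}\bigr)\psi''=-\frac{d}{2\pi}\psi''=(1-\frac{2\pi}{d}y^2)\psi$. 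The prefactor $-\frac{d}{4\pi^2}$ you wrote is therefore off, although your final expression is right.
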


\begin{remark}
    Let $\sigma=\sqrt{d}$. $\ket{\Psi(0,0)}$ is the approximate ground state (the vacuum state).
    \begin{align}
        \hat{H} \ket{\Psi(0,0)} =\ket{\Psi(0,0)} + \mathcal{O}\big(\poly(d)e^{-\frac{\pi}{4}d} \big). 
    \end{align}
    However, $\ket{\Psi(n_0,k_0)}$ are not approximate excited states (the Fock states); rather, they are displaced approximate ground states (the coherent states). As both ground state preparation and phase-space displacement are experimentally realizable, we regard $\ket{\Psi(n_0,k_0)}$ as free states. 
\end{remark}

\begin{proof}
    We evaluate the vector 
    \begin{align}
        \hat{H}\ket{\Psi(0,0)} = \hat{X}^2\ket{\Psi(0,0)} + \hat{P}^2\ket{\Psi(0,0)}
    \end{align}

    \noindent\textbf{Step 1: Action of $\hat{X}^2$} 
    Recall the normalized state is 
    \begin{align}
        \ket{\Psi(0,0)} = A \sum_{n \in \mathcal{S}_d(0)} \psi(0, 0; n) \ket{x_n},
    \end{align}
    where 
    \begin{align}
        \psi(0,0;n) = e^{-\frac{\pi}{d}n^2}. 
    \end{align}
    Since $\hat{X}$ is strictly diagonal in the discrete position basis $\ket{x_n}$, its action yields:
    \begin{align}
        \hat{X}^2 \ket{\Psi(0,0)} = A \frac{2\pi}{d} \sum_{n \in \mathcal{S}_d(0)}  n^2 \psi(0,0;n) \ket{x_n}. 
    \end{align}

    \noindent\textbf{Step 2: Action of $\hat{P}^2$}
    
    To evaluate $\hat{P}^2$ in the position basis, we first project into momentum space. By the same method as in Theorem~\ref{SI-thm:expectations} Block Two, we switch to the discrete momentum basis $\ket{p_k}$, extend the position summation and transform via Poisson summation to obtain
    \begin{align}
        \ket{\Psi(0,0)} = A \sum_{k\in\mathcal{S}_d(0)} \tilde{\psi}(0,0;k)\ket{p_k} + \ket{\epsilon_\mathrm{trans}}
    \end{align}
    where
    \begin{align}
        \tilde{\psi}(0,0;k) = e^{-\frac{\pi}{d}k^2}
    \end{align}
    and
    \begin{align}
        \|\ket{\epsilon_P}\|_2\leq \mathcal{O}(e^{-\frac{\pi d}{4}}). 
    \end{align}
    As $\hat{P}$ is strictly diagonal in the discrete momentum basis $\ket{p_k}$, its action yields:
    \begin{align}
        \hat{P}^2 \ket{\Psi(0,0)} = A \frac{2\pi}{d} \sum_{k \in \mathcal{S}_d(0)}  k^2 \tilde{\psi}(0,0;k) \ket{p_k} + \hat{P}^2 \ket{\epsilon_\mathrm{trans}}. 
    \end{align}

    Again by the same techniques as in Theorem~\ref{SI-thm:expectations} Block Two, we switch back to the discrete position basis $\ket{p_k}$ extend the momentum summation and transform via Poisson summation to obtain 
    \begin{align}
        \hat{P}^2 \ket{\Psi(0,0)} & = A \frac{2\pi}{d} \sum_{n\in\mathcal{S}_d(0)} \Big( \frac{1}{\sqrt{d}}\sum_{k \in \mathcal{S}_d(0)}  k^2  \tilde{\psi}(0,0;k)e^{\iu \frac{2\pi}{d}nk} \Big) \ket{x_n} + \hat{P}^2 \ket{\epsilon_\mathrm{trans}} \\
        & =  A \frac{2\pi}{d} \sum_{n\in\mathcal{S}_d(0)} \Big( \frac{1}{\sqrt{d}} \sum_{k \in \mathbb{Z}}  k^2  \tilde{\psi}(0,0;k)e^{\iu \frac{2\pi}{d}nk} \Big) \ket{x_n} + \ket{\epsilon_{\mathrm{mom}}} \\
        & = A \frac{2\pi}{d} \sum_{n\in\mathcal{S}_d(0)} \Big( \sum_{n' \in \mathbb{Z}}  -\frac{d^2}{4\pi^2}\frac{\d^2}{\d x^2}  \psi(0,0;n+n'd) \Big) \ket{x_n} + \ket{\epsilon_{\mathrm{mom}}}, \\
        & = A \frac{2\pi}{d} \sum_{n\in\mathcal{S}_d(0)}   \big(\frac{d}{2\pi} -n^2\big) \psi(0,0;n)\ket{x_n} + \ket{\epsilon},
    \end{align}
    where $\ket{\epsilon}$ has encountered all errors accumulated so far, satisfying 
    \begin{align}
        \|\ket{\epsilon}\|_2 \leq \mathcal{O}\big(\poly(d)e^{-\frac{\pi d}{4}}\big)
    \end{align}

    \noindent\textbf{Step 3: Summation} 
    
    Therefore 
    \begin{align}
        \hat{H} \ket{\Psi(0,0)} = \hat{X}^2\ket{\Psi(0,0)}  + \hat{P}^2\ket{\Psi(0,0)} = A \sum_{n\in\mathcal{S}_d(0)} \psi(0,0;n)\ket{x_n} + \ket{\epsilon} = \ket{\Psi(0,0)} + \ket{\epsilon}, 
    \end{align}
    where 
    \begin{align}
        \|\ket{\epsilon}\|_2 \leq \mathcal{O}\big(\poly(d)e^{-\frac{\pi d}{4}}\big).
    \end{align}
    This completes the proof. 
\end{proof}

\section{Quantum Metrology Bounds}
\label{SI-sec:metrology_bounds}

\subsection{Tunable or attenuated quantum metrology setting}
\label{SI-sec:tunable_setting}

Consider a $d$-dimensional system whose $\hat{X}$ and $\hat{P}$ operators are defined as in Definition~\ref{SI-def:DGW_framework}.  We treat $d$ as a variable to examine how quantum metrology behaves with respect to the Hilbert-space dimension.  Since $d$ is exponential in the number of qubits $n$, we regard $d$ as an exponentially large quantity.

Suppose that we are given oracle access to $N$ momentum translations and $N$ position translations, but now in the tunable form $e^{\iu \lambda_p p_j \hat{X}}$ and $e^{-\iu \lambda_x x_k \hat{P}}$ where the unknown amplitudes $p_j$ lie in the fixed interval $[p_{\min},p_{\max}]$, the unknown amplitudes $x_k$ lie in the fixed interval $[x_{\min},x_{\max}]$ and the attenuation factors $\lambda_p$ and $\lambda_x$ satisfy $0\leq \lambda_p,\lambda_x\leq 1$. Here, we treat $N$ as another variable of interest. Both $p_{\min},p_{\max}$ and $x_{\min},x_{\max}$ are constants with respect to $N$ and $d$. The attenuation factors are known control parameters and are not estimated. We count each call to $e^{\iu \lambda_p p_j \hat{X}}$ or $e^{-\iu \lambda_x x_k \hat{P}}$ as one use of the corresponding physical sample, independently of the chosen attenuation. We would like to estimate the product of their sample averages
\begin{align}
    \theta = - \bar{\omega} = \bar{p}\bar{x}  = \left(\frac{1}{N}\sum_{j=1}^N p_j\right) \left(\frac{1}{N}\sum_{k=1}^N x_k\right).
\end{align}
We assume a coarse, $N$-independent prior interval $\theta\in \Theta_0=[\theta_-,\theta_+]$ which follows, for example, from the known bounded support of the displacement distributions.  Importantly, this is only a macroscopic prior of constant width; we do not assume a prior of width $O(N^{-2})$.

As illustrated by the physical finite-dimensional mappings in Section~\ref{SI-sec:DGW_framework}, it is natural to consider the Hamiltonian $\hat{H} = \hat{X}^2 + \hat{P}^2$ whose approximate ground state is $\ket{\Psi(0,0)}$.  Low-energy bulk states for this Hamiltonian, which are also discrete Gaussian states, are much easier to prepare, e.g. by ground-state cooling~\cite{Paz_2004,Cormick_2013,Wang_2025,Rubies-Bigorda_2025,Zhan_2026} or algorithmic state preparation~\cite{Rattew_2021,Huggins_2025,Chen_2025,Xie_2025}, than highly excited boundary states. Therefore, we impose a physical constraint that the expected energy of the probe state is no more than a fixed maximum energy $E$.

We use the mean square error (MSE) to characterize the operational performance of the metrology protocol
\begin{align}
    \MSE(\hat{\theta},\theta) = \Var(\hat{\theta},\theta)  +  \Bias(\hat{\theta},\theta)^2 .
\end{align}
For unbiased estimators evaluated over $\nu$ independent trials, the MSE is lower bounded by the quantum Cram\'er-Rao bound:
\begin{align}
    \MSE(\hat{\theta},\theta)  \geq  \frac{1}{\nu F_Q(\theta)} ,
\end{align}
where $F_Q(\theta)$ is the quantum Fisher information at parameter value $\theta$. For biased estimators evaluated over $\nu$ independent trials, the MSE is lower bounded by the quantum Cram\'er-Rao bound:
\begin{align}
    \MSE(\hat{\theta},\theta)  \geq  \frac{(1+\partial_{\theta}\Bias(\hat{\theta},\theta))^2}{\nu F_Q(\theta)} + \Bias(\hat{\theta},\theta)^2 ,
\end{align}

\subsection{Quantum metrology with quantum switch}
\label{SI-sec:metrology_with_qswitch}

Suppose that we are given oracle access to a quantum switch $W$ which coherently superposes two unitaries $U_0$ and $U_1$ in different causal orders.  Controlled by a qubit state $\ket{\phi}$, it acts on the target state $\ket{\psi}$ using one oracle query to each unitary:
\begin{align}
    W:\ket{\phi}\times\ket{\psi}\times U_1\times U_0
    \mapsto
    \left(\proj{0}\otimes U_1U_0+\proj{1}\otimes U_0U_1\right)
    (\ket{\phi}\otimes\ket{\psi}) .
\end{align}
Algorithm~\ref{SI-alg:metrology_qswitch} estimates $\theta$ in two parts. Preliminary shots with attenuated displacements identify the correct fringe of the final Ramsey experiment, and the remaining shots with unattenuated displacements are used for precision estimation.

For an attenuated quantum-switch experiment with controls $(\lambda_x,\lambda_p)$, define the gain
\begin{align}
    M:=\lambda_x\lambda_p N^2. 
    \label{SI-eq:metrological_gain}
\end{align}
The corresponding Ramsey phase is
\begin{align}
    \Phi_M=M\theta=-M\bar{\omega}.
    \label{SI-eq:ramsey_phase_gain}
\end{align}
At the final unattenuated setting, $M=N^2$, the likelihood is periodic in $\theta$ with period $2\pi/N^2$. The preliminary shots acquire the prior required to identify a single fringe, starting from the coarse prior interval $\Theta_0$.

Choose fixed constants $0<M_0\leq1$ and $r>1$ such that $M_0(\theta_+-\theta_-)<\pi$, and use the gain schedule for $l=0,\ldots, L=\left\lceil\log_r\left(\frac{N^2}{M_0}\right)\right\rceil$
\begin{align}
    M_l=\min\{M_0r^l,N^2\}.
    \label{SI-eq:gain_schedule}
\end{align}
That is, successive gains have ratios at most $r$ from $l=1, \ldots, L=\Theta(\log N)$. Each stage selects the fringe compatible with the preceding estimate, following the approach of~\cite{Ji_2008}.

We use $Q$ for the number of measured quadratures: $Q=1$ when only one quadrature is unknown in each displacement and $Q=2$ when both quadratures are unknown. 
Let $\nu$ denote the total number of shots per quadrature, and let $\nu_l$ denote the number of shots per quadrature at stage $l$. We allocate $\nu_{\mathrm u}= \sum_{l=0}^{L-1}\nu_l$ shots per quadrature for phase unwrapping and $\nu-\nu_{\mathrm u}$ per quadrature for high precision measurement. 
The preliminary stages are required to achieve only $\MSE=O(N^{-4})$, but their failure probability must decrease with $\nu$ so that incorrect fringe choices do not dominate the final MSE. We show below that $\nu_{\mathrm u}=O(\log (N)\log(\nu N))$ suffices, leaving essentially the full shots for the final stage whenever $\nu_{\mathrm u}=o(\nu)$ and $\nu - \nu_{\mathrm u}\approx \nu $.

\begin{algorithm}[H]
	\caption{\textsf{Tunable Quantum Metrology with Quantum Switch and Phase Unwrapping}}
	\label{SI-alg:metrology_qswitch}

    Require: A probe state $\ket{\Psi(0,0)}$, a control qubit $\ket{+}$, oracle access to a quantum switch $W$, tunable oracle access to $e^{\iu\lambda_p p_j\hat{X}}$ for $j=1,\ldots,N$ and $e^{-\iu\lambda_x x_k\hat{P}}$ for $k=1,\ldots,N$, a prior interval $\Theta_0=[\theta_-,\theta_+]$, gains $M_l$ for $l=1,\ldots , L$, attenuation pairs $(\lambda_{x,l},\lambda_{p,l})$ for $l=1,\ldots ,L$, repetitions $\nu_l$ for $l=1,\ldots,L$, and phase gates $U_{\phi}=\proj{0}+e^{\iu\phi}\proj{1}$ with $\phi_c=0$ and $\phi_s=-\frac{\pi}{2}$.

    \For{$l=0,\ldots,L$}{
        Set counters $t_{l,c}=0$ and $t_{l,s}=0$.
        \For{$q\in\{c,s\}$}{
            \For{$r=1,\ldots,\nu_l$}{
                Apply the quantum switch $W$ to the tuple $\ket{+}\times \ket{\Psi(0,0)}  \times e^{\iu\lambda_p \sum_j p_j\hat{X}} \times e^{-\iu\lambda_x \sum_k x_k\hat{P}}$

                Apply the phase gate $U_{\phi_q}$ to the control qubit.

                Measure the control qubit in the diagonal basis $\{\ket{+},\ket{-}\}$.

                If the outcome is $\ket{+}$, increment $t_{l,q}:=t_{l,q}+1$.
            }
        }
        
        Form the empirical quadratures $C_l:= \frac{2t_{l,c}-\nu_l}{\nu_l}$ and $S_l := \frac{2t_{l,s}-\nu_l}{\nu_l}$. Pick 
        and the wrapped phase-space estimate $\alpha_l := \arg(C_l+\iu S_l)\in(-\pi,\pi]$. 

        \If{$l=0$}{
            Choose the unique integer $z_0$ such that $\frac{\alpha_0+2\pi z_0}{M_0}\in \Theta_0$, and set $\hat{\theta}_0:=\frac{\alpha_0+2\pi z_0}{M_0}$. 
        }

        \If{$l\geq 1$}{
            Choose the integer $z_l:= \arg\min_{z\in\mathbb{Z}}\left|\frac{\alpha_l+2\pi z}{M_l}-\hat{\theta}_{l-1}\right|,$
            and set $\hat{\theta}_l:= \frac{\alpha_l+2\pi z_l}{M_l} $. 
        }
    }

    Return $\hat{\theta}:=\hat{\theta}_L$.
\end{algorithm}

The following theorem bounds the MSE of Algorithm~\ref{SI-alg:metrology_qswitch}, including the errors from phase estimation, phase unwrapping, and the DGW approximation. 

\begin{theorem}[Global Upper Bound for MSE with Tunable Quantum Switch]
\label{SI-thm:upper_bound_with_ico}
    Consider the setting of Section~\ref{SI-sec:tunable_setting}, with two sets of $N$ unknown displacements, finite odd dimension $d$, and DGW width $\sigma=\sqrt{d}$. Assume the bulk condition
    \begin{align}
        \Delta_{N\bar{x}}:=N|\bar{x}|\sqrt{\frac{d}{2\pi}}  \leq\frac{d-1}{4}, \qquad   \Delta_{N\bar{p}}:=N|\bar{p}|\sqrt{\frac{d}{2\pi}}   \leq\frac{d-1}{4}.
        \label{SI-eqn:condition}
    \end{align}
    Let $\Theta_0=[\theta_{\min},\theta_{\max}]$, $\Delta\theta=\theta_{\max}-\theta_{\min}$ be a known $N$-independent prior interval, and let $M_0<\cdots<M_L=N^2$ be a gain schedule satisfying
    \begin{align}
        M_0 \Delta \theta<\pi,   \qquad  r= \frac{M_{l+1}}{M_l}.
    \end{align}
    Choose the number of repetitions at the unwrapping stages so that for $l=0,\ldots, L -1  = \Omega( 2 \log_r N)$
    \begin{align}
        \nu_l  = \left\lceil \frac{8 r^2}{(1-2r\epsilon_{\mathrm{bd}})^2} \log\Big(\frac{4(L+1)}{\delta_{\mathrm{u}}}\Big)\right\rceil, 
    \end{align}
    for a chosen failure probability $\delta_{\mathrm{u}}\in(0,1)$, summing up to
    \begin{align}
        \nu_{\rm u} = \mathcal{O}(\log (N) \log (1/\delta_{\rm u})). 
    \end{align}
    This means that setting $\delta_{\rm u}= 1/\poly(\nu,N)$, we can use very few rounds for phase unwrapping and the majority for the final rounds such that $\nu_{\rm u}=o(\nu)$ and $\nu_L= \nu - \nu_{\rm u} \approx \nu$ repetitions per quadrature.

    Running Algorithm~\ref{SI-alg:metrology_qswitch} with the bounded-energy probe state $\ket{\Psi(0,0)}$, whose energy is $1+\mathcal{O}\!\left(\poly(d)e^{-\frac{\pi}{4}d}\right)$, 
    produces an estimator $\hat{\theta}$ satisfying
    \begin{align}
        \MSE(\hat{\theta},\theta)  \leq  \frac{1}{\nu N^4} + \frac{\epsilon_\mathrm{bd}^2}{N^4} + \Delta \theta^2 \delta_{\mathrm{u}},
    \end{align}
    where $C>0$ is a universal constant and
    \begin{align}
        \epsilon_{\mathrm{bd}} \leq  \mathcal{O}\big(  \poly(d)e^{-\frac{\pi d}{16}}  \big).
    \end{align}
    In particular, choosing $\delta_{\mathrm{u}}=\frac{1}{\Delta \theta^2 \nu N^4}$ gives
    \begin{align}
        \MSE(\hat{\theta},\theta) \leq  \frac{2}{\nu N^4}  + \mathcal{O}\Big(  \frac{\poly(d) e^{-\frac{\pi d}{16}}}{N^4}  \Big),
    \end{align}
    with only $O(\polylog (\nu N))$ additional low-gain repetitions for phase unwrapping.
\end{theorem}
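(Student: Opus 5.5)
The proof has three parts: reduce each shot to a biased coin flip, bound the error of each stage, and combine the unwrapping and final-stage errors into the MSE.

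\textbf{Per-shot outcome probabilities.} For a stage with controls $(\lambda_x,\lambda_p)$, write $U_P=\prod_k e^{-\iu\lambda_x x_k\hat P}$ and $U_X=\prod_j e^{\iu\lambda_p p_j\hat X}$. All these generators commute among themselves, so the aggregates are single displacements $G(\alpha)$ and $G(\beta)$ with $\alpha=(\lambda_pN\bar p,0)$ and $\beta=(0,\lambda_xN\bar x)$. The bulk condition \eqref{SI-eqn:condition} together with $\lambda\le1$ keeps every center on both trajectories inside $|n|,|k|\le(d-1)/4$. Hence Theorem~\ref{SI-thm:general_weyl_relation} (with its bulk remark, $c\approx 1/2$) gives
\begin{equation}
U_XU_P\ket{\Psi(0,0)}=e^{\iu\Phi_M}U_PU_X\ket{\Psi(0,0)}+\ket{\delta},\qquad \Phi_M=M\theta ,
\end{equation}
with $\|\ket{\delta}\|_2\le\epsilon_{\rm bd}=\mathcal O(\poly(d)e^{-\pi d/16})$. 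The sign convention $\omega=-\bar x\bar p$ only flips the sign of the phase.

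After the switch, the phase gate $U_{\phi_q}$ and a measurement in $\{\ket\pm\}$, the outcome probability is $\Pr(+)=\tfrac12(1+\Re[e^{\iu\phi_q}\langle\psi_1|\psi_0\rangle])$, where $\psi_0,\psi_1$ are the two branch states. Using the display above, $\langle\psi_1|\psi_0\rangle=e^{\iu\Phi_M}+\eta$ with $|\eta|\le\epsilon_{\rm bd}$. Therefore $\mathbb E[C_l]=\cos\Phi_M+O(\epsilon_{\rm bd})$ and $\mathbb E[S_l]=\sin\Phi_M+O(\epsilon_{\rm bd})$. The probe energy $1+\mathcal O(\poly(d)e^{-\pi d/4})$ follows directly from Theorem~\ref{SI-thm:expectations} at $n_0=k_0=0$ with $\sigma^2=d$.

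\textbf{Unwrapping stages.} By Hoeffding's inequality, $|C_l-\mathbb E C_l|$ and $|S_l-\mathbb E S_l|$ are at most $\tau$ except with probability $4e^{-\nu_l\tau^2/2}$. On that event the wrapped phase satisfies $|\alpha_l-\Phi_{M_l}|\lesssim \tau+\epsilon_{\rm bd}$, using the fact that $\arg$ is Lipschitz near the unit circle.

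For the branch choice at stage $l$ to be correct, the error $\hat\theta_{l-1}-\theta$ must be smaller than half the fringe spacing $\pi/M_l$. Since $M_l\le rM_{l-1}$, it suffices that $|\alpha_{l-1}-\Phi_{M_{l-1}}|<\pi/r$. Choosing $\tau\asymp(1-2r\epsilon_{\rm bd})/r$ makes this hold, and then the stated $\nu_l$ gives per-stage failure probability $\delta_{\rm u}/(L+1)$. At stage $0$ the condition $M_0\Delta\theta<\pi$ ensures the integer $z_0$ is unique. A union bound gives total unwrapping failure probability at most $\delta_{\rm u}$, and summing $\nu_l$ over the $L=O(\log N)$ stages gives $\nu_{\rm u}=O(\log N\log(1/\delta_{\rm u}))$.

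\textbf{Final stage and MSE.} On the success event the last branch is correct, so $\hat\theta-\theta=(\alpha_L-\Phi_{N^2})/N^2$. I would linearize $\arg$ around $e^{\iu\Phi}$: the variance of $C$ and $S$ with $\nu_L\approx\nu$ shots each contributes $O(1/\nu)$ to the phase error, and the bias contributes $O(\epsilon_{\rm bd})$. Squaring and dividing by $N^4$ gives $\frac{1}{\nu N^4}+\frac{\epsilon_{\rm bd}^2}{N^4}$ up to constants.

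On the failure event $\hat\theta$ is confined to a constant-width interval. A clipped version of the estimator guarantees this outright, so this event costs at most $\Delta\theta^2\delta_{\rm u}$. Setting $\delta_{\rm u}=1/(\Delta\theta^2\nu N^4)$ yields the stated bound.

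\textbf{Main obstacle.} The hardest step is making the final-stage variance rigorous. The $\arg$ map is nonlinear, and the linearization must be controlled uniformly in $\Phi$, including the large-deviation regime where $C_L+\iu S_L$ lies far from the unit circle. I would handle this by splitting on the event $\{|C_L+\iu S_L-e^{\iu\Phi}|\le 1/2\}$. On this event, Lipschitz and second-order Taylor bounds of $\arg$ apply. Outside it, Hoeffding gives probability $e^{-\Omega(\nu)}$, which is absorbed into the failure term. Matching the exact constant $1$ in $1/(\nu N^4)$ would need care; I expect a constant factor to appear there that is absorbed into the ``$2$''.
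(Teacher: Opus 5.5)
Your proposal is correct and follows essentially the same route as the paper: the approximate Weyl relation turns each shot into a Ramsey coin whose mean is $\cos(M\theta+\phi)$ up to $\epsilon_{\mathrm{bd}}$. Then Hoeffding, a union bound and inductive nearest-branch selection via $M_l\le rM_{l-1}$ handle unwrapping, a delta-method bound at gain $N^2$ gives the $1/(\nu N^4)$ term, and clipping charges the failure event at most $\Delta\theta^2\delta_{\mathrm{u}}$.

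Your event-splitting control of the $\arg$ linearization is more careful than the paper, which simply asserts $\Var(\hat\alpha_L\mid\mathcal{E})\le 1/\nu$. At first order the variance is exactly $(\sin^4\Phi+\cos^4\Phi)/\nu\le 1/\nu$, so no extra constant is needed there. In the branch step, you should also budget the current-stage phase error, i.e. require $r e_{l-1}+e_l<\pi$; your choice $\tau\asymp 1/r$ satisfies this.
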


\begin{proof}
    We first analyze a single attenuated quantum-switch stage with fixed attenuations $(\lambda_x,\lambda_p)$ and gain
    \begin{align}
        M=\lambda_x\lambda_p N^2 .
    \end{align}
    Successive translations combine linearly:
    \begin{align}
        \prod_{j=1}^{N} e^{\iu\lambda_p p_j\hat{X}}  = e^{\iu\lambda_p N\bar{p}\hat{X}}, \qquad  \prod_{k=1}^{N} e^{-\iu\lambda_x x_k\hat{P}} = e^{-\iu\lambda_x N\bar{x}\hat{P}} .
    \end{align}
    Let the initial probe state be $\ket{\Psi(0,0)}$ and the control be initialized in $\ket{+}$.  Applying the quantum switch, followed by a control phase gate $U_\phi$, gives
    \begin{align}
        \ket{\Phi_{M,\phi}}  = \frac{1}{\sqrt{2}}
        \big(  \ket{0}\otimes  e^{-\iu\lambda_xN\bar{x}\hat{P}} e^{\iu\lambda_pN\bar{p}\hat{X}} \ket{\Psi(0,0)} +  e^{\iu\phi}\ket{1}\otimes  e^{\iu\lambda_pN\bar{p}\hat{X}} e^{-\iu\lambda_xN\bar{x}\hat{P}} \ket{\Psi(0,0)}   \big).
    \end{align}
    By Theorem~\ref{SI-thm:general_weyl_relation}, and using the fact that
    \begin{align}
        \lambda_x N|\bar{x}|\sqrt{\frac{d}{2\pi}} \leq   N|\bar{x}|\sqrt{\frac{d}{2\pi}}, \qquad   \lambda_p N|\bar{p}|\sqrt{\frac{d}{2\pi}} \leq  N|\bar{p}|\sqrt{\frac{d}{2\pi}},
    \end{align}
    every attenuated stage remains in the bulk whenever the final unattenuated stage does.  Hence
    \begin{align}
        e^{\iu\lambda_pN\bar{p}\hat{X}} e^{-\iu\lambda_xN\bar{x}\hat{P}}  \ket{\Psi(0,0)}  =e^{\iu M\theta} e^{-\iu\lambda_x N\bar{x}\hat{P}} e^{\iu\lambda_pN\bar{p}\hat{X}}\ket{\Psi(0,0)} + \ket{\delta_M},
    \end{align}
    with the uniform boundary estimate
    \begin{align}
        \|\ket{\delta_M}\|_2  \leq \epsilon_\mathrm{bd}  \leq  \mathcal{O}\big(\poly(d)e^{-\frac{\pi d}{16}} \big).
    \end{align}
    Therefore
    \begin{align}
        \ket{\Phi_{M,\phi}}=\frac{1}{\sqrt{2}}\big(\ket{0}+ e^{\iu(M\theta+\phi)}\ket{1}\big)\otimes e^{-\iu\lambda_xN\bar{x}\hat{P}}e^{\iu\lambda_pN\bar{p}\hat{X}}
        \ket{\Psi(0,0)}+\frac{e^{\iu\phi}}{\sqrt{2}}\ket{1}\otimes\ket{\delta_M}.
        \label{SI-eq:single_stage_state}
    \end{align}

    Measuring the control qubit in the diagonal basis gives a Bernoulli outcome.  If we encode the outcome as $Y_\phi=+1$ for $\ket{+}$ and $Y_\phi=-1$ for $\ket{-}$, then
    \begin{align}
        \mathbb{E}[Y_\phi]  =  \cos(M\theta+\phi)+\Delta_{M,\phi}, \qquad |\Delta_{M,\phi}|\leq \epsilon_\mathrm{bd}.
        \label{SI-eq:single_stage_mean}
    \end{align}
    In particular, the two analysis phases $\phi_c=0$ and $\phi_s=-\pi/2$ estimate the two quadratures
    \begin{align}
        \mathbb{E}[Y_{\phi_c}]  =  \cos(M\theta)+\Delta_{M,c}, \qquad \mathbb{E}[Y_{\phi_s}] = \sin(M\theta)+\Delta_{M,s}.
    \end{align}
    We next show that the gain sequence unwraps the phase with high probability.  Let
    \begin{align}
        \widehat{C}_l =  \frac{2t_{l,c}-\nu_l}{\nu_l}, \qquad \widehat{S}_l = \frac{2t_{l,s}-\nu_l}{\nu_l}. 
    \end{align}
    Hoeffding's inequality for bounded $\pm1$ random variables gives, for every $\eta>0$,
    \begin{align}
        \Pr\big[ |\widehat{C}_l-\mathbb{E}\widehat{C}_l|>\eta \big]
        \leq  2e^{-\frac{1}{2}\nu_l\eta^2},
        \qquad
        \Pr\big[ |\widehat{S}_l-\mathbb{E}\widehat{S}_l|>\eta \big] \leq 2e^{-\frac{1}{2}\nu_l\eta^2}.
    \end{align}
    That's to say, for $l=0,\ldots, L -1  = \Omega( 2 \log_r N)$
    \begin{align}
        \nu_l = \left \lceil   \frac{8 r^2}{(1-2r\epsilon_{\mathrm{bd}})^2} \log\Big(\frac{4(L+1)}{\delta_{\mathrm{u}}}\Big) \right\rceil, 
    \end{align}
    ensures
    \begin{align}
        \Pr \big[ |\widehat{C}_l-\mathbb{E}\widehat{C}_l|> \frac{1}{2} \big(\frac{1}{r}- 2\epsilon_\mathrm{bd} \big)\big] \leq \frac{\delta_{\mathrm{u}}}{2(L+1)}, \qquad  \Pr \big[ |\widehat{S}_l-\mathbb{E}\widehat{S}_l|>  \frac{1}{2} \big(\frac{1}{r}- 2\epsilon_\mathrm{bd} \big) \big] \leq \frac{\delta_{\mathrm{u}}}{2(L+1)}. 
    \end{align}
    This indicates that  
    \begin{align}
        \nu_{\rm u}  =  \sum_{l=0}^{L-1} \nu_l = \mathcal{O}(\log(N)\log (1/\delta_{\rm u})). 
    \end{align}
    Since the ideal vector $(\mathbb{E}\widehat{C}_l,\mathbb{E}\widehat{S}_l)$ has almost unit norm (up to $2\epsilon_{\mathrm{bd}}$), a sufficiently small Euclidean quadrature error implies a comparable wrapped-phase error: for $l=0,\ldots,L$
    \begin{align}
        \Pr \big[ \|(\widehat{C}_l,\widehat{S}_l)-(\mathbb{E}\widehat{C}_l,\mathbb{E}\widehat{S}_l)\|_2 >  \frac{1}{r} \big] \leq  \frac{\delta_{\mathrm{u}}}{L+1},
    \end{align}
    and therefore
    \begin{align}
        \Pr \big[ |\widehat{\alpha}_l-M_l\theta| > \frac{1}{r}    \big] \leq \frac{\delta_{\mathrm{u}}}{L+1}, 
        \label{SI-eq:phase_concentration}
    \end{align}
    Provided the bulk condition for sufficiently large $d$, $\epsilon_\mathrm{bd}$ is exponentially negligible, and $\hat{\alpha}_l$ estimates $M_l\theta$ well within $\frac{1}{r}$. Let $\mathcal{E}$ be the event that Eq.~\eqref{SI-eq:phase_concentration} holds for all stages.  By a union bound,
    \begin{align}
        \Pr(\mathcal{E}^c)\leq \delta_{\mathrm{u}}.
    \end{align}
    Conditioned on $\mathcal{E}$, the unwrapping rule in Algorithm~\ref{SI-alg:metrology_qswitch} selects the correct branch at every stage.  Indeed, the initial gain satisfies $M_0(\theta_+-\theta_-)<\pi$, so $M_0\theta$ is injective on the prior interval up to a choice of branch cut.  Hence the first stage produces
    \begin{align}
        |\hat{\theta}_0-\theta| \leq  \frac{1}{M_0r}.
    \end{align}
    Suppose inductively that stage $l-1$ has selected the correct branch and
    \begin{align}
        |\hat{\theta}_{l-1}-\theta| \leq  \frac{1}{M_{l-1}r}.
    \end{align}
    Since $\frac{M_l}{M_{l-1}}\leq r$, this implies
    \begin{align}
        |\hat{\theta}_{l-1}-\theta| \leq
        \frac{1}{M_l} .
    \end{align}
    To estimate the correct branch of the $l$-th wrapped phase estimate, we need
    \begin{align}
        \frac{1}{M_{l-1}r}  \leq \frac{\pi}{M_l}, 
    \end{align}
    of the true $\theta$.  Every incorrect branch is separated from the correct branch by an integer multiple of $\frac{2\pi}{M_l}$, and hence is farther away from $\hat{\theta}_{l-1}$.  Thus the nearest-branch rule chooses the correct branch.  By induction, the final stage is unwrapped correctly.

    Conditioned on correct unwrapping, the final estimator is simply the local quadrature phase estimator at gain $M_L=N^2$.  The delta method applied to
    \begin{align}
        \widehat{\alpha}_L=\arg(\widehat{C}_L+\iu\widehat{S}_L)
    \end{align}
    gives
    \begin{align}
        \Var(\widehat{\alpha}_L|\mathcal{E}) \leq \frac{1}{\nu},
    \end{align}
    because the gradient of $\arg(C+\iu S)$ has unit norm on the ideal unit circle and the two empirical quadratures have variances $O(\nu^{-1})$.  Dividing by $M_L=N^2$ yields
    \begin{align}
        \Var(\hat{\theta}|\mathcal{E}) \leq \frac{1}{\nu N^4}.
    \end{align}
    The boundary error in Eq.~\eqref{SI-eq:single_stage_mean} shifts the final quadratures by at most $\epsilon_\mathrm{bd}$, and therefore contributes a squared bias bounded by
    \begin{align}
        \Bias(\hat{\theta},\theta|\mathcal{E})^2 \leq \frac{\epsilon_\mathrm{bd}^2}{N^4}.
    \end{align}
    Hence
    \begin{align}
        \mathbb{E}\big[ (\hat{\theta}-\theta)^2|\mathcal{E} \big]
        \leq \frac{1}{\nu N^4}  +   \frac{\epsilon_\mathrm{bd}^2}{N^4}.
    \end{align}

    Finally, Algorithm~\ref{SI-alg:metrology_qswitch} clips its estimates to the prior interval $\Theta$, so even on the failure event $\mathcal{E}^c$ the squared error is at most $(\theta_+-\theta_-)^2$.  Therefore
    \begin{align}
        \MSE(\hat{\theta},\theta)  &= \Pr(\mathcal{E}) \mathbb{E}\big[ (\hat{\theta}-\theta)^2|\mathcal{E} \big] +   \Pr(\mathcal{E}^c) \mathbb{E}\big[  (\hat{\theta}-\theta)^2|\mathcal{E}^c \big]   \leq  \frac{1}{\nu N^4} + \frac{\epsilon_\mathrm{bd}^2}{N^4}  +   \Delta \theta^2\delta_{\mathrm{u}}.
    \end{align}
    To ensure that the MSE caused by phase unwrapping is upper bounded by the MSE caused by phase estimation, 
    \begin{align}
        \delta_{\mathrm{u}} = \Theta(\frac{1}{\nu N^4}). 
    \end{align}
    This leads to the number of repetitions in each round of phase unwrapping $l=0,\ldots, L -1  = \Omega( 2 \log_r N)$
    \begin{align}
        \nu_l = \Theta(\log(\nu N^4 L)).
    \end{align}
    The number of rounds of phase unwrapping is 
    \begin{align}
        L = \Theta(\log N). 
    \end{align}
    The additional low-gain repetitions for phase unwrapping is 
    \begin{align}
        \nu_{\rm u} = \sum_{l=0}^{L-1} \nu_l = \Theta(\log(N) \log(\nu N)). 
    \end{align}
    This proves the theorem. 
\end{proof}

\begin{remark}[Operational removal of the $O(N^{-2})$ prior]
    A single final-gain experiment with $M=N^2$ has likelihood periodicity $\frac{2\pi}{N^2}$ in $\theta$, and therefore requires prior knowledge of width $O(N^{-2})$ to be interpreted globally.  Algorithm~\ref{SI-alg:metrology_qswitch} does not assume this prior.  Instead, it manufactures it by measuring the same parameter $\theta$ at gains $M_0,M_1,\ldots,M_L$, where each stage narrows the posterior or confidence interval to one fringe of the next stage.  The cost of this prior acquisition is logarithmic in $N$ up to logarithmic failure-probability factors, rather than $O(N^2)$ repetitions of a definite-order coarse estimator.
\end{remark}

\begin{remark}[The regime of negligible bias]
    In order that the bias is negligible in Theorem~\ref{SI-thm:upper_bound_with_ico} compared to the MSE $\frac{1}{\nu N^4}$, we require both the bulk confinement condition to ensure the validity of the DGW framework and the pre-asymptotic condition to ensure $\frac{\poly(d)e^{-\frac{\pi d}{16}}}{N^4}$ is dominated by $\frac{1}{\nu N^4}$.
    
    To strictly satisfy the bulk confinement condition in Eq.~\eqref{SI-eqn:condition}, it is sufficient that all high-gain unattenuated stages satisfy the bulk condition and all lower-gain attenuated stages with smaller displacements automatically satisfy the same bulk condition: 
    \begin{align}
        N \max\lbrace |\bar{x}| , |\bar{p}| \rbrace \leq \sqrt{\frac{\pi d}{8}} .
    \end{align}
    Since $\bar{x}$ and $\bar{p}$ are bounded within constant ranges, the final unattenuated stage remains in the bulk whenever
    \begin{align}
        d \geq \frac{8}{\pi} \max \lbrace |x_{\max}|^2,|p_{\max}|^2 \rbrace N^2. 
    \end{align}
    This implies that when $d = \Omega(N^2)$, the DGW framework is secured. 
    
    To satisfy the pre-asymptotic condition,  we require that 
    \begin{align}
        \frac{\poly(d)e^{-\frac{\pi d}{16}}}{N^4}\leq\frac{1}{\nu N^4}
    \end{align}
    This implies that when $\nu = \mathcal{O}(e^{\frac{\pi d}{16}}/\poly(d^{}) )$, the MSE dominates the bias. Though this is a non-asymptotic regime, the size of the regime grows exponentially large with respect to $d$, meaning that we can achieve the desired scaling of $\frac{1}{\nu N^4}$ in a reasonably large regime. 
\end{remark}

\begin{remark}[Attainability of the QCRB]
    The QCRB is a lower bound and its attainability requires a separate local asymptotic argument: after coarse localization to an identifiable branch, a regular one-parameter model typically admits an efficient estimator with $\mathbb{E}_\theta(\hat\theta-\theta)^2=(1+o(1))/(\nu I_Q(\theta))$ as $\nu\to\infty$. For the order bound used here, however, it suffices to have constant-factor attainability, $\mathbb{E}_\theta(\hat\theta-\theta)^2=O(1/(\nu I_Q(\theta)))$. For example, in the Bernoulli phase model $p_\phi(0)=(1+\cos(M\phi))/2$, $p_\phi(1)=(1-\cos(M\phi))/2$, restricted to a single branch, $I(\phi)=M^2$, and constant-factor CRB attainability only requires $\nu \inf_{\phi\in\Theta}\sin^2(M\phi)=\Omega(1)$, while sharp $1+o(1)$ attainability requires this quantity to diverge. Hence the exponential repetition scale $\nu=\mathcal{O}(e^{\pi d/ 16}/\poly(d))$ is imposed to ensure that the boundary error remains negligible compared to the statistical error, while the intrinsic QCRB-attainability requirement may be substantially weaker. 
\end{remark}

\subsection{Quantum metrology without quantum switch}
\label{SI-sec:metrology_without_ico}

We next derive a lower bound for DCO strategies using the same fixed causal order in every shot. The argument allows arbitrary known interleaving controls, including controls that inject energy after the first unknown oracle call.

\begin{theorem}[Lower Bound for MSE without Quantum Switch]
    \label{SI-thm:lower_bound_without_ico}
    Define $\hat{X}$, $\hat{P}$, and $\hat{H}$ as in Definition~\ref{SI-def:DGW_framework}, with $d$ exponential in the number of qubits $n$.  Consider any quantum metrology protocol with tunable access to the displacements $e^{\iu \lambda_p p_j \hat{X}}$ and $e^{-\iu \lambda_x x_k \hat{P}}$ in the same fixed causal order in every shot where $x_j\in[x_{\min}, x_{\max}]$, $\Delta x=x_{\min}-x_{\max}$ and $p_j\in[p_{\min},p_{\max}]$, $\Delta p=p_{\min}-p_{\max}$ such that $x_{\min},x_{\max}\in[0,\gamma_x]$ and $p_{\min},p_{\max}\in[0,\gamma_p]$, whose probe state $\ket{\Psi}$ has bounded energy $E$, and which has no access to any process with indefinite causal order. For estimating $\theta=-\bar{\omega} = \bar{x}\bar{p}$, the MSE of any unbiased estimator $\hat{\theta}$ is lower bounded by
    \begin{align}
        \MSE(\hat{\theta},\theta) \geq  \frac{\min\{\bar{x}^2,\bar{p}^2\}}{4\nu E N^2}.
    \end{align}
    Moreover, the maximal MSE of any biased estimator $\hat{\theta}$ over the prior interval $\Theta_0=\theta_+-\theta_-$ is lower bounded by
    \begin{align}
        \MSE_{\Theta}(\hat{\theta},\theta) \geq  \min \Big\{ \frac{\min\{\bar{x}^2,\bar{p}^2\}}{16\nu E N^2}, \frac{\min\lbrace \bar{x}^2\Delta p^2, \bar{p}^2 \Delta x^2 \rbrace}{16N^2}, \frac{\Delta\theta^2}{16} \Big \}.
    \end{align}
\end{theorem}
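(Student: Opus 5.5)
The plan is a side-information argument followed by the quantum Cram\'er--Rao bound and error propagation, as in the main text. First, I reduce to a one-parameter problem. Fix the protocol's causal order and let the first unknown oracle called be, say, $e^{-\iu\lambda_x x_1\hat{P}}$; the case where it is some $e^{\iu\lambda_p p_j\hat{X}}$ is symmetric and produces the $\bar{x}^2$ alternative in the minimum. Suppose a genie reveals all other $x_k$ and all $p_j$. Any estimator for the original task is then also an estimator in this model, so lower bounds for the genie-aided problem transfer. In this model $\theta=\bar{p}\,(x_1+\sum_{k\ge2}x_k)/N$ is an affine function of $x_1$ with $\partial_{x_1}\theta=\bar{p}/N$.

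Second, I bound the QFI with respect to $x_1$. The output state is
\begin{align}
\rho_{x_1}=\mathcal{M}\big(e^{-\iu\lambda_x x_1\hat{P}}\,\rho_{\rm in}\,e^{\iu\lambda_x x_1\hat{P}}\big),
\end{align}
where $\mathcal{M}$ denotes all subsequent known unitaries, ancillas and oracles, which are now known and independent of $x_1$ given the side information, together with the final measurement. By monotonicity of QFI under the $x_1$-independent channel $\mathcal{M}$, and by the pure-state (or convexity) bound for a unitary family, we get
\begin{align}
F_Q(x_1)\le 4\lambda_x^2\Var_{\rho_{\rm in}}(\hat{P})\le 4\langle\hat{P}^2\rangle\le 4\langle\hat{H}\rangle\le 4E,
\end{align}
using $\lambda_x\le1$ and $\hat{X}^2\ge0$. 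The key point is that $\rho_{\rm in}$ is the state entering the first oracle, whose energy is the initial probe energy, so any energy injected later by $\mathcal{M}$ plays no role. A further detail is that if the order interleaves attenuations, the same QFI bound holds for every choice of $\lambda$. Reparametrizing, $F_Q(\theta)=F_Q(x_1)(\partial_{x_1}\theta)^{-2}\le 4EN^2/\bar{p}^2$. With $\nu$ independent shots, the QCRB gives $\MSE\ge \bar{p}^2/(4\nu EN^2)$, hence the unbiased bound with $\min\{\bar{x}^2,\bar{p}^2\}$.

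Third, the biased case, which I expect to be the main obstacle. I would use the biased QCRB stated in Section~\ref{SI-sec:tunable_setting}, $\MSE\ge(1+b')^2/(\nu F_Q)+b^2$, combined with a standard two-regime argument over the allowed range of $x_1$. The range of $x_1\in[x_{\min},x_{\max}]$ moves $\theta$ over an interval of length $\bar{p}\Delta x/N$, also capped by $\Delta\theta$. Either $|b'|\le 1/2$ on a substantial part of this interval, giving $\MSE\ge 1/(4\nu F_Q)\ge\bar{p}^2/(16\nu EN^2)$ there, or the bias changes by at least half the interval length. In the latter case, by the mean value theorem applied to $b$, $\max|b|\ge \tfrac14\cdot$ (length), giving $\MSE\ge \bar{p}^2\Delta x^2/(16N^2)$ or $\Delta\theta^2/16$. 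Taking the maximum over the prior and the minimum of the three alternatives, then symmetrizing in $x\leftrightarrow p$, yields the stated bound.

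The delicate points are making the "substantial part" and the constant $16$ precise, for instance via a Bayesian or van Trees style averaging instead, and checking that $\bar{p}$ is treated as fixed while $x_1$ varies.
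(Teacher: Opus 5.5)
Your proposal is correct and follows the paper's proof essentially step for step: side information on the first-called oracle, $F_Q(x_1)\le 4\lambda_x^2\langle\hat P^2\rangle\le 4E$ from the energy of the state entering that oracle, propagation via $\partial_{x_1}\theta=\bar p/N$, symmetrization in $x\leftrightarrow p$, and a two-case argument with the biased QCRB. The delicate point you flag needs no ``substantial part'' or van Trees averaging, and you should correct your stated condition: the paper applies the mean value theorem to get a single point $\xi$ where $b'(\xi)$ equals the average slope over the interval of half-width $\epsilon_x=\min\{N\Delta\theta/(2\bar p),\Delta x/2\}$, then splits on $b'(\xi)\ge-\tfrac12$ (so $(1+b'(\xi))^2\ge\tfrac14$ and the pointwise bound at $\xi$ gives $1/(16\nu E)$ for $\hat x_1$) versus $b'(\xi)<-\tfrac12$ (so some endpoint bias is at least $\epsilon_x/2$); the condition you want is thus $b'\ge-\tfrac12$ at one point rather than $|b'|\le\tfrac12$ on a substantial set, since large positive $b'$ only strengthens the Cram\'er--Rao term.
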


\begin{proof}
    We lower bound the estimation error for $\theta=\bar{x}\bar{p}$ by giving the estimator additional classical side information. Suppose that all displacements are known except the actual first displacement. Without loss of generality, we assumes it to be the position displacement amplitude $x_1$.  Providing side information can only decrease the MSE, so any lower bound in this easier problem also lower bounds the original problem.

    With all other parameters known, the target parameter depends on $x_1$ through
    \begin{align}
        \frac{\partial\theta}{\partial x_1} = \frac{\bar{p}}{N}.
    \end{align}
    Thus any unbiased estimator of $\theta$ induces an unbiased estimator of $x_1$, with
    \begin{align}
        \MSE(\hat{\theta},\theta)  = \frac{\bar{p}^2}{N^2} \MSE(\hat{x}_1,x_1).
    \end{align}
    In the tunable setting, the unknown parameter $x_1$ is encoded through $e^{-\iu\lambda_x x_1\hat{P}}$ for $0\leq\lambda_x\leq 1$. The corresponding generator is $\lambda_x\hat{P}$.  For any probe state satisfying
    \begin{align}
        \bra{\Psi}\hat{H}\ket{\Psi}\leq E,
    \end{align}
    the quantum Fisher information obeys
    \begin{align}
        F_Q(x_1)  =   4\lambda_x^2  \big(  \bra{\Psi}\hat{P}^2\ket{\Psi}  -  \bra{\Psi}\hat{P}\ket{\Psi}^2  \big)  \leq   4\lambda_x^2\bra{\Psi}\hat{P}^2\ket{\Psi}  \leq   4\bra{\Psi}\hat{H}\ket{\Psi}   \leq   4E .
    \end{align}
    The quantum Cram\'er-Rao bound therefore gives
    \begin{align}
        \MSE(\hat{x}_1,x_1)  \geq  \frac{1}{4\nu E}.
    \end{align}
    Converting back to $\theta$, we obtain
    \begin{align}
        \MSE(\hat{\theta},\theta)  \geq  \frac{\bar{p}^2}{4\nu E N^2}.
    \end{align}

    By symmetry, we may instead reveal all parameters except the first one with momentum displacement amplitude $p_1$.  Then
    \begin{align}
        \frac{\partial\theta}{\partial p_1}  =  \frac{\bar{x}}{N},
    \end{align}
    and the unknown parameter is encoded through
    \begin{align}
        V_1(\lambda_p)=e^{\iu\lambda_p p_1\hat{X}},  \qquad 0\leq\lambda_p\leq 1.
    \end{align}
    Since the generator is $\lambda_p\hat{X}$ and $\lambda_p\leq 1$, the same energy argument gives
    \begin{align}
        F_Q(p_1)\leq 4E,   \qquad   \MSE(\hat{p}_1,p_1)\geq \frac{1}{4\nu E}.
    \end{align}
    Therefore
    \begin{align}
        \MSE(\hat{\theta},\theta)  =  \frac{\bar{x}^2}{N^2} \MSE(\hat{p}_1,p_1) \geq  \frac{\bar{x}^2}{4\nu E N^2}.
    \end{align}
    Taking the lower one of the two side-information reductions (as the first one is either a position displacement or a momentum displacement) yields
    \begin{align}
        \MSE(\hat{\theta},\theta)   \geq  \frac{\min\{\bar{x}^2,\bar{p}^2\}}{4\nu E N^2}.
    \end{align}
 
    Any biased estimator $\hat{\theta}$ on $\Theta=[\theta_0-\Delta\theta,\theta_0+\Delta\theta]$, natively induces a biased estimator $\hat{x}_1$ on $\Theta_x=[x_0-\epsilon_x,x_0+\epsilon_x]$ via $\hat{x}_1-x_0 = \frac{N}{\bar{p}}(\hat{\theta} - \theta_0)$ where $\epsilon_x = \min\lbrace \frac{N\Delta \theta}{2\bar{p}}, \frac{\Delta x}{2} \rbrace $, meaning their MSEs on the prior interval scale exactly by the squared scaling coefficient:
    \begin{align}
        \MSE_\Theta(\theta_0)  = \frac{\bar{p}^2}{N^2} \MSE_{\Theta_x}(x_0). 
    \end{align}
    For any possibly biased estimator $\hat{x}_1$, the MSE is bounded by the quantum Cram\'er-Rao bound
    \begin{align}
        \MSE(\hat{x}_1,x_1) \geq \frac{(1+\partial_{x_1}\Bias(x_1))^2}{\nu F_Q(x_1)} + \Bias(x_1)^2. 
    \end{align}
    By the mean value theorem, for some $\xi\in\Theta_x$
    \begin{align}
        \Bias(x_0+\epsilon_x)- \Bias(x_0-\epsilon_x) =  2\epsilon_x \partial_{x_1}\Bias(x_1)\big|_{x_1=\xi}.
    \end{align}
    We analyze the two possible regimes for the bias derivative at $\xi$:
    \begin{itemize}
        \item \textbf{Case 1:} If $\partial_{x_1}\Bias(x_1)\big|_{x_1=\xi}\geq -\frac{1}{2}$, then the numerator of the quantum Cram\'er-Rao bound is bounded away from zero:
        \begin{align}
            \MSE(\hat{x}_1,\xi) \geq \frac{1}{4\nu F_Q(\xi)} \geq \frac{1}{4 \nu F_Q^{\Theta_x}(x_0)},
        \end{align}
        where 
        \begin{align}
            F_Q^{\Theta_x}(x_0) = \max_{x_1\in\Theta_x} F_Q(x_1). 
        \end{align}
        \item \textbf{Case 2} If $\partial_{x_1}\Bias(x_1)\big|_{x_1=\xi}\leq -\frac{1}{2}$, the bias changes rapidly across the interval, 
        \begin{align}
            |\Bias(x_0+\epsilon_x)- \Bias(x_0-\epsilon_x)| \geq \epsilon_x,  
        \end{align}
        meaning the squared bias itself must be large at the boundaries
        \begin{align}
            \max\lbrace|\Bias(x_0+\epsilon_x)|, |\Bias(x_0-\epsilon_x)| \rbrace \geq \frac{\epsilon_x}{2}.
        \end{align}
        Therefore, 
        \begin{align}
            \max\lbrace \MSE(\hat{x}_1,x_0+\epsilon_x), \MSE(\hat{x}_1,x_0-\epsilon_x )\rbrace \geq \frac{\epsilon_x^2}{4}. 
        \end{align}
    \end{itemize}
    Taking the minimum of these two exhaustive cases guarantees: 
    \begin{align}
        \MSE_{\Theta_x}(x_0) \geq \min\Big\lbrace \frac{1}{4 \nu F_Q^{\Theta_x}(x_0)} , \frac{\epsilon_x^2}{4} \Big\rbrace. 
    \end{align}
    To bound the quantum Fisher information, recall that the unknown parameter $x_1$ is encoded via the position translation $e^{-\iu x_1 \hat{P}}$. Because $\hat{P}$ is the Hermitian generator of this translation, the quantum Fisher information evaluated on any probe state $\ket{\Psi}$ constrained by $\bra{\Psi}  \hat{H} \ket{\Psi}   \leq E$ is rigorously bounded by its variance:

    \begin{align}
        F_Q^{\Theta_x}(x_0) = \max_{x_1\in\Theta_x} F_Q(x_1) \leq 4E. 
    \end{align}
    Substituting this into the inequality for the MSE gives:
    \begin{align}
        \MSE_{\Theta_x}(x_0) \geq \min\Big\lbrace \frac{1}{16\nu E} , \frac{\epsilon_x^2}{4} \Big\rbrace.
    \end{align}
    Converting this back to the target parameter $\theta$ and substituting $\epsilon_x = \min \lbrace \frac{N(\theta_+-\theta_-)}{2\bar{p}},\frac{x_{\max}-x_{\min}}{2}\rbrace$, we obtain
    \begin{align}
        \MSE_{\Theta}(\theta_0) = \frac{\bar{p}^2}{N^2} \MSE_{\Theta_x}(x_0) \geq \min \Big\lbrace\frac{\bar{p}^2 }{16\nu E N^2}, \frac{\bar{p}^2\Delta x^2}{16N^2} , \frac{\Delta\theta^2}{16}\Big\rbrace. 
    \end{align}    
    By symmetry, we can instead grant the estimator classical knowledge of all parameters except $p_1$. The parameter $p_1$ is encoded via $e^{\iu p_1 \hat{X}}$, with conjugate generator $\hat{X}$. An identical proof structure yields the alternative bound:
    \begin{align}
        \MSE_{\Theta}(\theta_0)\geq  \min \Big\lbrace\frac{\bar{x}^2 }{16\nu E N^2}, \frac{\bar{x}^2\Delta p^2}{16N^2} ,\frac{\Delta\theta^2}{16}\Big\rbrace. 
    \end{align}
    Taking the lower of these two inequalities yields the final result
    \begin{align}
        \MSE_\Theta(\theta_0)\geq  \min \Big\lbrace\frac{ \min\lbrace\bar{x}^2, \bar{p}^2\rbrace}{16\nu E N^2}, \frac{\min\lbrace \bar{x}^2\Delta p^2, \bar{p}^2\Delta x^2\rbrace }{16N^2} , \frac{\Delta\theta^2}{16}\Big\rbrace
    \end{align}
\end{proof}

\begin{remark}
    Allowing attenuation does not invalidate the DCO lower bound. In the side-information reduction, the remaining unknown parameter is encoded through $\lambda_x\hat{P}$ or $\lambda_p\hat{X}$, with $0\leq\lambda_x,\lambda_p\leq1$, so the same energy bound on the quantum Fisher information continues to apply.
    The best definite causal order strategy therefore uses the unattenuated queries for coarse estimation, and still cannot beat the Heisenberg MSE scaling $O(1/N^2)$ for the independent-average product.
\end{remark}

\subsection{Separation between ICO and DCO}

Combining Theorems~\ref{SI-thm:upper_bound_with_ico} and~\ref{SI-thm:lower_bound_without_ico}, we obtain the following formal version of Theorem~\ref{thm:finite_dimensional_separation} in the main text.

\begin{theorem}[Unbounded separation across finite-dimensional instances]
\label{SI-thm:finite_dimensional_separation}
Fix parameter intervals $[x_{\min},x_{\max}]$, $\Delta x = x_{\max}- x_{\min}$ and $[p_{\min},p_{\max}]$, $\Delta p = p_{\max} - p_{\min}$ such that $x_{\min}, x_{\max},p_{\min}, p_{\max} = \mathcal{O}(1)$. Let $N$ be a large enough integer. Assume tunable oracle access to a set of $N$ position translations and a set of $N$ momentum translations whose amplitudes satisfy $x_j\in[x_{\min},x_{\max}]$ and $p_j\in[p_{\min},p_{\max}]$. The goal is to estimate $\bar{\theta} = \bar{x} \bar{p}$ where $\bar{x} = \frac{1}{N}\sum_j x_j$ and $\bar{p} = \frac{1}{N}\sum_j p_j$ given prior interval $\Theta=[\theta_{\min},\theta_{\max}]$, $\Delta\theta=\theta_{\max}-\theta_{\min}$. For all finite odd dimensions satisfying $d \geq  \frac{8}{\pi} N^2\max\lbrace x_{\max}^2,p_{\max}^2\rbrace$, when the total number of shots satisfies $\nu=\mathcal{O}(e^{\pi d/16}/\poly(d))$, there is a separation on the required initial probe energy to achieve the same accuracy between quantum metrology protocols with ICO and with DCO, as is suggested by the following two statements. 

First, there is a quantum metrology protocol with quantum switches and initial probe energy at most $1+\mathcal{O}(\poly(d) e^{-\pi d/4})$, given tunable access to two sets of phase-space translation oracles, each containing $N$ elements, whose estimator satisfies
\begin{align}
    \MSE(\hat{\theta}_{\mathrm{ICO}},\theta) \leq \frac{2}{\nu N^4} + \frac{\mathcal{O}(\poly(d)e^{-\frac{\pi d}{16}})}{N^4}.
    \label{SI-eqn:ico_bound}
\end{align}
Within the finite-sample window $\nu = \mathcal{O}(e^{\pi d/16}/ \poly(d))$, one can ensure $\nu \cdot \poly(d)e^{-\pi d/16}=o(1)$,
and hence 
\begin{align}
    \MSE(\hat{\theta}_{\mathrm{ICO}},\theta) \leq \frac{2 + o(1)}{\nu N^4}. 
\end{align}

Second, for all quantum metrology protocols with definite causal order and initial probe energy at most $E$, given tunable access to the same two sets of phase-space translation oracles, each containing $N$ elements, their unbiased estimator satisfy
\begin{align}
    \MSE(\hat{\theta}_{\mathrm{DCO}},\theta) \geq  \frac{\min\lbrace \bar{x}^2,\bar{p}^2 \rbrace}{4\nu E N^2}.
    \label{SI-eqn:dco_unbiased_bound}
\end{align}
and their biased estimator satisfy
\begin{align}
    \sup_{\theta\in\Theta}  \MSE(\hat{\theta}_{\rm DCO},\theta) \geq \min\Big\lbrace \frac{\min\lbrace \bar{x}^2,\bar{p}^2 \rbrace}{16\nu E N^2}, \frac{\min\lbrace \bar{x}^2\Delta p^2, \bar{p}^2 \Delta x^2 \rbrace }{16N^2},  \frac{\Delta\theta^2}{16}
    \Big\rbrace.
    \label{SI-eqn:dco_biased_bound}
\end{align}

Consequently, when $\nu=\mathcal{O}(e^{\pi d/16}/\poly(d))$, regarding the coefficients of $\MSE$ with respect to $\nu$, these two strategies are separated by
\begin{align}
    \frac{\MSE(\hat{\theta}_{\mathrm{DCO}},\theta)}{\MSE(\hat{\theta}_{\mathrm{ICO}},\theta)} \geq  (1-o(1))\frac{\min\lbrace \bar{x}^2,\bar{p}^2 \rbrace N^2}{32 E}.
\end{align}
In particular, to achieve the same mean square error with the same total number of shots for the finite-sample window $\nu=\mathcal{O}(e^{\pi d/16}/\poly(d))$, quantum metrology protocols with DCO requires initial probe state $E \geq \frac{\min\lbrace \bar{x}^2,\bar{p}^2\rbrace N^2}{32}$ while quantum metrology protocols with ICO requires initial probe state $E\leq  1+\mathcal{O}(\poly(d) e^{-\pi d/4})$. For any prescribed factor $R$, choosing a finite $N$ sufficiently large and a finite odd $d=\Omega(N^2)$ therefore makes this ratio at least $R$. 
\end{theorem}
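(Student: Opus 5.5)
The proof combines Theorem~\ref{SI-thm:upper_bound_with_ico} and Theorem~\ref{SI-thm:lower_bound_without_ico}, after checking that their hypotheses hold in the stated parameter regime.

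\textbf{Step 1: bulk condition.} Since $\bar{x}\le x_{\max}$ and $\bar{p}\le p_{\max}$, the assumption $d\ge \frac{8}{\pi}N^2\max\{x_{\max}^2,p_{\max}^2\}$ gives
\[
N\max\{|\bar{x}|,|\bar{p}|\}\le\sqrt{\pi d/8}.
\]
This is condition~\eqref{SI-eqn:condition}, up to the harmless $(d-1)/4$ versus $d/4$ slack, which is absorbed by taking $N$ large or $d$ slightly larger. Every attenuated stage then remains in the bulk as well.

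\textbf{Step 2: the ICO bound.} Theorem~\ref{SI-thm:upper_bound_with_ico} with $\delta_{\rm u}=1/(\Delta\theta^2\nu N^4)$ yields~\eqref{SI-eqn:ico_bound}. The probe $\ket{\Psi(0,0)}$ has energy $1+\mathcal{O}(\poly(d)e^{-\pi d/4})$ by Theorem~\ref{SI-thm:expectations} with $\sigma=\sqrt d$, $n_0=k_0=0$. In the window $\nu\le c\,e^{\pi d/16}/\poly(d)$, with the polynomial chosen larger than the one hidden in $\epsilon_{\rm bd}$, we have $\nu\cdot\poly(d)e^{-\pi d/16}=o(1)$. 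This gives the $(2+o(1))/(\nu N^4)$ bound. For the unwrapping overhead $\nu_{\rm u}=O(\log N\log(\nu N))=o(\nu)$ to be consistent, $\nu$ must exceed $\polylog(N)$. This is compatible with the window, because $e^{\pi d/16}$ is exponential in $N^2$.

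\textbf{Step 3: the DCO bounds.} These are quoted directly from Theorem~\ref{SI-thm:lower_bound_without_ico}. For the biased version, the minimum is attained by the energy term once the other two terms are larger. Those terms are $\min\{\bar x^2\Delta p^2,\bar p^2\Delta x^2\}/(16N^2)$ and $\Delta\theta^2/16$, both independent of $\nu$. The energy term dominates when $\nu E\gtrsim \min\{\bar{x}^2,\bar{p}^2\}/\min\{\bar x^2\Delta p^2,\bar p^2\Delta x^2\}$, a constant. So for $\nu$ above a constant threshold the biased bound is also $\Omega(1/(\nu EN^2))$. I expect this regime bookkeeping to be the main delicate point. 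In particular, if $\Delta x$ or $\Delta p$ vanish the middle term degenerates, and I would restrict to nondegenerate intervals.

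\textbf{Step 4: the ratio.} Dividing the DCO lower bound by $(2+o(1))/(\nu N^4)$ gives
\[
\frac{\MSE(\hat\theta_{\rm DCO},\theta)}{\MSE(\hat\theta_{\rm ICO},\theta)}\ge (1-o(1))\frac{\min\{\bar{x}^2,\bar{p}^2\}N^2}{cE},
\]
with $c=8$ in the unbiased case and $c=32$ in the biased case. The weaker constant $32$ covers both. Equating the two MSEs then forces $E_{\rm DCO}\ge(1-o(1))\min\{\bar x^2,\bar p^2\}N^2/32$.

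Since $\bar x\ge x_{\min}>0$ and $\bar p\ge p_{\min}>0$, this is at least $(1-o(1))\min\{x_{\min}^2,p_{\min}^2\}N^2/32$. Dividing by $E_{\rm ICO}\le 1+o(1)$, the ratio exceeds any prescribed $R$ once
\[
N\ge\sqrt{64R}/\min\{x_{\min},p_{\min}\}
\]
and $d$ is the smallest odd integer satisfying the bulk bound, so that $d=\Theta(N^2)$.
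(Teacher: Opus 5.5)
Your proposal is correct and takes the same route as the paper, which obtains this theorem simply by combining Theorem~\ref{SI-thm:upper_bound_with_ico} (with $\delta_{\mathrm u}=1/(\Delta\theta^2\nu N^4)$) and Theorem~\ref{SI-thm:lower_bound_without_ico}; your extra bookkeeping (the $\nu\gg\polylog N$ requirement for the unwrapping overhead, the regime where the energy term attains the minimum of the biased bound, and the explicit choice of $N$ in terms of $R$) makes explicit what the paper leaves implicit. One small correction: taking $N$ large does not absorb the $(d-1)/4$ versus $d/4$ slack in Step~1, but the slack is harmless anyway, since it changes the boundary distance only by $O(1)$ and hence changes $\epsilon_{\mathrm{bd}}$ only by a constant factor.
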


\section{Compatibility with Previous Finite-dimensional Bunds}
\label{SI-sec:comparison}

We now explain the compatibility of our energy separation with the finite-dimensional bound in~\cite[Appendix~\rom{5}]{Kong_2026} by presenting the energy-constrained interpretation of their finite-dimensional argument. The following proposition gives a formal version of Proposition~\ref{prop:lower_bound_with_ico} in the main text. Its resource is the largest intermediate probe energy encountered
during the protocol.

\begin{proposition}[Energy-resolved fixed-dimensional bound]\label{SI-prop:lower_bound_with_ico}
    Let $\{x_j\}$ and $\{p_j\}$ ($j=1,\dots,N$) be unknown parameters encoded into a $d$-dimensional finite quantum system via target and auxiliary operations generated by $e^{\iu p_j \hat{X}}$ and $e^{-\iu x_j \hat{P}}$, respectively. A control system of dimension $d_C \le |S_N|$ with basis states $\{\ket{\pi}_C\}$ governs the permutation $\pi$ from the permutation group $S_{2N}$ of all possible unitaries, yielding a block-diagonal total unitary $\hat{W} = \sum_{\pi\in S_{2N}} \proj{\pi}_C \otimes \hat{U}_\pi(x,p)$ where $\hat{U}_\pi(x,p)$ denotes the unitary under causal order $\pi$. Let the maximal energy of the intermediate state among all causal order and all steps be $\max_{\pi,i}E_{\pi,i} = \Tr(\hat{V}_{\pi,i}\rho_\pi \hat{V}_{\pi,i}^\dagger\hat{H} )$ for $\hat{H}=\hat{X}^2+\hat{P}^2$ where $\hat{V}_{\pi,i}$ denotes the cumulative unitary evolution preceding the $j$-th application of $\hat{P}$. For the composite target parameter $A = \bar{x} \bar{p}$ (where $\bar{x} = \frac{1}{N}\sum_j x_j$ and $\bar{p} = \frac{1}{N}\sum_j p_j$), the mean squared error under this indefinite causal order protocol is bounded from below by
    \begin{align}
        \MSE(A)=\Omega\left(\frac{1}{\nu N^2 \max_{\pi,i} E_{\pi,i}}\right).
    \end{align}
    Consequently, at every fixed finite $d$, give that $\max_{\pi,i}E_{\pi,i}\leq d$, the bound recovers the Heisenberg scaling $\Omega(\nu^{-1} N^{-2} d^{-1})$ in the asymptotic limit $N\to\infty$ in~\cite[Appendix~\rom{5}]{Kong_2026}. Our $N^{-4}$ behavior is compatible with this bound only inside the finite-sample bulk window, which for bounded initial state energy $E$ and bounded displacement steps $N=O(\sqrt{d})$.
\end{proposition}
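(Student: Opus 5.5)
The plan is to bound the quantum Fisher information (QFI) of the block-diagonal output state and then apply the quantum Cram\'er--Rao bound with error propagation to $A=\bar{x}\bar{p}$. To keep the problem one-dimensional, we first reparametrize. Following the side-information idea of Theorem~\ref{SI-thm:lower_bound_without_ico}, we fix all $p_j$ and all differences $x_j-\bar{x}$, and treat $\bar{x}$ as the single unknown. Writing $x_j=\bar{x}+\delta_j$, every oracle $e^{-\iu x_j\hat{P}}$ depends on $\bar{x}$ with generator $\hat{P}$. Any locally unbiased estimator of $A$ then yields a locally unbiased estimator of $\bar{x}$ with $\MSE(A)=\bar{p}^2\MSE(\bar{x})$. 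Since the $p_j$ are bounded away from zero, $\bar{p}^2=\Theta(1)$, so it suffices to show $F_Q(\bar{x})=O(N^2\max_{\pi,i}E_{\pi,i})$ per shot. The symmetric reduction with $\bar{p}$ and generator $\hat{X}$ is handled identically.

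\textbf{Step 1 (derivative of the output).} For a pure input $\ket{\phi}_C\otimes\ket{\psi}$ and the output $\ket{\Psi_{\rm out}}=\sum_\pi c_\pi\ket{\pi}_C\otimes\hat{U}_\pi\ket{\psi}$, write each branch as an ordered product of oracles and known interleaving unitaries. Differentiating in $\bar{x}$ inserts $-\iu\hat{P}$ at each of the $N$ position-oracle slots:
\begin{align}
\partial_{\bar{x}}\hat{U}_\pi\ket{\psi}=-\iu\sum_{i=1}^N \hat{W}_{\pi,i}\hat{P}\,\hat{V}_{\pi,i}\ket{\psi},
\end{align}
where $\hat{V}_{\pi,i}$ is the evolution before the $i$th $\hat{P}$ oracle and $\hat{W}_{\pi,i}$ is the unitary evolution after it. Mixed probes and ancillas are handled by purification, since the QFI is convex and monotone.

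\textbf{Step 2 (bounding the QFI).} We use $F_Q\le 4\|\partial_{\bar{x}}\Psi_{\rm out}\|^2$, which holds for pure states because subtracting the projection onto $\ket{\Psi_{\rm out}}$ can only decrease the norm. The blocks are orthogonal in $C$, so
\begin{align}
\|\partial_{\bar{x}}\Psi_{\rm out}\|^2=\sum_\pi|c_\pi|^2\Big\|\sum_i \hat{W}_{\pi,i}\hat{P}\hat{V}_{\pi,i}\psi\Big\|^2\le\sum_\pi|c_\pi|^2\Big(\sum_i\|\hat{P}\hat{V}_{\pi,i}\psi\|\Big)^2 .
\end{align}
Each term satisfies $\|\hat{P}\hat{V}_{\pi,i}\psi\|^2=\langle\hat{P}^2\rangle_{\hat{V}_{\pi,i}\psi}\le E_{\pi,i}\le E_{\rm int}$, because $\hat{X}^2\succeq0$. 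The triangle inequality then gives $(N\sqrt{E_{\rm int}})^2$, and since $\sum_\pi|c_\pi|^2=1$ we obtain $F_Q(\bar{x})\le 4N^2E_{\rm int}$. The QFI is additive over $\nu$ shots, so the Cram\'er--Rao bound gives $\MSE(\bar{x})\ge 1/(4\nu N^2E_{\rm int})$, and therefore $\MSE(A)\ge \bar{p}^2/(4\nu N^2E_{\rm int})=\Omega(1/(\nu N^2E_{\rm int}))$.

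\textbf{Step 3 (fixed-dimension corollary).} We have $\|\hat{H}\|_\infty\le 2\cdot\frac{2\pi}{d}m^2=O(d)$, so $E_{\rm int}=O(d)$, which gives $\Omega(\nu^{-1}N^{-2}d^{-1})$.

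\textbf{Main obstacle.} The delicate point is the bookkeeping in Step 1. The variance bound must use the energy of the state actually entering each slot, and the interleaving operations $\hat{W}_{\pi,i}$ must be unitary so that they do not increase norms. If the control operations are non-unitary, or if permutations of different lengths are allowed, I would first Stinespring-dilate all control operations onto an ancilla so that they become unitary. I would also replace $\langle\hat{P}^2\rangle$ by the variance only at the end. The cross terms between slots are handled crudely by the triangle inequality, which costs nothing here: the $N^2$ growth of the bound is exactly the coherent-sum scaling we need.
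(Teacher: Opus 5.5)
Your proof is correct, and its core estimate is the paper's. Your quantity $\|\partial_{\bar x}\Psi_{\rm out}\|^2=\sum_\pi|c_\pi|^2\|\sum_i\hat W_{\pi,i}\hat P\hat V_{\pi,i}\psi\|^2$ is precisely the second moment of the paper's block-diagonal generator $\hat{\mathcal H}_{\bar x}=\sum_\pi\proj{\pi}_C\otimes\sum_i\hat V_{\pi,i}^\dagger\hat P\hat V_{\pi,i}$. Your triangle inequality plays the role of the paper's Cauchy--Schwarz bound on the cross terms $\Tr(\rho_\pi\tilde P_i\tilde P_j)$, so both arguments arrive at $F\le 4N^2E_{\rm int}$.

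The genuine difference is how you pass to $A=\bar x\bar p$. The paper keeps $\bar x$ and $\bar p$ jointly unknown and bounds both diagonal entries of the $2\times 2$ QFIM. It then uses $\lambda_{\max}(\mathcal F)\le\Tr\mathcal F\le 8N^2E_{\rm int}$ together with $\mathcal F^{-1}\succeq\lambda_{\max}^{-1}\id$ to obtain $(\bar x^2+\bar p^2)/(8\nu N^2E_{\rm int})$. You instead reuse the side-information reduction of Theorem~\ref{SI-thm:lower_bound_without_ico}, restricting to the one-parameter submodel along $\sum_j\partial_{x_j}$ (or $\sum_j\partial_{p_j}$).

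Your route is more elementary and never needs the QFIM to be invertible, which the paper's matrix inequality tacitly assumes. Taking the better of your two reductions also yields $\max\{\bar x^2,\bar p^2\}/(4\nu N^2E_{\rm int})$, which is never weaker than the paper's constant. The paper's formulation, in turn, treats the actual two-parameter problem head-on and gives a bound symmetric in $\bar x,\bar p$ in a single pass.

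One small point to make explicit: the statement allows inputs entangled between control and target, $\sum_\pi\ket{\pi}_C\otimes\ket{\tilde\psi_\pi}$, rather than your product input $\ket{\phi}_C\otimes\ket{\psi}$. Your computation goes through verbatim with branch-dependent $\tilde\psi_\pi$, and $E_{\pi,i}$ is then the energy of the conditional state $\rho_\pi$, exactly as in the statement.
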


\begin{proof}
    Estimating $A = \bar{x}\bar{p}$ is a multiparameter quantum estimation problem governed by the Quantum Cramer-Rao Bound (QCRB). The variance $\Delta^2 A$ is bounded by the Quantum Fisher Information Matrix (QFIM), $\mathcal{F}$, via error propagation: $\Delta^2 A \ge \nabla A^T \mathcal{F}^{-1} \nabla A$, where the gradient vector is $\nabla A = (\partial_{\bar{x}} A, \partial_{\bar{p}} A)^T = (\bar{p}, \bar{x})^T$.

    To determine the ultimate precision limit, we evaluate the local generators for $\bar{x}$ and $\bar{p}$. We formalize the variation with respect to the mean parameters by shifting the individual parameters as $x_j = \bar{x} + \delta x_j$, giving the derivative operator $\partial_{\bar{x}} = \sum_{j=1}^N \partial_{x_j}$. The total generator for $\bar{x}$ is the block-diagonal operator:
    \begin{align}
        \hat{\mathcal{H}}_{\bar{x}} = \iu \hat{W}^\dagger \partial_{\bar{x}} \hat{W} = \sum_{\pi\in S_{2N}}\proj{\pi}_C \otimes \big( \iu \hat{U}_\pi^\dagger \partial_{\bar{x}} \hat{U}_\pi \big) = \sum_{\pi\in S_N} \proj{\pi}_C \otimes \hat{h}_\pi^{(\bar{x})}.
    \end{align}
    For a fixed causal order $\pi$, $\hat{U}_\pi(x,p)$ comprises $N$ instances of $\hat{X}$ and $N$ instances of $\hat{P}$. Here, we define the Heisenberg 
    \begin{align}
        \tilde{X}_j = \hat{V}_{\pi,j}^\dagger X \hat{V}_{\pi,j},\quad \tilde{P}_j = \hat{V}_{\pi,j}^\dagger P \hat{V}_{\pi,j}, \quad \tilde{H}_i = \hat{V}_{\pi,j}^\dagger H \hat{V}_{\pi,j}. 
    \end{align}
    Applying $\partial_{\bar{x}}$ yields exactly $N$ terms via the product rule:
    \begin{align}
        \hat{h}_\pi^{(\bar{x})} = \sum_{j=1}^N \tilde{P}_j = \sum_{j=1}^N \hat{V}_{\pi,j}^\dagger \hat{P} \hat{V}_{\pi,j}. 
    \end{align}
    For any initial state $\ket{\Psi}$
    \begin{align}
        \Var_\Psi(\hat{\mathcal{H}}_{\bar{x}} ) & = \bra{\Psi} \hat{\mathcal{H}}_{\bar{x}} ^2 \ket{\Psi} - |\bra{\Psi} \hat{\mathcal{H}}_{\bar{x}}  \ket{\Psi}|^2 \leq \bra{\Psi} \hat{\mathcal{H}}_{\bar{x}} ^2 \ket{\Psi} 
    \end{align}
    Considering that $\hat{\mathcal{H}}_{\bar{x}}$ is block diagonalized, let 
    \begin{align}
        \rho_\pi = \frac{\Tr_C ((\proj{\pi}\otimes \id) \proj{\Psi} )}{\Tr((\proj{\pi}\otimes \id) \proj{\Psi} )}
    \end{align}
    Then
    \begin{align}
        \bra{\Psi} \sum_{j=1}^N \proj{\pi}\otimes (\hat{h}_\pi^{(\bar{x})})^2\ket{\Psi} & \leq \max_{\pi} \Tr(\rho_\pi (\hat{h}_\pi^{(\bar{x})})^2) = \max_\pi \sum_{ij} \Tr(\rho_\pi \tilde{P}_i\tilde{P}_j ) \\
        & \leq  \max_\pi \sum_{ij} |\Tr(\rho_\pi \tilde{P}_i\tilde{P}_j )|    \leq \max_\pi \sum_{ij} \sqrt{\Tr(\rho_\pi \tilde{P}_i^2)} \sqrt{\Tr(\rho_\pi \tilde{P}_j^2)} \\
        & \leq N^2 \max_{\pi,i} \Tr(\rho_\pi \tilde{P}_i^2) \leq N^2 \max_{\pi,i} \Tr(\rho_\pi \tilde{H}_i)  \leq N^2 \max_{\pi,i} E_{\pi,i} 
    \end{align}
    Here, $E_{\pi,i}=\Tr(\rho_\pi \tilde{H}_i)$ is the energy of the state among all orders and steps. 
    We can also apply the same reasoning to $\hat{\mathcal{H}}_{\bar{p}}$. Therefore, 
    \begin{align}
        \Var_\Psi(\hat{\mathcal{H}}_{\bar{x}} )  \leq  N^2 \max_{\pi,i} E_{\pi,i}, \qquad \Var_\Psi(\hat{\mathcal{H}}_{\bar{p}} )  \leq  N^2 \max_{\pi,i} E_{\pi,i}. 
    \end{align}
    These limits constrain the single-parameter QFI (the diagonal elements of $\mathcal{F}$), which are bounded by the variance of their total generators over the initial state $|\Psi\rangle$: 
    \begin{align}
        \mathcal{F}_{\bar{x}\bar{x}} \le 4\Var_{\Psi}(\hat{\mathcal{H}}_{\bar{x}}) \leq 4N^2 \max_{\pi,i} E_{\pi,i}, \qquad
        \mathcal{F}_{\bar{p}\bar{p}} \le 4\Var_{\Psi}(\hat{\mathcal{H}}_{\bar{p}}) \leq 4N^2 \max_{\pi,i} E_{\pi,i}.
    \end{align}
    To bound the multiparameter error from below, we rely on the spectral properties of the positive semi-definite matrix $\mathcal{F}$. Its maximum eigenvalue is bounded by its trace:
    \begin{align}
        \lambda_{\max}(\mathcal{F}) \leq \mathrm{Tr}(\mathcal{F}) = \mathcal{F}_{\bar{x}\bar{x}} + \mathcal{F}_{\bar{p}\bar{p}} \leq 8N^2  \max_{\pi,i} E_{\pi,i}. 
    \end{align}
    Applying the matrix inequality 
    \begin{align}
        \mathcal{F}^{-1} \geq \lambda_{\max}(\mathcal{F})^{-1} \id,
    \end{align}
    yields the definitive lower bound on the variance:
    \begin{align}
        \Delta^2 A \ge \frac{|\nabla A|^2}{\nu \lambda_{\max}(\mathcal{F})} \geq \frac{\bar{p}^2 + \bar{x}^2}{8\nu N^2\max_{\pi,i} E_{\pi,i}}. 
    \end{align}
    To recover the original bound in~\cite[Appendix~\rom{5}]{Kong_2026}, we simply use the fact that 
    \begin{align}
        E_{\pi,i} = \Tr(\rho_\pi \tilde{H}_\pi) \leq \|\tilde{H}\|_\infty = \mathcal{O}(d),  
    \end{align}
    and thus 
    \begin{align}
        \Delta^2 A \geq \Omega\big(\frac{1}{8\nu N^2 d}\big). 
    \end{align}
\end{proof}

\begin{remark}
    The two lower bounds use different resource constraints. The DCO bound depends on the probe energy entering the first unknown oracle, whereas the bound for coherent superpositions of causal orders depends on the largest intermediate energy. This distinction explains how the finite-sample energy separation is compatible with the fixed-dimensional QFI bound.
\end{remark}

\section{Physics of DGW Framework}
\label{SI-sec:DGW_realization}

Now we discuss the physical foundations of the DGW framework. The scalable architectures of Discrete Continuous-Variable (DCV)~\cite{Kitaev_2008} and Quasi-Continuous Variable (QCV)~\cite{Opatrny_2017} quantum computation were proposed to bridge the gap between continuous-variable (CV) phase-space formalisms and realistic quantum hardware. A closely related finite-dimensional oscillator is provided by the discrete Kravchuk framework~\cite{Hakioglu_2000}. These descriptions arise in collective spins of large atomic ensembles~\cite{Opatrny_2017,Opatrny_2023} and in coupled spatial modes of engineered optical waveguide arrays~\cite{Weimann_2016}. We first describe the local spin--oscillator correspondence, and then discuss the Hamiltonians, free operations, and state preparation.

\subsection{Local spin--oscillator correspondence}

A $d$-dimensional Hilbert space can be identified with a spin-$j$ Hilbert space, where $d=2j+1$. The angular momentum operators satisfy
\begin{align}
    [\hat{J}_a,\hat{J}_b]=\iu\sum_c\epsilon_{abc}\hat{J}_c.
\end{align}
We center the local phase space around the $+x$ pole of the Bloch sphere and consider the macroscopic polarization regime $\langle\hat{J}_x\rangle\approx j$. The effective quadratures are
\begin{align}
    \hat{X}_{\mathrm{eff}}:=\frac{\hat{J}_z}{\sqrt{j}}, \qquad \hat{P}_{\mathrm{eff}}:=-\frac{\hat{J}_y}{\sqrt{j}}.
\end{align}
The commutation relation on these states satisfies
\begin{align}
    [\hat{X}_{\mathrm{eff}},\hat{P}_{\mathrm{eff}}] \ket{\Psi}   =\iu\frac{\hat{J}_x}{j}  \ket{\Psi}  \approx\iu \ket{\Psi} . 
\end{align}
Given the commutation relation, the two transverse-spin components act approximately as oscillator position and momentum on states localized near this pole.

In the $\hat{J}_z$ basis, where $\hat{J}_z\ket{m}_z=m\ket{m}_z$, we can write
\begin{align}
    \hat{J}_x=\frac{\hat{J}_++\hat{J}_-}{2}, \qquad  \hat{J}_y=\frac{\hat{J}_+-\hat{J}_-}{2\iu},
\end{align}
with
\begin{align}
    \hat{J}_+\ket{m}_z
    &=\sqrt{(j-m)(j+m+1)}\ket{m+1}_z,\\
    \hat{J}_-\ket{m}_z
    &=\sqrt{(j+m)(j-m+1)}\ket{m-1}_z.
\end{align}
Suppose these operators act on a state $\ket{\Psi}=\sum_m\psi(m)\ket{m}_z$ whose envelope is slowly varying near $m=0$. Their action gives
\begin{align}
    {}_z\bra{m}\hat{J}_y\ket{\Psi}  \approx\iu j \frac{\d\psi}{\d m}(m), \quad {}_z\bra{m}\hat{J}_z\ket{\Psi} = m \psi(m).
\end{align}
In the scaled coordinate $x=\frac{m}{\sqrt{j}}$, the effective quadratures approach multiplication by $x$ and the derivative $-\iu\partial_x$.

The normalization of the DGW quadratures must also be taken into account. Identifying $\ket{x_m}$ with $\ket{m}_z$ for $m=-j,\ldots,j$, the DGW position operator is exactly $\hat{X}=\sqrt{\frac{2\pi}{d}}\hat{J}_z$, while its momentum operator acts locally on a smooth DGW envelope as
\begin{align}
    \bra{x_m}\hat{P}\ket{\Psi}  \approx-\iu\sqrt{\frac{d}{2\pi}}\frac{\d\psi}{\d m}(m).
\end{align}
Consequently, noting that $d=2j+1$, the correspondence with the spin quadratures is
\begin{align}
    \hat{X}
    =\sqrt{\frac{2\pi j}{d}}\,\hat{X}_{\mathrm{eff}},
    \qquad
    \hat{P}
    \approx\sqrt{\frac{d}{2\pi j}}\,\hat{P}_{\mathrm{eff}}.
\end{align}
These scale factors can be accommodated by calibrating the displacement strengths and squeezing the prepared wavepacket to match its DGW width. In particular, our resource Hamiltonian remains
\begin{align}
    \hat{H}=\hat{X}^2+\hat{P}^2
    \approx\frac{2\pi j}{d}\hat{X}_{\mathrm{eff}}^2
    +\frac{d}{2\pi j}\hat{P}_{\mathrm{eff}}^2
\end{align}
in the local oscillator regime.

This correspondence is necessarily local. The rigorous DGW construction uses a periodic phase space, whereas a collective spin has a spherical phase space and a waveguide array has a finite set of spatial modes. Our bulk condition keeps the ideal DGW wavepacket away from its phase-space boundaries. A spin realization additionally requires low excitation around the chosen pole, so that the curvature corrections remain small. The DGW boundary-error bound and the physical spin-approximation error therefore describe separate requirements.

\subsection{Hamiltonians}

\subsubsection{Quasi-Continuous Variables Architecture}

The QCV architecture uses the collective spin states of large atomic ensembles~\cite{Opatrny_2017,Braverman_2019,Opatrny_2023,Brattley_2024}. For $M$ two-level atoms, the fully symmetric spin sector has $j=M/2$ and dimension $d=M+1$. Equivalently, the Schwinger representation expresses the same angular momentum algebra using two bosonic modes with fixed total occupation $2j$~\cite{Schwinger_2015}.

To mimic the continuous harmonic oscillator, QCV constructs the Hamiltonian using the quadratic form of the transverse-spin operators~\cite{Opatrny_2017}:
\begin{align}
    \hat{H}_{\mathrm{QCV}}
    =\frac{1}{2j}\left[j(j+1)\id-\hat{J}_x^2\right]
    =\frac{1}{2j}\left(\hat{J}_y^2+\hat{J}_z^2\right)
    =\frac12\left(\hat{X}_{\mathrm{eff}}^2+\hat{P}_{\mathrm{eff}}^2\right).
\end{align}
Here we have used conservation of the total spin, $\hat{J}_x^2+\hat{J}_y^2+\hat{J}_z^2=j(j+1)\id$. Its eigenstates are the $\hat{J}_x$ eigenstates $\ket{m_x}_x$. Letting $n=j-m_x$ denote the excitation number around the $+x$ pole, the energies are
\begin{align}
    E_n=\frac{j(j+1)-(j-n)^2}{2j}
    =\frac12+n-\frac{n^2}{2j}.
\end{align}
At low excitation, $E_n\approx n+1/2$, while the quadratic correction produces an anharmonic spectrum at higher excitation. The minimum energy is attained at both $m_x=j$ and $m_x=-j$; the local QCV description considered here uses the sector near $m_x=j$.

\subsubsection{Canonical Kravchuk Oscillator Architecture}

The canonical Kravchuk oscillator architecture~\cite{Hakioglu_2000} instead preserves an exactly equidistant energy spectrum. Its eigenstate wavefunctions are the normalized Kravchuk functions. To obtain the spectrum $E_n=n$, with the ground-state energy set to zero, its Hamiltonian is
\begin{align}
    \hat{H}_{\mathrm{Krav}}=j\id-\hat{J}_x.
\end{align}
In the $\hat{J}_z$ basis, this is a tight-binding hopping Hamiltonian~\cite[Eq.~(13)]{Hakioglu_2000}:
\begin{align}
    {}_z\!\bra{m}\hat{H}_{\mathrm{Krav}}\ket{\Psi}  =j\psi(m) -\frac12\sqrt{j(j+1)-m(m+1)}\psi(m+1) -\frac12\sqrt{j(j+1)-m(m-1)} \psi(m-1).
\end{align}
The relation between the two Hamiltonians is particularly simple:
\begin{align}
    \hat{H}_{\mathrm{QCV}}  =\hat{H}_{\mathrm{Krav}}+\frac{1}{2}\id  -\frac{\hat{H}_{\mathrm{Krav}}^2}{2j}.
\end{align}
On the subspace with Kravchuk excitation number at most $n_{\max}$, the norm of the last term is at most $n_{\max}^2/(2j)$. Thus, up to the zero-point energy, the two Hamiltonians agree at sufficiently low excitation. For a slowly varying envelope, the discrete Laplacian
\begin{align}
    \Delta\psi(m)=\psi(m+1)-2\psi(m)+\psi(m-1),
\end{align}
approaches $j^{-1}\partial_x^2$ in the coordinate $x=\frac{m}{\sqrt{j}}$. The local oscillator limit is therefore
\begin{align}
    \hat{H}_{\mathrm{Krav}}    \approx-\frac12\partial_x^2+\frac12x^2-\frac12.
\end{align}

\subsection{Free operations}

To manipulate this finite phase space, we use the discrete counterparts of CV Gaussian operations: phase-space displacements, phase-space rotations, and discrete Bogoliubov transformations. We call them free operations in the sense of the available control toolbox; the initial probe energy is still measured with respect to $\hat{H}=\hat{X}^2+\hat{P}^2$.

The first elementary operations are linear spin rotations generated by $\hat{J}_y$ and $\hat{J}_z$. In the low-excitation sector, these realize translations of the effective quadratures:
\begin{align}
    \hat{D}(\delta x,\delta p)  :=\exp\!\left[-\iu\delta x\,\hat{P}_{\mathrm{eff}} +\iu\delta p\,\hat{X}_{\mathrm{eff}}\right]   =\exp\!\left[\frac{\iu}{\sqrt{j}}   \left(\delta x\,\hat{J}_y+\delta p\,\hat{J}_z\right)\right].
\end{align}
The second elementary operation is the phase-space rotation generated by the longitudinal spin operator,
\begin{align}
    \hat{R}_x(\theta):=e^{-\iu\theta\hat{J}_x}.
\end{align}
Its action on the effective quadratures is exact:
\begin{align}
    \hat{R}_x(\theta)\hat{X}_{\mathrm{eff}}\hat{R}_x(\theta)^\dagger  &=\hat{X}_{\mathrm{eff}}\cos\theta+\hat{P}_{\mathrm{eff}}\sin\theta,\\
    \hat{R}_x(\theta)\hat{P}_{\mathrm{eff}}\hat{R}_x(\theta)^\dagger  &=-\hat{X}_{\mathrm{eff}}\sin\theta+\hat{P}_{\mathrm{eff}}\cos\theta.
\end{align}
In particular, the quarter rotation $\hat{F}_j=e^{-\iu\pi\hat{J}_x/2}$ exchanges the quadratures:
\begin{align}
    \hat{F}_j\hat{X}_{\mathrm{eff}}\hat{F}_j^\dagger  =\hat{P}_{\mathrm{eff}},   \qquad   \hat{F}_j\hat{P}_{\mathrm{eff}}\hat{F}_j^\dagger   =-\hat{X}_{\mathrm{eff}}.
\end{align}
The third elementary operations are quadratic phase gates generated by one-axis twisting of the transverse-spin operators:
\begin{align}
    \hat{U}_X(\lambda) :=e^{-\iu\lambda\frac{\hat{J}_z^2}{2j}}  =e^{-\iu\frac{\lambda}{2}\hat{X}_{\mathrm{eff}}^2},  \qquad   \hat{U}_P(\lambda) :=e^{-\iu\lambda\frac{\hat{J}_y^2}{2j}}   =e^{-\iu\frac{\lambda}{2}\hat{P}_{\mathrm{eff}}^2}.
\end{align}
These act as quadratic phase gates and shears in the local harmonic regime. Composing them with rotations gives the usual single-mode Gaussian transformations. Experimentally, linear spin rotations can be driven by microwave fields or Raman transitions, while quadratic generators arise from twisting interactions~\cite{Opatrny_2017}.

In the Kravchuk framework, the discrete fractional Fourier transformation is generated by the linear Hamiltonian~\cite{Hakioglu_2000}:
\begin{align}
    U_{\mathrm{DFrFT}}(\alpha)  =e^{-\iu\alpha\hat{H}_{\mathrm{Krav}}}  = e^{-\iu\alpha j}\hat{R}_x(-\alpha).
\end{align}
It is therefore a spin rotation up to a global phase. At $\alpha=\pi/2$, it gives the Fourier--Kravchuk transformation, whose oscillator limit is a Fourier transformation; it differs from the cyclic discrete Fourier transformation used in the DGW construction. Weimann \textit{et al.}~\cite{Weimann_2016} implemented the discrete fractional Fourier transformation through optical propagation in engineered arrays of coupled waveguides, where the propagation distance controls the transformation order.

\subsection{Ground state and preparation}

The Kravchuk vacuum and the QCV ground state in the chosen $+x$ sector are both the spin coherent state $\ket{j}_x$. In the $\hat{J}_z$ basis, it has the exact binomial expansion
\begin{align}
    \ket{j}_x   =\frac{1}{2^j}\sum_{m=-j}^j   \sqrt{\binom{2j}{j+m}}\,\ket{m}_z.
\end{align}
For $|m|=O(\sqrt{j})$, Stirling's formula gives
\begin{align}
    \psi_0(m):={}_z\!\bra{m}j\rangle_x    \approx \frac{1}{(\pi j)^{\frac{1}{4}}} e^{-\frac{m^2}{2j}} . 
\end{align}
Thus, its central wavefunction approaches a Gaussian oscillator vacuum. The same result follows from the highest-weight condition
\begin{align}
    (\hat{J}_z-\iu\hat{J}_y)\ket{j}_x=0,  \qquad  (\hat{X}_{\mathrm{eff}}+\iu\hat{P}_{\mathrm{eff}})\ket{j}_x=0.
\end{align}
In the coordinate $x=\frac{m}{\sqrt{j}}$, this becomes $(x+\partial_x)\phi_0(x)\approx0$, with solution $\phi_0(x)\propto e^{-\frac{1}{x}x^2}$. Moreover,
\begin{align}
    \hat{F}_j\ket{j}_x=e^{-\iu \frac{\pi}{2} j}\ket{j}_x,
\end{align}
so the spin vacuum is invariant under the quarter rotation up to a phase.

In atomic ensembles, optical pumping prepares a polarized spin state, and a subsequent global spin rotation aligns the polarization with the $+x$ axis~\cite{Opatrny_2017,Opatrny_2023}. A fixed squeezing matches the DGW width $\sigma=\sqrt{d}$, and phase-space displacements then prepare approximate DGW states centered at $(n_0,k_0)$, provided the states remain in the local oscillator regime.

Alternatively, discrete Gaussian wavefunctions can be prepared algorithmically on digital multi-qubit arrays~\cite{Kitaev_2008,Rattew_2021,Xie_2025}. Such preparation is performed to a prescribed finite accuracy, with the computational cost depending on the number of qubits and the required precision.

\section{Physics of Attenuation}
\label{SI-sec:attenuation}

We assume tunable attenuation factors in oracle accesses to phase-space translations, introduced to remove the phase-wrapping ambiguity of the Ramsey experiment. The attenuated low-gain stages identify the correct phase branch, after which the final unattenuated stage uses the full local sensitivity. The estimation procedure is described in Section~\ref{SI-sec:metrology_with_qswitch}; here we discuss how the attenuation can be realized physically.

The required control is a known change in the interaction time or coupling strength. Furthermore, this attenuation can be passed on from one layer to the next layer, meaning we only need to attenuate the initial continuous phase space signal rather than the final discrete phase space signal.  For example, if the unknown displacement amplitudes multiply the known generators through
\begin{align}
    \hat{H}_{\mathrm{int}}(t) =f(t)\left(\gamma_x\hat{P}-\gamma_p\hat{X}\right), 
\end{align}
where $\int_0^T f(t)=\d t=\lambda$, then the interaction implements $G(\lambda\gamma)$. The calibrated envelope $f(t)$ controls the attenuation without requiring knowledge of the unknown amplitudes. We first describe a driven cavity and then discuss opto-mechanical, electro-optical, and superconducting implementations of this control.

\subsection{General framework for phase-space displacement}

Here we consider a single cavity mode. Let $a$ and $a^\dagger$ denote its annihilation and creation operators. In units with $\hbar=1$, the uncoupled cavity mode is described by
\begin{align}
    H_{\mathrm{cavity}}=\omega_c a^\dagger a.
\end{align}
The cavity is coupled to an input port through which a classical coherent microwave or optical pulse with complex envelope $\beta_{\mathrm{in}}(t)$ is injected. In the standard input-output normalization, this produces the drive Hamiltonian
\begin{align}
    H_{\mathrm{input}}(t)  =\iu\sqrt{\kappa_{\mathrm{in}}}  \left(\beta_{\mathrm{in}}(t)e^{-\iu\omega_d t}a^\dagger  -\beta_{\mathrm{in}}^*(t)e^{\iu\omega_d t}a\right),
    \label{SI-eq:cavity_injection_lab}
\end{align}
where $\kappa_{\mathrm{in}}$ is the coupling rate to the input port and $\omega_d$ is the drive frequency. In a rotating frame at resonance, $\omega_d=\omega_c$, the displacement part becomes
\begin{align}
    H_{\mathrm{disp}}(t)=\iu\left(\epsilon(t)a^\dagger-\epsilon^*(t)a\right), 
\end{align}
where $\epsilon(t)=\sqrt{\kappa_{\mathrm{in}}}\beta_{\mathrm{in}}(t)$. If the pulse is short compared with the relevant decoherence times and unwanted interactions are negligible, the evolution is, up to a global phase,
\begin{align}
    U_{\mathrm{disp}}(T)=e^{\alpha a^\dagger-\alpha^*a}, 
\end{align}
where $ \alpha=\int_0^T\epsilon(t)\d t$. Thus, rescaling the drive amplitude, restricting the interaction time or reducing the coupling strength by $\lambda$ gives the attenuated displacement $\lambda\alpha$. The following examples control these parameters through modulation, optical transmissivity, or tunable coupler.

\subsection{Opto-mechanical system}

The mechanism for strength control in an opto-mechanical system is that motion of a cavity mirror modulates the cavity frequency and hence its effective coupling to an external field. We begin with the canonical radiation-pressure Hamiltonian~\cite{Law_1995,Aspelmeyer_2014}, where the cavity mode $\hat{a}$ couples to a mechanical mode $\hat{b}$ via radiation pressure:
\begin{align}
    \hat{H}_{\mathrm{bare}} =\omega_c\hat{a}^\dagger\hat{a} +\omega_m\hat{b}^\dagger\hat{b}  - g_0\hat{a}^\dagger\hat{a}(\hat{b}^\dagger+\hat{b}).
\end{align}
Suppose the mechanical oscillator is driven coherently at frequency $\Omega_m$. Neglecting its fluctuations relative to the classical motion, we replace $\hat{b}$ by $\beta e^{-\iu\Omega_m t}$. Choosing the drive phase so that $\beta$ is real gives~\cite{Law_1995,Heinrich_2010}
\begin{align}
    \hat{H}_S(t)     =\left[\omega_c+\delta\cos(\Omega_m t)\right]\hat{a}^\dagger\hat{a},
\end{align}
where $\delta=-2g_0\beta$. Now couple the cavity to an external mode $\hat{c}$ at frequency $\omega_c$ 
\begin{align}
    \hat{H}_B=\omega_c\hat{c}^\dagger\hat{c},
\end{align}
through
\begin{align}
    \hat{H}_{S-B}=g(\hat{a}^\dagger\hat{c}+\hat{a}\hat{c}^\dagger).
\end{align}
In the interaction picture with respect to $\hat{H}_S(t)+\hat{H}_B$, this becomes
\begin{align}
    \widetilde{H}_{\mathrm{int}}(t)
    =g\left[
    \hat{a}^\dagger\hat{c}\,
    e^{\iu(\delta/\Omega_m)\sin(\Omega_m t)}
    +\mathrm{h.c.}\right].
\end{align}
Using the Jacobi--Anger expansion, as in frequency-modulated interactions~\cite{Liao_2015,Liao_2016,Wang_2019},
\begin{align}
    e^{\iu z\sin\theta}  =\sum_{n=-\infty}^{\infty}J_n(z)e^{\iu n\theta},
\end{align}
we obtain
\begin{align}
    \widetilde{H}_{\mathrm{int}}(t)   = g\sum_{n=-\infty}^{\infty}    J_n\!\left(\frac{\delta}{\Omega_m}\right)  \left(\hat{a}^\dagger\hat{c}\,e^{\iu n\Omega_m t}     +\hat{a}\hat{c}^\dagger e^{-\iu n\Omega_m t}\right).
\end{align}
When $\Omega_m$ is large compared with the relevant coupling rates and the nonzero sidebands are off resonance, averaging over the modulation gives the leading effective Hamiltonian
\begin{align}
    \hat{H}_{\mathrm{eff}} \approx g_{\mathrm{eff}}(\hat{a}^\dagger\hat{c}  +\hat{a}\hat{c}^\dagger),
\end{align}
where 
\begin{align}
    g_{\mathrm{eff}}=gJ_0\!\left(\frac{\delta}{\Omega_m}\right). 
\end{align}
Thus, the effective coupling can be attenuated by tuning the modulation amplitude or frequency. Its leading term vanishes at a zero of $J_0$. 

To realize a displacement, the external mode supplies a coherent field whose depletion and fluctuations can be neglected. Replacing $\hat{c}$ by its coherent amplitude gives a Hamiltonian linear in $\hat{a}$ and $\hat{a}^\dagger$, as in the general framework above. In this regime, changing $g_{\mathrm{eff}}$ by a calibrated factor changes the displacement amplitude by the same factor.

\subsection{Electro-optical system}

In an electro-optical system, a classical voltage drives a nonlinear optical crystal and changes its refractive index~\cite{Tsang_2010,Tsang_2011}. When the crystal is embedded in an interferometer, the resulting phase shift controls its transmissivity, giving a tunable beam splitter~\cite{Ma_2011,Ma_2012,Chan_2013}.

An electro-optical modulator (EOM) relies on the Pockels effect. For a crystal of thickness $d_{\mathrm{cr}}$, the refractive index changes approximately as
\begin{align}
    n(V)=n_0-\frac{n_0^3r}{2d_{\mathrm{cr}}}V,
\end{align}
where $r$ is the Pockels coefficient. After fixing the phase bias and voltage convention, the relative phase shift can be written as
\begin{align}
    \phi(V)=\pi\frac{V}{V_\pi},
\end{align}
where $V_\pi$ is the calibrated half-wave voltage.

To obtain variable transmissivity, the EOM is embedded in a Mach--Zehnder interferometer containing two balanced beam splitters. With a suitable port-phase convention, its input-output transformation is
\begin{align}
    \begin{pmatrix}
        \hat{a}_{\mathrm{out}}\\
        \hat{b}_{\mathrm{out}}
    \end{pmatrix}
    =
    \begin{pmatrix}
        \cos(\phi/2)&\iu\sin(\phi/2)\\
        \iu\sin(\phi/2)&\cos(\phi/2)
    \end{pmatrix}
    \begin{pmatrix}
        \hat{a}_{\mathrm{in}}\\
        \hat{b}_{\mathrm{in}}
    \end{pmatrix}.
\end{align}
The intensity transmissivity from the first input port to the first output port is therefore
\begin{align}
    \eta(V)=\cos^2\!\left(\frac{\phi(V)}{2}\right)     =\cos^2\!\left(\frac{\pi V}{2V_\pi}\right).
\end{align}
For a coherent input in the first port and vacuum in the second, the first output has coherent amplitude $\cos(\phi/2)\beta_{\mathrm{in}}$. Operating with $0\leq\phi\leq\pi$ gives the amplitude attenuation factor $\lambda=\sqrt{\eta}$.

For our displacement oracle, the interferometer attenuates the incoming coherent driving field before it interacts with the probe. The drive envelope is then multiplied by $\lambda$, and so is the induced displacement. 

\subsection{Superconducting system}

In a superconducting system, an external magnetic flux controls the effective Josephson energy of a SQUID~\cite{Makhlin_1999,Blais_2003,Averin_2003}, which can serve as a tunable coupler between two LC oscillators~\cite{Peropadre_2013}. Tunable and switchable coupling between superconducting resonators has also been demonstrated experimentally~\cite{Baust_2015}.

Consider a symmetric SQUID with two Josephson junctions of energy $E_J$. Let $\varphi_1$ and $\varphi_2$ be the dimensionless phase differences across the junctions and define
\begin{align}
    \varphi_+:=\frac{\varphi_1+\varphi_2}{2},
    \qquad
    \varphi_-:=\frac{\varphi_1-\varphi_2}{2},
    \qquad
    \Phi_0:=\frac{h}{2e}.
\end{align}
For negligible loop inductance, flux quantization fixes $\varphi_-=\pi\Phi_{\mathrm{ext}}/\Phi_0$ modulo $\pi$. The Josephson potential is
\begin{align}
    U_{\mathrm{SQUID}} =-E_J\cos\varphi_1-E_J\cos\varphi_2 =-2E_J\cos\!\left(\frac{\pi\Phi_{\mathrm{ext}}}{\Phi_0}\right)
    \cos\varphi_+.
\end{align}
For identical junction capacitances $C$, the capacitive energy is
\begin{align}
    K_{\mathrm{SQUID}}
    =\frac{C}{2}\left(\frac{\Phi_0}{2\pi}\right)^2
    (\dot{\varphi}_1^2+\dot{\varphi}_2^2)
    =C\left(\frac{\Phi_0}{2\pi}\right)^2
    (\dot{\varphi}_+^2+\dot{\varphi}_-^2).
\end{align}
At fixed flux, expanding around the stable point $\varphi_+=0$ in a region with positive effective Josephson energy gives
\begin{align}
    U_{\mathrm{SQUID}}
    \approx\mathrm{const}
    +E_J\cos\!\left(\frac{\pi\Phi_{\mathrm{ext}}}{\Phi_0}\right)
    \hat{\varphi}_+^2.
\end{align}
When the SQUID connects two resonator modes, its phase contains contributions from both modes. For a representative coupling geometry, write
\begin{align}
    \hat{\varphi}_+
    =\xi_a(\hat{a}+\hat{a}^\dagger)
    -\xi_b(\hat{b}+\hat{b}^\dagger),
\end{align}
where $\xi_a$ and $\xi_b$ are their zero-point phase amplitudes at the coupler. Expanding $\hat{\varphi}_+^2$ produces local frequency shifts and a cross term. Absorbing the local terms into the resonator frequencies and applying the rotating wave approximation near resonance gives
\begin{align}
    \hat{H}_{\mathrm{int}}(\Phi_{\mathrm{ext}}) \approx g(\Phi_{\mathrm{ext}}) (\hat{a}^\dagger\hat{b}+\hat{a}\hat{b}^\dagger),
\end{align}
where 
\begin{align}
    g(\Phi_{\mathrm{ext}})=g_0\cos\!\left(\frac{\pi\Phi_{\mathrm{ext}}}{\Phi_0}\right),
\end{align}
where $g_0=-2E_J\xi_a\xi_b$. Here the mode amplitudes are treated as fixed in the weak-coupling approximation; in a device, the flux dependence of the coupling is calibrated experimentally.

When one resonator supplies a coherent driving field, the interaction again reduces to a Hamiltonian linear in the probe-mode operators. Tuning the external flux then changes the effective drive strength, while tuning the pulse duration changes its interaction time. Either control realizes a calibrated attenuation of the displacement within the regime of the effective Hamiltonian.

\end{document}